\definecolor{TUMBlue}{HTML}{0065BD}
\newcommand{\publ}[1]{}
\newtheorem{theorem}{Theorem}%
\newtheorem{proposition}{Proposition}
\newtheorem{lemma}{Lemma}
\newtheorem{corollary}{Corollary}
\newtheorem{claim}{Claim}
\newenvironment{claimproof}
{
	
	\proof
}
{
	\endproof
	
}
\theoremstyle{definition}
\newtheorem{definition}{Definition}
\newcommand{\pr}{\mathbb P} 
\newcommand{\EV}{\mathbb E} 
\newcommand{\orderof}[1]{O\!\left(#1\right)} 
\newcommand{\arr}[1][\sigma]{\prec^{#1}} 
\newcommand{\avai}{\mathcal A}
\newcommand{\matfn}{\rho} 
\newcommand{\genmat}{\mu}
\newcommand{\gdy}{\mathit{GDY}}
\newcommand{\ggdy}{\mathit{DTA}}
\newcommand{\rmat}{\mathit{GMA}}
\newcommand{\wgdy}{\mathit{WGDY}}
\newcommand{\maxe}{\mathit{MAXE}}
\newcommand{\iew}{\mathit{IWA}}
\newcommand{\alg}{\mathit{ALG}}
\newcommand{\ialg}{\mathit{I}\textnormal{-}\mathit{ALG}}
\newcommand{\Umax}{U_{\mathit{max}}}
\newcommand{\Umin}{U_{\mathit{min}}}
\newcommand{\SW}[1][\pi]{\mathcal{SW}(#1)}
\begin{document}

\title{Online Coalition Formation under Random Arrival or Coalition Dissolution}

\author[1]{Martin Bullinger}
\author[2]{Ren{\'e} Romen}

\affil[1]{\small Department of Computer Science, University of Oxford}
\affil[2]{\small School of Computation, Information and Technology, Technical University of Munich
\protect\\ \vspace*{0.05cm}
martin.bullinger@cs.ox.ac.uk, rene.romen@tum.de}

\date{}

\maketitle

\begin{abstract}
Coalition formation explores how to partition a set of $n$ agents into disjoint coalitions according to their preferences.
We consider a cardinal utility model with an additively separable aggregation of preferences and study the online variant of coalition formation, where the agents arrive in sequence.
The goal is to achieve competitive social welfare. 
In the basic model, agents arrive in an arbitrary order and have to be assigned to coalitions immediately and irrevocably. 
There, the natural greedy algorithm is known to achieve an optimal competitive ratio, which heavily relies on the range of utilities.

We complement this result by considering two related models.
First, we study a model where agents arrive in a random order.
We find that the competitive ratio of the greedy algorithm is $\Theta\left(\frac{1}{n^2}\right)$. 
In contrast, an alternative algorithm, which is based on alternating between waiting and greedy phases, can achieve a competitive ratio of $\Theta\left(\frac{1}{n}\right)$.
Second, we relax the irrevocability of decisions by allowing the dissolution of coalitions into singleton coalitions.
We achieve an asymptotically optimal competitive ratio of $\Theta\left(\frac 1n\right)$ by drawing a close connection to a general model of online matching. 
Hence, in both models, we obtain a competitive ratio that removes the unavoidable utility dependencies in the basic model and essentially matches the best possible approximation ratio by polynomial-time algorithms.
\end{abstract}

\section{Introduction}

Coalition formation is a vibrant topic in multi-agent systems.
The goal is to partition a set of $n$ agents into disjoint coalitions.
We consider the framework of hedonic games, where the agents have preferences for the coalitions they are part of by disregarding externalities \citep{DrGr80a}.
More specifically, we assume that each agent has cardinal utilities for each other agent and that utilities for coalitions are aggregated in an additively separable way by taking sums \citep{BoJa02a}.

Most hedonic games literature considers an offline setting, where a fully specified instance is given.
Thus, a good partition has to be computed under full knowledge of the preferences.
By contrast, in many real-life situations, such as the formation of teams in a company, the agents are not all present in the beginning but rather join in over time during an ongoing coalition formation process.
With this motivation in mind, \citet{FMM+21a} proposed an online version of hedonic games, where agents arrive individually.
Upon arrival, an agent reveals her preferences for coalitions containing the agents present thus far and has to be added immediately and irrevocably to an existing coalition (or to a new singleton coalition).
The goal is to find algorithms with high social welfare in a worst-case analysis against an adversary that can select an arbitrary instance and an arbitrary arrival order of the agents.
The performance of an algorithm is measured by its competitive ratio, that is, the worst-case ratio of the social welfare of the computed partition compared to the social welfare of the optimal offline partition.
The main result by \citet{FMM+21a} is that the greedy algorithm, which adds every agent to the best possible coalition upon arrival, has the optimal competitive ratio of $\Theta\left(\frac 1n \frac{\Umin}{\Umax}\right)$, where $\Umax$ and $\Umin$ are the maximum and minimum absolute value of a non-zero single-agent utility in the adversarial instance, respectively.\footnote{\citet{FMM+21a} simplify their model by scaling utilities such that $\Umin$ is always~$1$.}
In other words, even for a fixed number of agents, the competitive ratio is infinite if the maximum or minimum utility is not bounded. 

Arguably, unbounded utilities 
give a lot of power to an adversary.
The following example by \citet{FMM+21a} exploits this.
Consider the situation where the first two arriving agents have a mutual utility of $1$.
Suppose an algorithm adds the second agent to the coalition of the first agent. 
In that case, an adversary can set the utility of $1$ as the minimum utility and add agents with large positive and negative utilities that bring no gain to this coalition.
Hence, the welfare of the large positive utilities is lost, and the algorithm performs poorly.
Otherwise, if an algorithm puts the second agent into a singleton coalition, the adversary can set the utility of other arriving agents to $0$, and the loss of the welfare of the single positive edge leads to an unbounded competitive ratio.

Consequently, we want to see whether we can perform better in two related models.
First, we study the \emph{random arrival model}, where the adversary's power is diminished. 
We still allow the adversary to fix (harmful) utilities. 
However, agents arrive in a uniformly random order. 
We show that the greedy algorithm achieves a competitive ratio of $\Theta\left(\frac{1}{n^2}\right)$.
This is interesting because it shows the capability of removing the utility dependencies in the base model.
Even more, we present an alternative algorithm with a competitive ratio of $\Theta\left(\frac{1}{n}\right)$, which is based on alternating between waiting and greedy phases.
We discuss challenges for obtaining algorithms with an optimal competitive ratio by showing that an optimal stopping algorithm can solve our worst-case instances with a constant-factor welfare loss.

In our second model, we maintain the full power of the adversary to fix both the utilities and the arrival order.
Instead, we enhance the power of algorithms by letting them slightly rearrange coalitions.
During a coalition formation process, it might become apparent that some coalitions do not achieve the desired performance, while rearranging coalitions might lead to large gains.
The basic online coalition formation model by \citet{FMM+21a} does not allow for any changes to coalitions apart from the addition of new agents.
This makes intuitive sense, for instance in a setting of forming teams in a company. 
There, rearranging existing teams 
can lead to performance losses, which may be negligible when merely adding agents to teams.\footnote{Note that our preference model does not capture such performance losses. The social welfare of a partition is rather interpreted as the level of compatibility of an established partition.} 
Still, disallowing for any rearrangement feels unnecessarily prohibitive.
By contrast, we would like to allow small changes to partitions that seem acceptable, even if there exists an aversion to rearrangements.
We, therefore, study algorithms that have the power to completely \emph{dissolve} an existing coalition into singleton coalitions.
As a penalty, we lose all the welfare obtained by this coalition thus far, which can be interpreted as the cost of rearrangement.
At the same time, it opens the possibility of creating new coalitions.
Note that while this allows for reversing earlier bad decisions, it still yields limited power to rearrange coalitions.
In particular, as agents can only be added to a coalition once, the algorithm cannot revoke the decision of \emph{not} adding an agent to a coalition. 
Inspired by the free disposal model in the online matching literature \citep{FKM+09a}, we coin this the \emph{free dissolution model}.

We remark that the free dissolution model also has appeal in coalition formation settings with non-human agents.
Consider the task of classifying a large set of objects arriving over time, say all Wikipedia pages.
Whenever a new page is created, we gain information about its similarity or dissimilarity to existing pages and have to assign it an existing or new category.
The goal is to maximize the total similarity of all categories.
Once we have categorized a large number of objects, restructuring the whole categorization might be infeasible.
However, dividing a single category into subcategories seems more plausible.
New objects can then be assigned to the subcategories.
Importantly, free dissolution requires each object to be assigned its individual subcategory.
Note that imposing this restriction only strengthens our algorithmic possibilities because we already obtain an improved competitive ratio, even with fairly limited power to rearrange.\footnote{
Moreover, if the division into subcategories could be formed arbitrarily, we would lose the essence of an online model: for instance, under the knowledge of the number of objects, we could simply form one large cluster until the last object arrives, and then dissolve to an optimal offline partition.}

For our analysis of the free dissolution model, we draw a close connection to a general model of online matching, where edges are weighted and all agents arrive in a random order.
By transferring results from the edge arrival model (instead of vertex arrival) studied by \citet{GKM+19a} and \citet{Vara11a}, we obtain an algorithm for the matching domain with the optimal competitive ratio of $\frac 1 {3 + 2\sqrt{2}}$.
Now, every matching algorithm is also a coalition formation algorithm: 
it returns a valid partition, which has the property that all coalitions are of size at most~$2$.
Hence,
we can utilize the obtained matching algorithm in the coalition formation domain. 
This leads to a competitive ratio of $\frac 1 {(3 + 2\sqrt{2})n}$.
Interestingly, we then show that this competitive ratio is asymptotically optimal.

Hence, compared to the basic model, we gain compelling new insights.
First, we show that the greedy algorithm does not achieve an optimal competitive ratio anymore.
By contrast, we can be better by a linear factor with more sophisticated algorithms.
Moreover, in both our models, we find algorithms with a competitive ratio of $\Theta\left(\frac 1n\right)$.
This eliminates the utility dependencies of the optimal competitive ratio in the base model.
Notably, all algorithms that we provide are deterministic.
In addition, even though we operate in an online model that is not necessarily bound to limited computational power when performing a decision in each step, all our algorithmic decisions can be performed in polynomial time with respect to the number of agents.
It is known that, for any $\epsilon > 0$, social welfare cannot be approximated in polynomial time by a factor of $\frac 1 {n^{1-\epsilon}}$ \citep{FKV22a}.
Hence, we identify two online settings, where we can efficiently achieve the same approximation guarantee that is essentially possible by polynomial-time offline algorithms.

\section{Related Work}

In this section, we review related work.
We give a detailed account of the literature on coalition formation in the framework of hedonic games as well as research concerning online algorithms with a focus on online matching.

\subsection{Coalition Formation in the Framework of Hedonic Games}

Hedonic games originated from economic theory \citep{DrGr80a} more than four decades ago.
However, their broad consideration only started two decades later \citep{BKS01a,BoJa02a,CeRo01a}.
An overview of hedonic games is provided in the book chapters by \citet{AzSa15a} and \citet{BER24a}.
Unlike in a matching setting, where an agent can only be in a linear number of possible ``coalitions'' of size~$2$, the number of potential coalitions of an agent in a general coalition formation scenario is exponential with respect to the number of agents.
This has led to an abundance of representation formalisms that have to navigate the trade-off of expressivity and succinctness.
Fully expressive models are, e.g., obtained by listing all relevant coalitions \citep{Ball04a} or by representing preferences with Boolean formulas \citep{ElWo09a}.
Such a representation is, however, not possible in polynomial space.
Hence, further effort has been undertaken to identify succinct, but still meaningful preference models.
A prominent idea is to define a utility for a coalition based on a complete and weighted digraph.
The idea is to interpret the weight of outgoing edges as values for single agents, which can be aggregated to values of coalitions.
Taking the sum of weights leads to additively separable hedonic games (ASHGs) \citep{BoJa02a}, while considering the average weight leads to fractional hedonic games (FHGs) \citep{ABB+17a} and modified fractional hedonic games \citep{Olse12a}.

Our paper considers ASHGs and we now focus on related work concerning these games.
A large part of their research focuses on computational issues related to various stability measures \citep{ABS11c,BBT23a,DBHS06a,FMM+21a,GaSa19a,Olse09a,SUDi10a}, but some studies also consider economic efficiency in the sense of Pareto optimality \citep{ABS11c,Bull19a,EFF20a}, popularity \citep{ABS11c,BrBu20a}, or strategyproofness~\citep{WrVo15a,FKMZ21a}.
Maximizing social welfare is a stronger requirement than Pareto optimality and the computational complexity of this problem has been studied in depth \citep{ABS11c,FKV22a,BCS25a}.
In general, maximizing social welfare is \NP-hard, even if valuations are symmetric and  restricted to $\{-1,1\}$ \citep{ABS11c}.
Furthermore, for any $\epsilon > 0$, approximating social welfare by a factor of $\frac 1{n^{1-\epsilon}}$ cannot be done in polynomial time unless \P\,=\,\NP{} \citep{FKV22a}.
This result holds even if valuations are restricted to be symmetric and in $\{-1,0,1\}$ \citep{BCS25a}.
Hence, in contrast to the investigation of online ASHGs by \citet{FMM+21a}, this result does not rely on unbounded utilities.
Nonetheless, \citet{Bull19a} presents a polynomial-time algorithm to achieve the weaker requirement of Pareto optimality for ASHGs that satisfy a weak condition on $0$-utilities.
However, the partitions computed by this algorithm can have negative social welfare and, therefore, cannot achieve any approximation guarantee.

Coalition formation is also related to other computational problems.
Most prominently is the matching problem. 
This can be interpreted as computing a partition where the size of coalitions is bounded by~$2$. 
In the coalition formation literature, bounded coalition sizes have been well studied \citep{Ball04a,PeEl15a,BrBu20a}, and hardness results can be observed once the size of coalitions exceeds~$2$.
By contrast, maximum weight matchings or stable matchings can be computed efficiently \citep{Edmo65a,GaSh62a,Irvi85a}.
Another related problem with a more coalitional flavor is correlation clustering, a significant problem in machine learning~\citep{BBC04a,Swam04a,DEFI06a}.
The input is usually a complete graph with edges labeled as ``$+$'' or ``$-$''.
These denote the similarity or dissimilarity of vertex pairs, respectively.
Additionally, edges can have a weight.
Hence, the input is very similar to that of ASHGs.
The goal is to find a partition that maximizes the total weight of ``$+$'' edges (agreements) inside clusters plus ``$-$'' edges across different clusters (disagreements). 
This is similar to social welfare maximization, but the utilities between coalitions also matter for the objective value.
Interestingly, this leads to better approximation guarantees than in ASHGs:
For instance, when minimizing disagreements, there exist polynomial-time algorithms achieving a constant-factor approximation for unweighted games (i.e., weights in the range $\{-1,1\}$) \citep{BBC04a} and an $\orderof{\log n}$-approximation 
for weighted games \citep{DEFI06a}.

All of the literature discussed thus far considers ASHGs and other models in an offline setting.
Online hedonic games came under scrutiny more recently and have been researched far less.
\citet{FMM+21a} introduce the problem and focus on deterministic algorithms with guarantees to social welfare.
They consider a class of games equivalent to ASHGs, with the difference that utilities are scaled by a factor of $2$.
In addition, they consider games restricted by a maximum coalition size or number of coalitions, as well as a preference encoding related to FHGs. 
For the latter, the greedy algorithm is once again optimal, achieving a similar competitive ratio, dependent on utilities.\footnote{Since the utilities are based on taking average utilities, the competitive ratio does, however, not depend on the number of agents.}
In follow-up work to the present paper, \citet{BRS25a} study online FHGs under random arrival and free dissolution.
\publ{replace link to fhg paper to link of permanent arxiv version?}
They establish further connections with matching algorithms and show that these algorithms lead to optimal or close to optimal competitive ratios.
In addition, \citet{CoAg25a} study ASHGs under the objective of maximizing egalitarian welfare, i.e., the minimum utility of an agent, which turns out to be a demanding objective.
Moreover, \citet{BuRo24a} consider online ASHGs with the goal of obtaining stable partitions.
They also consider Pareto optimality, which we have already discussed above as a natural weakening of maximizing social welfare:
If no pair of agents has a utility of~$0$, then Pareto optimality can be achieved online.
Otherwise, this is impossible, even if utilities are restricted to be symmetric and in $\{-1,0,1\}$. 

Somewhat similar to online models of coalition formation, there is a recent stream of research on dynamic models of coalition formation \citep{BFFMM18a,BBW21b,CMM19a}.
Instead of agents arriving over time, there exists an initial partition, which is altered over discrete time steps based on deviations of agents.
Since deviations are usually related to stability questions, the central question of this line of research is whether dynamics converge to a stable partition.
Some of this literature explicitly deals with ASHGs or close variants \citep{BMM22a,BBK25a,BBT23a,BuSu24a}.
 
\subsection{Online Algorithms}
We now review the literature on online algorithms.
Most related to our work is the online matching problem, which was first studied in the seminal paper by \citet{KVV90a}.
In this work, an unweighted, bipartite graph is given, and the agents of one side appear online while the agents of the other side are present from the beginning.
The goal is to find a matching of maximum cardinality.
\citet{KVV90a} introduce the famous ranking algorithm, which achieves a competitive ratio of $1-\frac{1}{e}$.
An overview of this research is presented in the book chapter by \citet{HuTr22a}.

Our model of online coalition formation can be viewed as a generalization of the setting by \citet{KVV90a}, with the following modifications: (i) there are edge weights, (ii) all vertices arrive online, (iii) the underlying graph is not necessarily bipartite, and (iv) coalitions can be arbitrary subsets of agents.
While condition (iv) is specific to coalition formation, conditions (i)-(iii) have been studied in the literature. 
Our focus is on the conjunction of all four conditions, but we will also identify connections with a general model of online matching where conditions (i)-(iii) are satisfied.

\citet{FKM+09a} consider condition (i), i.e., a bipartite setting with edge weights, where one side of the vertices arrives online.
They show that the natural greedy algorithm is $\frac{1}{2}$-competitive, and provide an algorithm matching the competitive ratio of $1-\frac{1}{e}$ by \citet{KVV90a}.
An important condition in this setting is \emph{free disposal}, which essentially requires that a previous matching can be dissolved upon the arrival of a better option.
\citet{WaWo15a} consider condition (ii), i.e., online bipartite matching where both sides arrive online.
For the fractional relaxation of the problem, they can beat the greedy algorithm by a primal-dual algorithm, achieving a competitive ratio of $0.526 < 1 - \frac{1}{e}$.
\citet{GKM+19a} consider the conjunction of conditions (ii) and (iii), i.e.,  online (non-bipartite) matching.
They show that, in this model, a competitive ratio better than $\frac 12$ is impossible to achieve by a deterministic algorithm.
Note that in contrast to \citet{WaWo15a}, they study the original integral problem.
However, they provide a randomized algorithm with a competitive ratio of $\frac 12 + \epsilon$ for a small constant $\epsilon > 0$.
Similarly, \citet{HKT+18a} also consider the conjunction of conditions (ii) and (iii), 
but they modify the setting so that vertices do not have to be matched immediately upon arrival.
Instead, vertices can be matched up until a particular deadline.
They extend the ranking algorithm to this setting, showing that it is $0.5211$-competitive.
Both \citet{WaWo15a} and \citet{HKT+18a} show that a competitive ratio of $1 - \frac{1}{e}$ is impossible to achieve in their respective settings.

The random arrival model has been studied extensively in online algorithms. 
Its most prominent application is in the secretary problem \citep{Ferg89a}, which can be seen as bipartite matching with a single offline vertex. 
The task is to maximize the probability of stopping with the most valuable edge.
For this problem, an optimal stopping algorithm achieves the optimal competitive ratio of $\frac 1e$ \citep{Dynk63a}.
This remains the best possible competitive ratio in the weighted bipartite matching setting where one side of the vertices is present offline \citep{KRTV13a}.
Optimal stopping algorithms have been studied in more generality \citep{Brus00a}, and we will study them in the context of coalition formation in \Cref{ap:odds}.
Interestingly, the competitive ratio of $1-\frac{1}{e}$ in the unweighted, bipartite setting according to \citet{KVV90a} can be beaten in the random arrival model \citep{KMT11a,MaYa11a}.
Closest to us is the work by \citet{EFGT22a} who study weighted, non-bipartite instances in which all vertices or edges arrive online.
For vertex arrival, they show a tight competitive ratio of $\frac 5{12}$.
However, this algorithm needs knowledge of the number of agents.

A related model that is 
a special case of the so-called semi-streaming model considers the arrival of edges instead of vertices \citep{FKM+05a}.
This setting can be strictly harder for unweighted, non-bipartite graphs than the one with vertex arrivals.
Indeed, a competitive ratio of $\frac 12 + \epsilon$ cannot be achieved by a randomized algorithm for any  $\epsilon > 0$ \citep{GKM+19a}.
Moreover, for the case of weighted instances, a finite competitive ratio is impossible to achieve if all edges arrive online, and this holds even for bipartite instances and randomized algorithms \citep{ELMS11a}.
Hence, a condition like the free disposal by \citet{FKM+09a} is necessary.
This condition is usually called \emph{preemption} in the edge arrival model.
A variant of the greedy algorithm, where an edge or a pair of edges is only disposed if this leads to an improvement of the weight of the disposed edges by a factor of $2$, is $\frac 16$-competitive \citep{FKM+05a}.
By optimizing the threshold for edge disposal, one can improve this to obtain a $\frac 1 {3+2\sqrt{2}}$-competitive ratio, where $3+2\sqrt{2}\approx 5.83$ \citep{McGr05a}.
Interestingly, this algorithm achieves the optimal competitive ratio for this setting \citep{Vara11a}.
In contrast to the unweighted setting, where edge arrival is strictly harder for randomized algorithms, we will show how to extend the techniques of \citet{McGr05a} and \citet{Vara11a} to achieve the optimal competitive ratio of $\frac 1 {3+2\sqrt{2}}$ in the case of vertex arrivals.
We want to remark that the streaming literature is motivated by the task of processing large data sets and, therefore, the goal of storing the current solution in small, often sublinear, space is usually considered in addition to obtaining a matching of a high weight \citep{Muth05a}.
In the context of matching, a space of $\tilde{O}(n)$, where $n$ is the number of vertices, is necessary.
Otherwise, even simpler properties like connectivity cannot be tested \citep{FKM+05b}.
Streaming under this space assumption is often referred to as semi-streaming.
In all of our algorithms, we only need to store the current partition and each coalition's value, which can be done in $\tilde{O}(n)$ space.

Finally, we want to discuss research on online versions of problems with a coalitional flavor.
These models typically have in common with coalition formation that the goal is to output a partition of agents (or objects), but they differ with regard to the utility structure.
First, \citet{CLMP22a} study the online variant of correlation clustering, as discussed before.
More loosely related is online skill formation, a coalitional variant of which was studied by \citet{CoAg23a}.
In this problem, agents have to be partitioned into groups aiming at the capability to perform specific tasks.
The value of a coalition depends on whether agents fulfill the skill requirements as a coalition.
Therefore, the main difference to our model is that the utility structure is multi-dimensional.

\section{Preliminaries}

In this section, we present our model.
For an integer $i\in \mathbb N$, we define $[i] := \{1,\dots,i\}$.
Also, for any set $N$, we define ${N \choose 2} := \{e \subseteq N \colon |e| = 2\}$.

\subsection{Additively Separable Hedonic Games}

Let $N$ be a finite set of $n$ \emph{agents}. 
Any subset of $N$ is called a \emph{coalition}. 
We denote the set of all possible coalitions containing agent $i\in N$ by $\mathcal N_i$, i.e., $\mathcal N_i := \{C \subseteq N \colon i \in C\}$.
A \emph{coalition structure} (or \emph{partition}) is a partition of the agents.
Given an agent $i \in N$ and a partition $\pi$, let $\pi(i)$ denote the coalition of $i$, i.e., the unique coalition $C \in \pi$ with $i \in C$. 

A \emph{hedonic game} is a pair $(N,\succsim)$ consisting of a set $N$ of agents and a preference profile ${\succsim} = (\succsim_i)_{i\in N}$, where $\succsim_i$ is a weak order over $\mathcal N_i$ that represents the preferences of agent~$i$. 
Following \citet{BoJa02a}, a hedonic game is called an \emph{additively separable hedonic game} (ASHG) if there exists a complete, undirected,\footnote{Since our focus will be on social welfare, the consideration of \emph{undirected} graphs is without loss of generality because the social welfare of a partition is invariant under the symmetrization $w^S(\{x,y\}) = \frac 12(w_x(y) + w_y(x))$ given directed edges with weights $w_i(j)$.
Note that this symmetrization is not without loss of generality for settings with a restricted utility range. 
For instance, when restricting to $\{-1,1\}$, the symmetrization can lead to a utility of~$0$.} 
and weighted graph $G = (N,E,w)$ with edge set $E = {N\choose 2}$ and weight function $w\colon E\to \mathbb Q$, such that, for every agent $i\in N$ 
and every pair of coalitions $C,C'\in \mathcal N_i$, it holds that $C\succsim_i C'$ if and only if 
$$\sum_{j\in C\setminus\{i\}}w(\{i,j\}) \ge \sum_{j\in C'\setminus\{i\}}w(\{i,j\})\text.$$

We then speak of the ASHG \emph{given by} $G$.
We abbreviate $w(i,j) := w(\{i,j\})$.
In addition, we extend the weight function to sets of edges $F\subseteq {N\choose 2}$ by $w(F) := \sum_{e\in F}w(e)$.
Moreover, since we only consider complete graphs, we shorten the notation and write $G = (N,w)$, where $w\colon {N\choose 2}\to \mathbb Q$, instead of $G = (N, E, w)$, to fully specify an underlying graph.
For an agent $i\in N$ and a coalition $C\in \mathcal N_i$ or a partition $\pi$, we define the \emph{utility} of $i$ for~$C$ or~$\pi$ by $u_i(C) := \sum_{j\in C \setminus \{i\}}w(i,j)$ and $u_i(\pi) := u_i(\pi(i))$, respectively.

We now define our primary objective. 
Given a partition $\pi$, its \emph{(utilitarian) social welfare} $\SW$ is defined by $$\SW = \sum_{i\in N} u_i(\pi)\text,$$
i.e., as the sum of all agents utilities.
Similarly, the social welfare of a coalition $C\subseteq N$ is the sum of the utilities for~$C$ of the agents in $C$, defined by $\SW[C] := \sum_{i\in C} u_i(C)$.
Note that $\SW = \sum_{C\in \pi}\SW[C]$.
Our goal is to compute partitions of high social welfare.
A partition $\pi$ is said to be \emph{welfare-optimal} or an \emph{optimal partition} if, for every partition~$\pi'$, it holds that $\SW\ge\SW[\pi']$.
Given a hedonic game $G$, we usually denote optimal partitions by $\pi^*(G)$, omitting the game when it is clear from the context.

A \emph{matching} is a coalition structure $\pi$ such that, for all $C\in \pi$, it holds that $|C|\le 2$.
For better readability, we use the letter $\genmat$ instead of $\pi$ when referring to a matching.
A matching $\genmat$ is represented by its edge set $M(\genmat) := \{C\in \genmat\colon |C| = 2\}\subseteq {N\choose 2}$.
We then write $w(\genmat) := w(M(\genmat))$ for the weight of matching~$\genmat$.
Note that $\SW[\genmat] = 2 w(\genmat)$ as every edge contributes to the utility of precisely the agents representing its endpoints. 
Hence, maximizing social welfare among matchings is identical to finding a maximum weight matching.

\subsection{Online Coalition Formation}

In this section, we introduce our model of online coalition formation and appropriate objectives.
We remark that we only consider \emph{deterministic} algorithms, while randomization only occurs in the random arrival model.

Consider an ASHG given by $G = (N,w)$.
Given a subset of agents $N'\subseteq N$, let $G[N']$ denote the subgraph induced by agent set $N'$.
Moreover, given a partition $\pi$ of~$N$ and a subset of agents $N'\subseteq N$, we define $\pi[N']$ as the \emph{partition restricted to $N'$}, i.e., $\pi[N'] := \{C\cap N'\colon C\in \pi, C\cap N'\neq \emptyset\}$.
Specifically, if $N' = N\setminus \{i\}$ for some agent $i\in N$, we write $\pi - i$ instead of $\pi[N']$.

In an online setting, previous decisions influence the capabilities of an algorithm to form a partition in the next step.
More precisely, an algorithm iteratively builds a partition such that whenever an agent arrives, the only knowledge is the game restricted to the present agents, and the algorithm has to irrevocably assign the new agent to an existing coalition or start a new coalition (or, under free dissolution, dissolve any coalition before making its decision).

Given a partition $\pi$ and an agent $i$ not contained in a coalition of $\pi$, let $\avai(\pi,i)$ denote the set of \emph{available partitions}, 
when the tentative (partial) partition is $\pi$ and the newly arriving agent is $i$.
As a default, we assume the standard setting where $\avai(\pi,i) = \avai^S(\pi,i) := \{\pi'\colon \pi' - i = \pi\}$.
We also consider algorithms that can dissolve a coalition completely.
We say that an algorithm acts under \emph{free dissolution} if $\avai(\pi,i) = \avai^D(\pi,i) := \avai^S(\pi,i)\cup \bigcup_{C\in \pi, j\in C}\{(\pi\setminus \{C\})\cup \{\{i,j\}\}\cup\{\{k\}\colon k\in C\setminus\{j\}\}\}$. 
In other words, the algorithm may take any existing coalition, create a coalition of size~$2$ with the new agent and some agent in this coalition, and put all other agents into singleton coalitions.
Free dissolution is a natural extension of free disposal in the domain of matching by \citet{FKM+09a} adapted to coalitions of size larger than~$2$.
In addition, we define $\Sigma(N) := \{\sigma\colon [|N|] \to N \textnormal{ bijective}\}$ as the set of all \emph{orders} of the agent set $N$.

An instance of an \emph{online coalition formation problem} is a tuple $(G,\sigma)$, where $G = (N,w)$ defines an ASHG and $\sigma\in \Sigma(N)$.
An \emph{online coalition formation algorithm} for instance $(G,\sigma)$ gets as input the sequence $G_1,\dots,G_n$, where, for every $i\in [n]$, $G_i = G[\{\sigma(1),\dots,\sigma(i)\}]$.
Then, for every $i\in [n]$, the algorithm produces a partition $\pi_i$ of $\{\sigma(1),\dots,\sigma(i)\}$ such that 
the algorithm has only access to $G_i$ and
for $i\ge 2$, it holds that $\pi_i\in \avai(\pi_{i-1},\sigma(i))$.
The output of the algorithm is the partition $\pi_n$.
Given an online coalition formation algorithm $\alg$, let $\alg(G,\sigma)$ be its output for instance $(G,\sigma)$.

If an online coalition formation algorithm creates a matching in every step, we speak of an \emph{online matching algorithm}.
Note that we chose an unusual way of introducing matching algorithms as special types of coalition formation algorithms. 
However, this accounts for matchings being special partitions, and, therefore, the matching  setting being a special case of our coalition formation setting.

Our benchmark algorithm is the greedy algorithm as introduced by \citet{FMM+21a}.

\begin{definition}[Greedy algorithm]
	On input $(G,\sigma)$, in the $i$th step, $i\ge 2$, the \emph{greedy algorithm} ($\gdy$) forms $\pi_i = \arg\max_{\pi\in \avai(\pi_{i-1},\sigma(i))}\SW[\pi]$ if there exists $\pi\in \avai(\pi_{i-1},\sigma(i))$ with $\SW[\pi] > \SW[\pi_{i-1}]$, and it forms $\pi_i = \pi_{i-1}\cup\{\{\sigma(i)\}\}$, otherwise.
\end{definition}
Hence, $\gdy$ assigns each arriving agent to the available coalition that maximizes the increase in social welfare or creates a new singleton coalition if no increase is possible.

\subsection{Competitive Analysis}\label{sec:solcon}

Our goal is to maximize the social welfare of the partition produced by an online algorithm.
The performance of an algorithm is measured by its competitive ratio, which captures the worst-case approximation to the maximum social welfare of its returned partitions.
Recall that $\pi^*(G)$ denotes an optimal partition in game $G$.
We say that an online coalition formation algorithm $\alg$ is \emph{$c$-competitive}\footnote{Here, we use the convention that $\frac 00 = 1$ and $\frac x0 = 0$ for any $x\in \mathbb Q$ with $x < 0$.
Also, note that \citet{FMM+21a} defines the competitive ratio inversely so that it is always at least $1$.
Here, we prefer the more common definition in the online matching literature.} if
\[
\inf_G\min_{\sigma\in \Sigma(N)}\frac{\SW[\alg(G,\sigma)]}{\SW[\pi^*(G)]}\ge c\text.
\]

Equivalently, this means that, for all instances $(G,\sigma)$, it holds that
$\SW[\alg(G,\sigma)]\ge c\, \SW[\pi^*(G)]$.

In addition, we consider online coalition formation with a random arrival order, where we assume that the arrival order is selected uniformly at random.
In the random arrival model, an algorithm $\alg$ is said to be \emph{$c$-competitive} if 
\[
\inf_G\frac{\EV_{\sigma}\left[\SW[\alg(G,\sigma)]\right]}{\SW[\pi^*(G)]}\ge c\text.
\]

The expectation is over the uniform selection of an arrival order $\sigma$ from $\Sigma(N)$.
In both models, the \emph{competitive ratio} of an algorithm $\alg$, denoted by $c_\alg$, is the supremum $c$ such that $\alg$ is $c$-competitive.
In particular, the competitive ratio is always at most~$1$.

Note that if $\genmat$ is a matching, then $w(\genmat) = \frac 12 \SW[\genmat]$.
Hence, in the competitive analysis of online matching algorithms, we also consider the weight of matchings instead of their social welfare.

\section{Matching Algorithms as Coalition Formation Algorithms}

Recall that every matching algorithm can be used as a coalition formation algorithm as a matching is a special case of a partition.
In this section, we want to explore the guarantees achieved by matching algorithms when they are indeed used as coalition 
formation algorithms.
This is based on two insights captured in the next lemmas.
The first lemma is a ``folklore'' property about maximum weight matchings.
It follows as a consequence of Baranyai’s factorization theorem~\citep{Bara73a}.\footnote{We thank an anonymous ACM TALG reviewer for suggesting this simple and elegant proof.}

\begin{restatable}{lemma}{LemAvgmat}\label{lem:avgmat}
	Let $G = (N,w)$ be a complete weighted graph and $\genmat^*$ a maximum weight matching of $G$.
	Then, $w(\genmat^*)\ge \frac 1n w(E^+)$, where $E^+ = \left\{e\in {N\choose 2}\colon w(e)>0\right\}$.
\end{restatable}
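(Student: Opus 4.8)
The plan is to decompose the positive-weight edges of $G$ into a small number of matchings, so that the best of these matchings has weight at least the average. Concretely, consider the subgraph $H = (N, E^+)$ consisting only of the edges with strictly positive weight. If $H$ happens to be regular, then by a classical result on edge colorings of regular graphs (which is exactly the combinatorial content of Baranyai's factorization theorem for complete graphs, or more simply K\"onig-type / Vizing-type edge coloring arguments for the relevant degree), the edge set $E^+$ can be partitioned into at most $n$ perfect matchings $\genmat_1, \dots, \genmat_k$ with $k \le n$. Then $\sum_{j=1}^k w(\genmat_j) = w(E^+)$, so by averaging some $\genmat_j$ satisfies $w(\genmat_j) \ge \frac 1k w(E^+) \ge \frac 1n w(E^+)$. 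Since $\genmat^*$ is a maximum weight matching, $w(\genmat^*) \ge w(\genmat_j) \ge \frac 1n w(E^+)$, which is the claim.

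The main obstacle is that $H$ need not be regular, so one cannot directly invoke a perfect-matching factorization. The standard fix is to embed $H$ into a regular graph without adding any positive weight: add a disjoint copy $N'$ of the vertex set $N$, and for every edge $\{i,j\} \in E^+$ with endpoints $i,j \in N$ also add an edge between the corresponding copies $\{i', j'\}$ with the same weight, plus, for each $i$, as many zero-weight edges between $i$ and copies in $N'$ as needed to equalize degrees. One then works in the complete bipartite-augmented structure on $2n$ vertices; Baranyai's theorem (or a direct edge-coloring argument) partitions its edge set into $2n - 1$ perfect matchings on $2n$ vertices, each of which restricts to a matching on the original copy $N$. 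Here one must be a little careful: the number of color classes should come out to roughly $n$ (not $2n$) when measured against the original $n$ vertices, because each positive edge and its mirror image appear in the same symmetric construction; alternatively, and more cleanly, one invokes Baranyai's theorem on the complete graph $K_n$ (adding all missing edges with weight $0$ to $E^+$ first, and if $n$ is odd pad to $K_{n+1}$ with a dummy vertex and zero-weight edges), obtaining a decomposition of the edge set of $K_n$ into at most $n$ matchings, each of which is a perfect matching on either $n$ or $n+1$ vertices. The total weight over these matchings is exactly $w(E^+)$ since the added edges have weight $0$, and the averaging argument goes through with divisor at most $n$.

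So the key steps, in order, are: (i) reduce to $K_n$ (or $K_{n+1}$) by adding all absent edges with weight $0$, noting this changes neither $w(E^+)$ nor the maximum matching weight; (ii) invoke Baranyai's factorization theorem to write the edge set of this complete graph as a disjoint union of at most $n$ perfect matchings; (iii) observe that the total weight of these matchings equals $w(E^+)$, because only the original positive edges carry weight; (iv) pick the heaviest of the at most $n$ matchings, which by averaging has weight at least $\frac 1n w(E^+)$, and conclude by maximality of $\genmat^*$. I expect step (i)--(ii), namely setting up the complete-graph reduction with the correct parity handling so that exactly (at most) $n$ matchings appear, to be the only place requiring genuine care; everything after that is a one-line averaging bound.
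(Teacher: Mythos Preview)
Your final approach in steps (i)--(iv) is correct and is essentially identical to the paper's proof: apply Baranyai's factorization theorem to the complete graph (padding with a zero-weight dummy vertex if $n$ is odd) to decompose its edges into $n-1 \le n$ perfect matchings, restrict to positive-weight edges, average, and conclude by maximality of $\genmat^*$. Note that your exploratory detour through the non-regular subgraph $H = (N, E^+)$ and the bipartite-doubling construction is unnecessary here, since by hypothesis $G$ is already a complete weighted graph; the paper bypasses this entirely and applies Baranyai directly.
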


\begin{proof}
	Let $G = (N,w)$ be an arbitrary complete and weighted graph and $\genmat^*$ a maximum weight matching of $G$.
	Let $E^+ = \left\{e\in {N\choose 2}\colon w(e)>0\right\}$ be the set of edges of positive weight.
	Since a perfect matching is a $1$-factor of a graph, it follows as a special case of Baranyai’s factorization theorem \citep{Bara73a} that every complete graph with an even number of vertices can be partitioned into $n-1$ perfect matchings.
	If we restrict these perfect matchings to edges of positive weight, then at least one of them has weight at least $\frac{1}{n - 1}w(E^+) \ge \frac{1}{n}w(E^+)$ because they partition the set of positive edges into $n-1$ sets.
	The argument easily extends to graphs with an odd number of vertices by adding a dummy vertex with zero weights to all other agents.
	Then, at least one perfect matching restricted to edges of positive weight has weight at least $\frac{1}{n}w(E^+)$.
	The claim $w(\genmat^*)\ge \frac 1n w(E^+)$ follows since the maximum weight matching is at least as large as the matching we found.
\end{proof}

Now, as every edge contributes to the utility of at most two agents, we can bound the weight of an optimal partition $\pi^*(G)$ in an ASHG $G$ as $\SW[\pi^*(G)] \le 2w(E^+)$.
Consequently, computing maximum weight matchings results in an $\frac 1n$-approximation for maximizing social welfare in the coalition formation domain.
This translates to the competitive ratio of online coalition formation algorithms as we capture in the next lemma.
We remark that the statement is true for both the random arrival and the free dissolution models.

\begin{lemma}\label{lem:mat2cofo}
	Let $\alg$ be a $c$-competitive algorithm for online matching.
	Then, $\alg$ is $\frac{c}{n}$-competitive for online coalition formation.
\end{lemma}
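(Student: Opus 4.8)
The plan is to reduce the statement to a short chain of inequalities, with essentially all the real work already done in Lemma~\ref{lem:avgmat} and the subsequent observation that $\SW[\pi^*(G)]\le 2w(E^+)$. Recall that a matching algorithm outputs a matching in every step, so in particular its final output $\alg(G,\sigma)$ is a matching $\genmat$ with $\SW[\genmat]=2w(\genmat)$, and that ``$c$-competitive for online matching'' is the standard notion in which $\alg$ is compared against a maximum weight matching $\genmat^*(G)$ of $G$, i.e., $w(\alg(G,\sigma))\ge c\,w(\genmat^*(G))$ for every instance $(G,\sigma)$ (equivalently $\SW[\alg(G,\sigma)]\ge c\,\SW[\genmat^*(G)]$, since both sides scale by the same factor~$2$).

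Given this, I would fix an instance $(G,\sigma)$ with $G=(N,w)$ on $n$ agents, let $\genmat^*$ be a maximum weight matching of $G$, and simply compose the bounds: $\SW[\alg(G,\sigma)] = 2w(\alg(G,\sigma)) \ge 2c\,w(\genmat^*) \ge \tfrac{2c}{n}\,w(E^+) \ge \tfrac{c}{n}\,\SW[\pi^*(G)]$, where the first equality uses that the output is a matching, the first inequality uses $c$-competitiveness of $\alg$ for matching, the second uses Lemma~\ref{lem:avgmat}, and the last uses $\SW[\pi^*(G)]\le 2w(E^+)$. This is exactly the definition of $\tfrac{c}{n}$-competitiveness in the adversarial-order model.

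For the random arrival model the argument is word-for-word the same after inserting an expectation: $c$-competitiveness of $\alg$ as a matching algorithm now reads $\EV_\sigma[w(\alg(G,\sigma))]\ge c\,w(\genmat^*)$, and since Lemma~\ref{lem:avgmat} and the bound $\SW[\pi^*(G)]\le 2w(E^+)$ do not involve $\sigma$, linearity of expectation gives $\EV_\sigma[\SW[\alg(G,\sigma)]] = 2\,\EV_\sigma[w(\alg(G,\sigma))] \ge \tfrac{2c}{n}w(E^+) \ge \tfrac{c}{n}\SW[\pi^*(G)]$. There is essentially no obstacle here; the only point needing a moment's care is the degenerate case $w(E^+)=0$ (hence $\SW[\pi^*(G)]=0$), which is handled by the competitive-ratio conventions: a $c$-competitive matching algorithm with $c>0$ cannot output a negative-weight matching when the maximum weight matching is empty, so its output has social welfare $0$ as well and the ratio is $\tfrac00=1\ge\tfrac cn$.
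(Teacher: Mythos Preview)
Your proof is correct and follows essentially the same chain of inequalities as the paper: $\SW[\alg(G,\sigma)]\ge 2c\,w(\genmat^*)\ge \tfrac{2c}{n}w(E^+)\ge \tfrac{c}{n}\SW[\pi^*(G)]$, invoking $c$-competitiveness, \Cref{lem:avgmat}, and the bound $\SW[\pi^*(G)]\le 2w(E^+)$ in turn. Your additional remarks on the random arrival variant and the degenerate case $w(E^+)=0$ are sound and go slightly beyond what the paper spells out.
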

\begin{proof}
	Let $(G,\sigma)$ be an arbitrary instance of online coalition formation.
	Let $\genmat^*$ be a maximum weight matching for the underlying weighted graph $G$, and let $\pi^*$ be a partition maximizing social welfare.
	Let $E^+ = \{e\in E\colon w(e)>0\}$ be the set of positive edges.
	Then,
	\[
	\SW[\alg(G,\sigma)] \ge c\, 2\, w(\genmat^*) \ge c\, \frac 2n\, w(E^+) \ge \frac cn\, \SW[\pi^*]\text.
	\]
	
	The first inequality holds because $\alg$ is a $c$-competitive algorithm in the matching domain.
	The second inequality follows from \Cref{lem:avgmat}.
	The last inequality holds because twice the sum of positive edges is an upper bound for the social welfare of any partition. 
\end{proof}

Hence, online matching algorithms with a constant competitive ratio yield coalition formation algorithms with a linear competitive ratio.
While this seems not to be much, it is essentially the best we can ask for from a polynomial-time algorithm.
Indeed, recall from our discussion of related work that, for every $\epsilon > 0$, it is impossible to approximate maximum social welfare by a factor of $\frac 1 {n^{1-\epsilon}}$ in polynomial time, unless \P\,=\,\NP{} \citep{FKV22a}.

Hence, some of our effort will be transferring matching algorithms while dealing with additional obstacles: under random arrival, we will have to make algorithms work for an unknown number of agents and for matching under free disposal, we will transfer results from the edge arrival to the vertex arrival setting.

\section{Random Arrival Model}\label{sec:RAM}

In this section, we analyze the competitive ratio of algorithms in the random arrival model.
For our analysis of algorithms in this section, we use the notation $x\arr y$ to say that $\sigma^{-1}(x) < \sigma^{-1}(y)$ for $x,y\in N$ and an arrival order $\sigma$.

\subsection{Algorithmic Possibilities}
We start by analyzing the greedy algorithm, which is known to achieve an optimal competitive ratio for deterministic arrivals \citep{FMM+21a}.
Recall that, in the deterministic arrival model, the specific utility values significantly affect the competitive ratio of $\gdy$ as well as the one of any other online algorithm \citep{FMM+21a}.
By contrast, we now show that the competitive ratio of $\gdy$ in the random arrival model solely depends on the number of agents.

\begin{theorem}\label{thm:compGDY}
	The competitive ratio of $\gdy$ for online coalition formation under random arrival order satisfies $\Theta\left(\frac{1}{n^2}\right)$.
\end{theorem}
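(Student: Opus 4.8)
The statement asserts two things: an upper bound $c_{\gdy} = O(1/n^2)$ (some instance on which $\gdy$ does poorly in expectation) and a matching lower bound $c_{\gdy} = \Omega(1/n^2)$ (on every instance, $\gdy$ recovers an $\Omega(1/n^2)$ fraction of the optimum in expectation). I would treat these two directions separately, as their difficulty is quite asymmetric.

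\textbf{Upper bound.} The plan is to adapt the bad example from the introduction into a family of instances parameterized by $n$ on which $\gdy$ loses all but an $O(1/n^2)$ fraction of the welfare with high probability over $\sigma$. The construction I would try: take a ``trap'' pair of agents $\{a,b\}$ with a small mutual weight (say $w(a,b) = 1$), one ``treasure'' set $T$ of $\sqrt n$ (or $\Theta(n)$) agents forming a clique of large positive weight $K$ among themselves, and a ``poison'' set of agents each having a large negative weight to $a$ and $b$ but zero weight to everyone in $T$. The optimal partition isolates $\{a,b\}$ (or even splits them) and groups $T$ together, getting welfare $\Theta(K\cdot|T|^2)$. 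The key observation is that $\gdy$ is forced into a bad coalition precisely when $a$ and $b$ both arrive before any treasure agent and before the poison agents — an event that, by symmetry of the random order, happens with constant probability (or can be boosted). Once $a,b$ are paired, each arriving treasure agent gains nothing from joining their coalition (it has zero edges to $a,b$) but also the greedy tie-breaking / strict-improvement rule must be handled carefully: I need the treasure clique to \emph{also} fail to form, which I can arrange by making the treasure agents' pairwise weights slightly negative in isolation but the full clique hugely positive — a standard ``all-or-nothing'' gadget so that $\gdy$, seeing agents one at a time, never triggers a strict improvement and leaves them all as singletons. Then $\gdy$'s welfare is $\Theta(1)$ while the optimum is $\Theta(n^2)$ (with $|T| = \Theta(n)$), giving the $O(1/n^2)$ bound. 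The main subtlety here is verifying that no intermediate greedy step accidentally makes progress regardless of the arrival order; I expect to need a clean invariant like ``after any prefix, every non-singleton coalition $\gdy$ has formed has welfare exactly $1$'' and to check the all-or-nothing clique gadget is robust to every interleaving.

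\textbf{Lower bound.} Here I would argue that for every game $G$, $\EV_\sigma[\SW[\gdy(G,\sigma)]] \ge \Omega(1/n^2)\,\SW[\pi^*(G)]$. The natural route is to lower-bound $\gdy$'s performance on a single well-chosen edge. Let $e^* = \{x,y\}$ be a maximum-weight edge; by \Cref{lem:avgmat} (or rather the crude bound that the heaviest edge has weight at least $\frac{1}{\binom n2} w(E^+) \ge \frac{1}{n^2}w(E^+) \ge \frac{1}{2n^2}\SW[\pi^*]$), it suffices to show that with constant probability $\gdy$ ends up with a partition of welfare at least $\Omega(w(e^*))$. The event I would target: $x$ and $y$ are the \emph{first two} agents to arrive, which happens with probability $\frac{2}{n(n-1)} = \Theta(1/n^2)$. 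Conditioned on this, $\gdy$ puts the first agent in a singleton and then, since joining gives a strict improvement of $w(e^*) > 0$ (we may assume $w(E^+)>0$, else both sides are $0$ and there is nothing to prove), it forms $\{x,y\}$ in step 2. The delicate point is that later arrivals might pull $x$ or $y$ out — but they cannot, since $\gdy$ never removes an agent from a coalition under $\avai^S$; the coalition containing $x,y$ can only grow, and $\gdy$ grows it only when welfare strictly increases, so $\SW[\gdy]$ is monotone nondecreasing and stays $\ge w(e^*)\cdot 2 \ge w(e^*)$ from step 2 onward. Hence $\EV_\sigma[\SW[\gdy]] \ge \frac{2}{n(n-1)}\cdot 2w(e^*) \ge \frac{2}{n(n-1)}\cdot \frac{2}{n^2}\SW[\pi^*]$ — wait, this is $\Theta(1/n^4)$, too weak, so I would instead combine a sharper ``heaviest edge'' bound ($w(e^*)\ge \frac{1}{n}\cdot\frac{1}{n-1}\cdot$... no) — the fix is to use that a uniformly random perfect matching (via Baranyai, as in \Cref{lem:avgmat}) has expected weight $\ge \frac{1}{n}w(E^+)$, so some edge of $\pi^*$-heavy structure survives; more simply, pick $e^*$ heaviest so $w(e^*) \ge \frac{1}{n(n-1)}w(E^+) \ge \frac{1}{n^2}\SW[\pi^*]$ is the wrong constant — actually the right play is: the event ``$x,y$ arrive first'' already costs $\Theta(1/n^2)$, so I need $w(e^*) = \Omega(\SW[\pi^*])$, which is false in general. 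So the correct target event must be cheaper: I would instead condition only on ``$x$ arrives before $y$ and no agent with a large negative edge to both arrives in between'' — or, cleanest, show directly that $\gdy$ on the subsequence restricted to a random $2$-element subset already captures $\Omega(1/n^2)\SW[\pi^*]$ by a careful edge-by-edge charging where each positive edge $e=\{u,v\}$ of $\pi^*$ contributes its weight to $\gdy$'s output whenever $u,v$ arrive consecutively as the first two agents and in that case with probability $\Theta(1/n^2)$; summing over all such edges and using linearity of expectation together with the monotonicity of $\SW[\gdy]$ recovers the bound. \textbf{The main obstacle} will be making this charging argument correct: greedy's welfare is monotone, but I must ensure the ``good event'' for one edge does not get double-counted or negated by interference from negative edges, so I expect to need to analyze the event ``$u$ and $v$ are the first two arrivals'' (mutually exclusive across distinct edges except when they share a vertex) and argue that on this event $\SW[\gdy] \ge 2w(e) \ge w(e)$ deterministically thereafter, yielding $\EV_\sigma[\SW[\gdy]] \ge \sum_{e\in \pi^*, w(e)>0}\Pr[\text{$e$'s endpoints arrive first}]\cdot w(e) = \frac{2}{n(n-1)}\sum_{e\in\pi^*,w(e)>0} w(e) \ge \frac{2}{n(n-1)}\cdot\frac12\SW[\pi^*] = \Omega(1/n^2)\SW[\pi^*]$, since the positive-in-$\pi^*$ edges alone account for at least half (indeed all) of $\SW[\pi^*]$ when we note every within-coalition edge of $\pi^*$ of negative weight could be removed without decreasing welfare. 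I would double-check this last combinatorial step and the precise constant, but the structure above should close the argument.
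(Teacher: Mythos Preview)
Your lower bound argument is correct and, if anything, cleaner than the paper's. You condition on a fixed positive edge $e=\{u,v\}$ of $\pi^*$ having its endpoints arrive in the first two slots (probability $\frac{2}{n(n-1)}$), observe that $\gdy$ then forms $\{u,v\}$ and that $\SW[\gdy]$ is monotone thereafter, and sum over disjoint events using $\SW[\gdy]\ge 0$. The paper instead argues that among all edges of $E^+$ the ``first positive edge to complete'' is uniform, which gives the same $\Omega(1/n^2)$ bound via $|E^+|\le\binom{n}{2}$; your version avoids having to justify that uniformity claim.

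Your upper bound, however, does not work. The ``all-or-nothing'' gadget you describe---pairwise weights slightly negative but the full clique hugely positive---cannot exist in an ASHG: utilities are additively separable, so $\SW[C]=2\sum_{\{i,j\}\subseteq C}w(i,j)$, and if every pairwise weight is negative then every coalition has negative welfare. There is no non-additive interaction to exploit. Consequently your treasure set either has positive pairwise weights (in which case $\gdy$ \emph{will} start assembling it as soon as two treasure agents have arrived, regardless of what happened with $a,b$) or has non-positive welfare (in which case $\SW[\pi^*]$ is not $\Theta(n^2 K)$). Either way the $O(1/n^2)$ bound fails.

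The paper's construction is quite different and worth internalizing: the optimum is \emph{small}, not large. There is a single valuable edge $\{a,b\}$ of weight $1$; each of $a,b$ has $\Theta(n)$ ``decoy'' neighbors with weight $\epsilon$ (set $X$ for $a$, set $Y$ for $b$); all other weights are $-1$. The optimal welfare is $2$ (just $\{a,b\}$), so the task is to show $\gdy$ forms $\{a,b\}$ with probability only $O(1/n^2)$. The point is that $a$ gets trapped with the first $x\in X$ to arrive, and $b$ with the first $y\in Y$, unless $a$ and $b$ see each other before any decoy on their respective sides does; by a direct computation over the position of $b$, that happens with probability $\Theta(1/n^2)$. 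The mechanism is ``small traps prevent the one good edge,'' not ``one small trap prevents a huge clique.''
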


\begin{proof}
	\begin{figure}[tbp]
		\begin{tikzpicture}[-, node distance=2.5cm,main/.style={draw,}]
			\node[draw, circle](a) at (2,3){$a$};
			\node[draw, circle](b) at (4,3){$b$};
			\path
			(a) edge node[midway, fill = white]{$1$} (b);
			\draw (-.5,3) ellipse (10pt and 20pt) node[above=2em]{$X$};
			\node(x) at (-.2, 3) {};
			\path (a) edge node[pos = 0.6, fill = white] (au) {$\epsilon$} ($(x)+(-.03,.6)$);
			\path (a) edge node[pos = 0.6, fill = white] (ad) {$\epsilon$} ($(x)+(-.03,-.6)$);
			\node[rotate = 90] at ($(au)!.5!(ad)$) {\small \dots};
			\draw (6.5,3) ellipse (10pt and 20pt) node[above=2em]{$Y$};
			\node(y) at (6.2, 3) {};
			\path (b) edge node[pos = 0.6, fill = white] (bu) {$\epsilon$} ($(y)+(.03,.6)$);
			\path (b) edge node[pos = 0.6, fill = white] (bd) {$\epsilon$} ($(y)+(.03,-.6)$);
			\node[rotate = 90] at ($(bu)!.5!(bd)$) {\small \dots};
		\end{tikzpicture}
		\centering
		\caption{Hard instance for online coalition formation algorithms under random arrival. The example contains $n = 2k + 2$ agents. There are $k$ agents in each of the sets $X$ and $Y$. 
			The utility between $a$ and any agent in $X$ and $b$ and any agent in $Y$ is $\epsilon$. 
			All omitted edges represent weights of $-1$. 
		}
		\label{fig:hard_instance}
	\end{figure}
	
	First, we show that the competitive ratio of $\gdy$ satisfies $\orderof{\frac{1}{n^2}}$ by providing a family of instances where it performs poorly. 
	Let $\epsilon > 0$ and consider the ASHG given by $G^{k,\epsilon} = (N^{k,\epsilon},w^{k,\epsilon})$ depicted in \Cref{fig:hard_instance}.
	Since, we analyse this graph for a fixed $k$, we omit this superscript from now on and write $G^{\epsilon} = (N^{\epsilon},w^{\epsilon})$.
	
	The agent set is $N^{\epsilon} = X\cup Y\cup \{a,b\}$, where $|X| = |Y| = k$. 
	Weights are given by $w^\epsilon(a,b) = 1$, $w^\epsilon(a,x) = w^\epsilon(b,y) = \epsilon$ for $x\in X$ and $y\in Y$, and all other weights are $-1$.
	For sufficiently small $\epsilon$, the optimal partition has a value of $2$ (with $\{a,b\}$ the only non-singleton coalition). 
	By inspecting the limit case for $\epsilon$ tending to $0$, the value of the greedy algorithm (and, therefore, also its competitive ratio) is at most twice the probability of forming $\{a,b\}$.
	Let $\epsilon^* = \frac 12$.
	Then it holds that
	\begin{small}
	\begin{align*}
		c_\gdy & = \inf_{G}\frac{\EV_{\sigma}\left[\SW[\gdy(G,\sigma)]\right]}{\SW[\pi^*(G)]}
		\le \inf_{\epsilon > 0}\frac{\EV_{\sigma}\left[\SW[\gdy(G^\epsilon,\sigma)]\right]}{2}\\
		&\le  \inf_{\epsilon > 0}\pr_\sigma(\{a,b\}\in\gdy(G^\epsilon,\sigma)) + \sum_{x\in X}\epsilon \pr_\sigma(\{x,a\}\in\gdy(G^\epsilon,\sigma)) + \sum_{y\in Y}\epsilon \pr_\sigma(\{y,b\}\in\gdy(G^\epsilon,\sigma))\\
		&\le \inf_{\epsilon > 0}\pr_\sigma(\{a,b\}\in\gdy(G^\epsilon,\sigma)) + 2\epsilon 
		= \inf_{\epsilon > 0}\pr_\sigma(\{a,b\}\in\gdy(G^{\epsilon^*},\sigma)) + 2\epsilon 
		\stackrel{\epsilon \to 0}{=} \pr_\sigma(\{a,b\}\in\gdy(G^{\epsilon^*},\sigma))\text.
	\end{align*}
	\end{small}
	
	In the second inequality, we use that, for small values of $\epsilon$, $\gdy$ always either forms the partition $\{\{a,b\}\} \cup \{\{z\}\colon z \in X \cup Y\}$ or $\{\{x^*, a\}, \{y^*,b\}\} \cup \{\{z\}\colon z \in X \cup Y \setminus \{x^*, y^*\}\}$ for $x^* \in X$ and $y^* \in Y$.
	The third inequality follows because the probabilities in the sums concern disjoint events, i.e., $\sum_{x\in X}\pr_\sigma(\{x,a\}\in\gdy(G^\epsilon,\sigma))  \le 1$. 
	In the last row, we use that the execution of the greedy algorithm is identical for all $G^\epsilon$ with $0 < \epsilon < 1$.
	Thus, while the infimum is achieved for $\epsilon$ tending to $0$, we have $\pr_\sigma(\{a,b\}\in\gdy(G^\epsilon,\sigma)) =  \pr_\sigma(\{a,b\}\in\gdy(G^{\epsilon^*},\sigma))$ for all $0 < \epsilon < 1$.
	
	We further compute
	\begin{align*}
		&\pr_\sigma(\{a,b\}\in\gdy(G^{\epsilon^*},\sigma))\\
		& = \ \pr_\sigma(\{a,b\}\in\gdy(G^{\epsilon^*},\sigma)\mid a \arr b)\pr_\sigma(a \arr b)
		 + \pr_\sigma(\{a,b\}\in\gdy(G^{\epsilon^*},\sigma)\mid b \arr a)\pr_\sigma(b \arr a)\\
		&= \ 2 \pr_\sigma(\{a,b\}\in\gdy(G^{\epsilon^*},\sigma)\mid a \arr b)\pr_\sigma(a \arr b)\text.
	\end{align*}
	
	The second equality follows by symmetry.
	
	The next step is to sum over all possible arrival positions of $b$ by summing over the number of alternatives that,  in addition to $a$, arrive before $b$. 
	Note that if more than $k$ agents arrive before $b$, excluding $a$, then, by the pigeonhole principle, some $x \in X$ arrives before $b$ and forms a coalition with $a$. 
	This prevents the coalition $\{a, b\}$ from forming, so all terms of the resulting sum for $i > k$ are $0$. 
	Conditioned on $a$ arriving before $b$, the coalition $\{a, b\}$ forms if and only if all $i$ agents arriving before $b$ are from $Y$ since those agents will not form a coalition with $a$. 
	We derive\footnote{For the first inequality, note that it holds that $\pr(A\mid B)\pr(B) = \pr(A,B) = \sum_C\pr(A,B,C) = \sum_C \pr(A\mid B,C)\pr(B,C) = \sum_C \pr(A\mid B,C)\pr(B\mid C)\pr(C)$ for arbitrary events $A$, $B$, and $C$ and probability measures $\pr$.} 
\begin{small}
	\begin{align*}
		c_\gdy &\le 2 \sum_{i = 0}^{k} \pr_\sigma(\{a,b\}\in\gdy(G^{\epsilon^*},\sigma)\mid a \arr b, \sigma^{-1}(b) = i + 2) 
\cdot \pr_\sigma(a\arr b\mid \sigma^{-1}(b) = i + 2)\cdot \pr_\sigma(\sigma^{-1}(b) = i + 2) \\
		&= 2 \sum_{i = 0}^{k} \pr_\sigma(\{d\colon d \arr b\} \subseteq Y \cup \{a\} \mid a \arr b, \sigma^{-1}(b) = i + 2) \cdot \pr_\sigma(a\arr b\mid \sigma^{-1}(b) = i + 2)\cdot \pr_\sigma(\sigma^{-1}(b) = i + 2) \\
		&= 2 \sum_{i = 0}^{k} \pr_\sigma(\{d\colon d \arr b\} \setminus \{a\} \subseteq Y \mid a \arr b, \sigma^{-1}(b) = i + 2) 
		\cdot \pr_\sigma(a\arr b\mid \sigma^{-1}(b) = i + 2)\cdot \pr_\sigma(\sigma^{-1}(b) = i + 2)\text.
	\end{align*}
\end{small}
	
	We now determine the value of each probability in the previous sum.
	First, the probability that the $i$ agents arriving before $b$ are all from $Y$ is $\pr_\sigma(\{d\colon d \arr b\} \setminus \{a\} \subseteq Y \mid a \arr b, \sigma^{-1}(b) = i + 2) = \frac{\binom{k}{i}}{\binom{2k}{i}}$, i.e., the number of possibilities to draw $i$ agents from $Y$ divided by the number of possibilities to draw $i$ agents from $Y \cup X$. 
	
	Second, the probability that $a$ arrives before $b$ when $b$ arrives in position $i + 2$ is $\pr_\sigma(a \arr b\mid \sigma^{-1}(b) = i + 2) = \frac{i + 1}{2k + 1}$ because we have $i + 1$ chances to draw $a$ among the remaining $2k + 1$ alternatives.
	Finally, due to the random arrival order, the probability that agent $b$ arrives in a particular fixed position is $\pr_\sigma(\sigma^{-1}(b) = i + 2) = \frac{1}{2k + 2}$. 
	Together, we obtain 
	\begin{equation*}
		c_\gdy \le 2 \sum_{i = 0}^{k} \frac{\binom{k}{i}}{\binom{2k}{i}} \frac{i + 1}{2k + 1}\frac{1}{2k + 2} = \frac{2}{k^2 + 3k + 2} \in \orderof{\frac{1}{k^2}} =  \orderof{\frac{1}{n^2}}\text.
	\end{equation*}
	
	Next, we show that the competitive ratio of $\gdy$ satisfies $\Omega\left(\frac{1}{n^2}\right)$. 
	Consider an arbitrary ASHG given by $G = (N,w)$ and let $E^+(G) := \left\{e\in {N \choose 2}\colon w(e) > 0\right\}$ be the set of agent pairs with positive weights.
	Then, the optimal partition $\pi^*(G)$ satisfies $\SW[\pi^*(G)] \le 2\sum_{e \in E^+(G)}w(e)$. 
	Furthermore, the social welfare of $\gdy$ is at least twice the utility between the first arriving pair of agents from $E^+(G)$.
	Indeed, until a pair of agents with positive utility arrives, every agent is assigned to a singleton coalition. 
	Thus, since every pair in $E^+(G)$ has equal probability of being the first such pair, the expected social welfare of $\gdy$ is at least the average, i.e.,  $\EV_\sigma[\SW[\gdy(G,\sigma)]] \ge \frac{\sum_{e \in E^+(G)}2 w(e)}{|E^+(G)|}$ and the competitive ratio is then at least
	\begin{align*}
		c_\gdy &= \inf_{G}\frac{\EV_\sigma[\SW[\gdy(G,\sigma)]]}{\SW[\pi^*(G)] }\ge \inf_{G}\frac{\frac{\sum_{e \in E^+(G)}2w(e)}{|E^+(G)|}}{2\sum_{e \in E^+(G)}w(e)} = \inf_{G}\frac{1}{|E^+(G)|} \in \Omega\left(\frac{1}{n^2}\right)\text.
	\end{align*}
	
	Altogether, we have shown that $c_\gdy$ is of order $\Theta\left(\frac{1}{n^2}\right)$.
\end{proof}

The natural question is whether we can obtain better algorithms than the greedy algorithm.
As we have just seen, the asymptotic performance guarantee of the greedy algorithm is obtained by considering the average weight of a positive edge.
Hence, its competitive ratio is asymptotically as bad as the competitive ratio of the algorithm that forms a single coalition of size~$2$ with the first pair of agents arriving with a positive utility, and forming singleton coalitions with all other agents.
However, we can easily achieve the average weight of a random matching, improving the performance to $\Theta\left(\frac 1n\right)$.
For simplicity, we assume first that $n$ is even and known to the algorithm in advance.
We first analyze a simple online matching algorithm.\footnote{
	An alternative way to achieve a competitive ratio of $\Theta\left(\frac 1n\right)$ for online coalition formation under random arrival with known~$n$ is to use the \emph{randomized} online matching algorithm by \citet{EFGT22a} with a constant competitive ratio, and to apply \Cref{lem:mat2cofo}.
By contrast, we propose here a simple \emph{deterministic} algorithm that only slightly adjusts the greedy algorithm and can form coalitions of arbitrary size. 
Its performance in the coalition formation domain is asymptotically as good as in the matching domain.
The subsequent step of transforming the obtained algorithm into an algorithm for unknown $n$ is necessary in both approaches.}

\begin{definition}[Greedy matching algorithm]
	The \emph{greedy matching algorithm} ($\rmat$) leaves the first $\frac n2$ agents unmatched. 
	Then, for $1\le i\le \frac n2$, if the weight between the $\left(\frac n2 + i\right)$th agent and the $i$th agent is positive, these are matched.
	Otherwise, the $\left(\frac n2 + i\right)$th agent remains unmatched.
\end{definition}

We now determine the competitive ratio of $\rmat$ in the matching domain.
Afterward, we consider a slightly more sophisticated version of this algorithm and prove the same performance guarantee in the coalition formation domain.

\begin{restatable}{proposition}{propRMAT}\label{prop:propRMAT}
	$\rmat$ has a competitive ratio of $\Theta\left(\frac{1}{n}\right)$ for online matching under random arrival when $n$ is known and even.
\end{restatable}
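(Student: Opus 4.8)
The plan is to show matching upper and lower bounds of the form $\Theta(1/n)$ on the competitive ratio of $\rmat$. For the lower bound, let $G = (N,w)$ be an arbitrary ASHG and let $\genmat^*$ be a maximum weight matching of $G$. By \Cref{lem:avgmat} we have $w(\genmat^*)\ge \tfrac1n w(E^+)$, and since every matching has weight at most $w(E^+)$, it suffices to show that the expected weight of the matching produced by $\rmat$ is at least some constant times $\tfrac1n w(E^+)$, or equivalently some constant times $w(\genmat^*)$. The key observation is that $\rmat$ pairs the $i$th arriving agent with the $(\tfrac n2 + i)$th arriving agent, and these $\tfrac n2$ pairs are exactly the pairs of a uniformly random perfect matching of $N$ — indeed, for a uniformly random order $\sigma$, the induced pairing $\{\sigma(i),\sigma(\tfrac n2+i)\}$ is distributed as a uniformly random perfect matching. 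Then for each edge $e\in E^+$, the probability that $e$ is one of these $\tfrac n2$ pairs is exactly $\tfrac{\tfrac n2}{\binom n2} = \tfrac{1}{n-1}$; when this happens $\rmat$ indeed matches the two endpoints since $w(e)>0$. By linearity of expectation, $\EV_\sigma[w(\rmat(G,\sigma))] \ge \sum_{e\in E^+} w(e)\cdot \tfrac{1}{n-1} = \tfrac{1}{n-1} w(E^+) \ge \tfrac1n w(E^+) \ge w(\genmat^*)\cdot\tfrac1n \cdot n \cdot \tfrac{1}{n}$ — more cleanly, $\EV_\sigma[w(\rmat(G,\sigma))]\ge \tfrac{1}{n-1}w(E^+)\ge \tfrac{1}{n-1}w(\genmat^*)$, so the competitive ratio is $\Omega(1/n)$.

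For the upper bound, I would reuse (a variant of) the hard instance from \Cref{fig:hard_instance} used in the proof of \Cref{thm:compGDY}. Take the graph $G^\epsilon$ with agents $X\cup Y\cup\{a,b\}$, $w(a,b)=1$, $w(a,x)=w(b,y)=\epsilon$, and all other weights $-1$; for $\epsilon\to 0$ the optimal matching has weight $1$ (namely $\{a,b\}$, possibly together with the $\epsilon$-edges, but these vanish). Since $\rmat$ only ever matches the $i$th and $(\tfrac n2+i)$th arriving agents, it matches $a$ and $b$ only if they land in "paired" positions, i.e. one of them arrives in some slot $i\le\tfrac n2$ and the other in slot $\tfrac n2+i$; and even then, it will only produce positive weight from this pair (the edge $\{a,b\}$ has weight $1$, while every $X$–$Y$, $X$–$X$, $Y$–$Y$ pairing among the $\tfrac n2$ matched pairs contributes $-1$). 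The probability that $a$ and $b$ are a matched pair is $\tfrac{1}{n-1}$, and the expected contribution of all other matched pairs is non-positive in the $\epsilon\to 0$ limit, so $\EV_\sigma[w(\rmat(G^\epsilon,\sigma))] \le \tfrac{1}{n-1} + O(\epsilon k)$, which upon $\epsilon\to 0$ gives a competitive ratio of $O(1/n)$. Combining the two bounds yields $\Theta(1/n)$.

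The main technical obstacle I anticipate is the upper-bound side: one must argue carefully that, in the $\epsilon\to 0$ limit, the negative-weight pairs among the $\tfrac n2$ matched pairs do not help $\rmat$ — concretely, that $\rmat$ will simply leave such a pair unmatched (since the relevant weight is negative) rather than producing negative weight, so the only positive contribution comes from the $\{a,b\}$ pair and possibly from $\epsilon$-edges $\{a,x\}$ or $\{b,y\}$ that happen to be matched pairs. Bounding the latter by $O(\epsilon)$ as in the proof of \Cref{thm:compGDY} (the events "a particular $\epsilon$-edge is a matched pair" are handled by a union bound, and each $\epsilon$-contribution is tiny) handles this cleanly. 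A secondary, purely bookkeeping point is verifying that the induced pairing $\{\sigma(i),\sigma(\tfrac n2+i)\}_{i\le n/2}$ under uniform $\sigma$ is genuinely a uniform random perfect matching, so that the per-edge probability $\tfrac{1}{n-1}$ is exact — this is a standard symmetry argument.
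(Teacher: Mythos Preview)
Your proof is correct. The lower bound is essentially identical to the paper's: both compute that every positive edge is one of the $\tfrac n2$ proposed pairs with probability exactly $\tfrac{1}{n-1}$, giving $\EV_\sigma[w(\rmat(G,\sigma))]=\tfrac{1}{n-1}w(E^+)\ge \tfrac{1}{n-1}w(\genmat^*)$. (Your phrasing via ``uniformly random perfect matching'' is a clean way to see the $\tfrac{1}{n-1}$; the paper instead computes the same probability by splitting into ``$a,b$ land in different halves'' times ``they are paired conditioned on that''.)

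For the upper bound you take a genuinely different route. You reuse the $G^\epsilon$ instance from \Cref{fig:hard_instance} and let $\epsilon\to 0$, arguing that the only positive contributions come from the single $\{a,b\}$ pair (probability $\tfrac{1}{n-1}$) plus $O(k\epsilon)$ from the $\epsilon$-edges, while all other proposed pairs have negative weight and are discarded by $\rmat$. This is valid. The paper instead uses a much simpler instance: two agents $a,b$ with $w(a,b)=1$ and \emph{all other weights equal to $0$}. In that instance the optimal matching has weight $1$, and the expected weight of $\rmat$ is exactly the probability $\tfrac{1}{n-1}$ that $a$ and $b$ are paired---no $\epsilon$-limit, no side contributions to bound. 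Your instance works but carries unnecessary baggage; the paper's buys a one-line upper bound. Conversely, your choice has the minor narrative advantage of reusing the same hard instance throughout \Cref{sec:RAM}.
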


\begin{proof}
	We first show an upper bound for the performance of $\rmat$.
	Consider an instance given by $G = (N,w)$ such that there exist agents $a,b\in N$ with $w(a,b) = 1$, and $w(x,y) = 0$ if $\{x,y\}\neq \{a,b\}$. 
	Then the maximum weight matching has weight $1$, but the probability that $\rmat$ matches $a$ with $b$ is 
	\begin{equation}\label{eq:RMAcomp}
		\frac{\binom{2}{1}\binom{n-2}{n/2-1}}{\binom{n}{n/2}}\cdot \frac 2n = \frac n {2(n-1)}\cdot \frac 2n = \frac 1 {n-1} = \Theta\left(\frac 1n\right) \text,
	\end{equation}
	
	The first part of the product is the probability that $a$ and $b$ appear in different stages of the algorithm, and the factor of $\frac{2}{n}$ is the probability that they are matched conditioned on them arriving in different stages.
	For the latter, note that for a fixed agent in the second stage, an edge with each of the $\frac n2$ agents of the first stage is proposed with equal probability.
	Hence, the competitive ratio of $\rmat$ satisfies $\orderof{\frac 1n}$.
	
	We proceed with the proof of the lower bound of the performance guarantee.
	As before, in an execution of $\rmat$, every positive edge has a probability of $\frac 1{n-1}$ to contribute to the computed matching.
	Let $E^+ = \left\{e\in {N \choose 2}\colon w(e) > 0\right\}$. 
	Hence, $\EV_\sigma[\rmat(G,\sigma)] = \sum_{e\in E^+} \frac 1 {n-1} w(e) \ge \frac 1{n-1} w(E^+)$.
	Since any matching is of weight at most $w(E^+)$, we conclude that
	\begin{equation*}
		c_\rmat \ge \frac 1{n-1} = \Omega\left(\frac 1n\right)\text.
	\end{equation*}	

	This concludes the proof.
\end{proof}

Notably, $\rmat$ is not sensitive to edge weights:
When an agent arrives in the second stage of the algorithm, it might have a very valuable option that it is not allowed to match with.
Hence, we now define a refined variant where the second stage is performed greedily, which we analyze in the coalition formation domain.
However, it is unclear why this would improve the algorithm's performance.
Indeed, an earlier greedy decision might lead to missing out on even more welfare later on.
We show that this effect does not lead to an overall welfare loss, and the obtained algorithm is at least as good as $\rmat$. 
It is natural to ask whether even the worst-case performance has improved.
Unfortunately, this is not the case.
While the performance is in fact better in the simple worst-case example from the proof of \Cref{prop:propRMAT}, it still achieves a competitive ratio of $\orderof{\frac 1n}$ in the family of instances from \Cref{fig:hard_instance}.

First, we again assume that $n$ is even and known to the algorithm in advance. 
However, after we have analyzed this algorithm, we will show that we can drop this additional assumption and that a variation of the algorithm still achieves the same competitive ratio asymptotically.

\begin{definition}[Waiting greedy algorithm]
	The \emph{waiting greedy algorithm} ($\wgdy$) is defined for known and even $n$. 
	It places the first $\frac{n}{2}$ agents in singleton coalitions. 
	Then, it assigns coalitions greedily for the remaining $\frac{n}{2}$ agents.
\end{definition}

As we said, the upper bound of the performance of $\wgdy$ is attained by the games depicted in \Cref{fig:hard_instance}.
The analysis is more involved and relies on investigating the distribution of the agents in the second phase.
We defer the proof of the worst-case behavior of $\wgdy$ to \Cref{app:UppBounds}, and restrict attention to its algorithmic guarantee.

\begin{restatable}{theorem}{WGDY}\label{thm:wgdy}
	The competitive ratio of $\wgdy$ for online coalition formation under random arrival with known and even~$n$ is $\Theta\left(\frac{1}{n}\right)$.
\end{restatable}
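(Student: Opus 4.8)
The plan is to prove the two bounds $c_{\wgdy}\in O(1/n)$ and $c_{\wgdy}\in\Omega(1/n)$ separately. For the \emph{upper} bound I would reuse the hard family $G^{k,\epsilon}$ from \Cref{fig:hard_instance} that already breaks $\gdy$: with probability bounded away from $0$ the agents $a$ and $b$ end up in opposite phases of $\wgdy$, and then the first agent of $X\cup Y$ served in the second phase tends to grab $a$ or $b$ before the other one arrives, so $\{a,b\}$ is formed only with vanishing probability. Making this precise requires a careful description of how $X$, $Y$, and $\{a,b\}$ are distributed between the two phases and of the greedy behaviour in the second phase, which is why I would defer it to \Cref{app:UppBounds} and only prove the algorithmic guarantee here.

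The bulk of the work is the \emph{lower} bound. The idea is to compare $\wgdy$ with the greedy matching algorithm $\rmat$ of \Cref{prop:propRMAT}. Since $\wgdy$ is not a matching algorithm, \Cref{lem:mat2cofo} does not apply to it directly; instead I would show that, in expectation over the arrival order, $\wgdy$ is no worse than $\rmat$, i.e., for every ASHG $G=(N,w)$ with $n$ even,
\[
\EV_\sigma\!\left[\SW[\wgdy(G,\sigma)]\right]\ \ge\ \EV_\sigma\!\left[\SW[\rmat(G,\sigma)]\right]
\]
(in fact any constant fraction of the right-hand side would already do, which leaves some slack). Given this, the lower bound follows from the computation inside the proof of \Cref{prop:propRMAT}: every positive edge $e$ contributes its weight to $\rmat$'s matching with probability exactly $\tfrac1{n-1}$, so $\EV_\sigma[\SW[\rmat(G,\sigma)]]=2\,\EV_\sigma[w(\rmat(G,\sigma))]=\tfrac{2}{n-1}\,w(E^+)$ with $E^+=\{e\in\binom N2:w(e)>0\}$. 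As every positive edge contributes to the utility of at most two agents and negative edges never help, $\SW[\pi^*(G)]\le 2\,w(E^+)$, exactly the bound used in \Cref{lem:mat2cofo}. Hence
\[
\frac{\EV_\sigma[\SW[\wgdy(G,\sigma)]]}{\SW[\pi^*(G)]}\ \ge\ \frac{2\,w(E^+)/(n-1)}{2\,w(E^+)}\ =\ \frac1{n-1}\ \in\ \Omega\!\left(\tfrac1n\right),
\]
with the convention $\tfrac00=1$ covering the case $w(E^+)=0$. Together with the upper bound this yields $c_{\wgdy}=\Theta(1/n)$.

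The main obstacle is the displayed expectation inequality, and I expect the bulk of the proof of \Cref{thm:wgdy} to go into it. It \emph{cannot} be argued pointwise: on some orders $\sigma$ the myopic second phase of $\wgdy$ is strictly worse than $\rmat$, because an agent arriving early in the second phase may greedily absorb a first-phase agent that a later arrival would have paired with far more profitably, and a large negative edge can then block merging them into one coalition. The observation that rescues the bound is that whenever $\wgdy$ ``blocks itself'' in this way it has already collected a gain: in the second phase the social welfare of $\wgdy$'s partition is non-decreasing, every coalition it builds contains at most one first-phase agent, and at each step the greedy gain is at least the gain of \emph{any} single available move --- in particular of the move that pairs the new agent with the first-phase agent that $\rmat$ would have used, as long as that agent is still a singleton in $\wgdy$'s current partition. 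Concretely, I would condition on the set $S$ of the first $\tfrac n2$ arrivals, with the second-phase order still uniform: then $\rmat$ pairs the second-phase agents with $S$ along a uniformly random perfect matching, so its conditional expected welfare equals $\tfrac{2}{|S|}\,w\!\big(E^+(S,N\setminus S)\big)$, the doubled positive weight across the cut scaled by $1/|S|$. For $\wgdy$ I would lower-bound its conditional expected welfare by tracking, for each positive crossing edge $\{s,t\}$, the probability that $s$ is still a singleton when $t$ arrives (at least the probability that $t$ is the first second-phase arrival) together with the fact that the greedy step at $t$ captures at least its \emph{best} available neighbour's weight rather than an average --- which is what prevents the estimate from degrading to $\Omega(1/n^2)$ --- and then amortize: the welfare forfeited on orders where $s$ has already been absorbed is recovered by the gain realized at the absorbing step. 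Turning this amortization into a clean inequality, i.e.\ reconciling the pointwise failure with the averaged statement, is the heart of the argument.
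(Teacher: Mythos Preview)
Your high-level plan coincides with the paper's: condition on the partition $(A,B)$ of agents into the two phases, show that $\wgdy$'s conditional expected welfare is at least a constant times $\tfrac{1}{n}\,w(E^+(A,B))$, then average over $(A,B)$ --- this is exactly your ``$\wgdy\ge$ constant${}\cdot\rmat$ in expectation''. The upper bound via the family of \Cref{fig:hard_instance} is also the paper's choice.

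Where your proposal stays vague is precisely the step you flag as ``the heart of the argument''. You sketch an amortization --- crediting the absorbing step for welfare forfeited when a first-phase agent $s$ has already been taken --- but you never specify to which edge such a credit is charged, nor why the same gain is not credited to several lost edges. The paper bypasses amortization with a direct pigeonhole. Since first-phase agents start as singletons and the online rule never merges two existing coalitions, every coalition built by $\wgdy$ contains at most one first-phase agent; hence after $i-1$ second-phase arrivals at most $i-1$ first-phase agents have left their singletons. So when the $i$th second-phase agent $b$ arrives, listing its positive first-phase neighbours $a_1(b),\dots,a_{r(b)}(b)\in A$ in decreasing weight order, at least one of $a_1(b),\dots,a_i(b)$ is still a singleton whenever $i\le r(b)$, and the greedy gain at $b$ is therefore at least $w(a_i(b),b)$. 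Summing over the uniform position of $b$ within $B$ and then over $b\in B$ gives
\[
\EV_\sigma\!\big[\SW[\wgdy(G,\sigma)]\mid J_{A,B}\big]\ \ge\ \sum_{b\in B}\frac{2}{n}\sum_{i=1}^{r(b)}w(a_i(b),b)\ =\ \frac{2}{n}\,w\!\big(E^+(A,B)\big),
\]
after which the averaging over $(A,B)$ and the bound $\SW[\pi^*]\le 2\,w(E^+)$ finish the job exactly as you wrote. The counting observation ``at most $i-1$ first-phase agents are already absorbed at step $i$'' is the missing idea; once you have it, no amortization and no detour through $\rmat$ is needed.
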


\begin{proof}[Proof of lower bound]
	We show that $\wgdy \in \Omega\left(\frac{1}{n}\right)$. 
	Consider an arbitrary ASHG given by $G = (N,w)$.
	Let $\Pi(n) = \left\{(A,B)\colon A\cup B = N, |A| = |B| = \frac n2\right\}$ be the set of partitions of the agent set $N$ into two equally-sized subsets.
	
	Let $(A,B)\in \Pi(n)$. 
	Define $$E^+(A,B) = \left\{e = \{a,b\}\in {N\choose 2}\colon w(e) > 0, a\in A, b\in B\right\}\text.$$
	
	We claim that if the agents in $A$ and $B$ arrive in the first and second stage, respectively, 
	then the weight of the obtained partition is at least 
	$\frac 1n w(E^+(A,B))$.
	Let $J_{A,B}$ denote the event that the partition $(A,B)$ realizes.
	
	Consider an arbitrary agent $b\in B$.
	Let $a_1(b),\dots, a_{r(b)}(b)\in A$ be the agents in $A$ such that $\{b,a_i(b)\}\in E^+(A,B)$, where $r(b)\in \mathbb N$ is their number. 
	Assume that $w(a_1(b),b) \ge w(a_2(b),b) \ge \dots \ge w(a_{r(b)}(b),b)$.
	Let $i\in [r(b)]$. 
	In the event $J_{A,B}$, an agent $b\in B$ arrives as the $(\frac n2+i)$th agent (i.e., the $i$th agent within~$B$) with a probability of $\frac 2n$. 
	Moreover, when $b$ arrives as the $(\frac n2+i)$th agent, then at most $i-1$ coalitions have formed so far, and, therefore, some agent in $\{a_1(b),\dots, a_i(b)\}$ is still in a singleton coalition. 
	Hence, the increase in weight caused by $b$ is at least the weight of forming a coalition with the worst partner in $\{a_1(b),\dots, a_i(b)\}$, that is, $w(a_i(b),b)$.
	Let $X_b$ be the random variable denoting the gain in social welfare caused by the arrival of $b$.
	
	Hence, in the event $J_{A,B}$, the attained expected social welfare satisfies 
	\begin{align*}
		\EV_\sigma&[\SW[\wgdy(G,\sigma)]\mid J_{A,B}] \ge \sum_{b\in B} \EV_\sigma[X_b\mid J_{A,B}]\\ 
		&= \sum_{b\in B} \sum_{i = 1}^{r(b)} \EV_\sigma\left[X_b\mid J_{A,B}, \sigma^{-1}(b) = \frac n2 + i\right]\pr_\sigma\left(\sigma^{-1}(b) = \frac n2 + i\mid J_{A,B}\right)\\
		& = \sum_{b\in B} \sum_{i = 1}^{r(b)} \EV_\sigma\left[X_b\mid J_{A,B}, \sigma^{-1}(b) = \frac n2 + i\right] \frac 2n\\
		& \ge  \sum_{b\in B} \sum_{i = 1}^{r(b)} w(a_i(b),b) \frac 2n = \frac 2n \sum_{b\in B}\sum_{e\in E^+(A,B)\colon b\in e} w(e) =  \frac 2n w(E^+(A,B))\text.
	\end{align*}
	
	Hence, the expected social welfare of the partition computed by $\wgdy$ is
	\begin{equation*}
		\EV_{\sigma}\left[\SW[\wgdy(G,\sigma)]\right] \ge \frac 1{|\Pi(n)|}\sum_{(A,B)\in \Pi(n)}\frac 2n w\left(E^+(A,B)\right)\text. 
	\end{equation*}
	
	We use that each partition in $\Pi(n)$ is realized with equal probability.
	
	Now, there are a total of $|\Pi(n)| = {n \choose \frac n2}$ subdivisions into $A$ and $B$.
	Moreover, an arbitrary edge is contained in ${2 \choose 2}{{n-2}\choose {\frac n2 -2}}$ of these subdivisions.
	Hence, we obtain that
	\begin{equation*}
		\frac 1{|\Pi(n)|}\sum_{(A,B)\in \Pi(n)}w\left(E^+(A,B)\right) = \frac{\binom{2}{1}\binom{n-2}{n/2-1}}{\binom{n}{n/2}} w(E^+)= \frac n {2(n-1)}w(E^+)\ge \frac 12w(E^+)\text.
	\end{equation*}

	We conclude that
	\begin{equation}\label{eq:lbWGDY}
	\EV_{\sigma}\left[\SW[\wgdy(G,\sigma)]\right] \ge \frac 12\frac2n w\left(E^+\right)\text.
	\end{equation}

	Additionally, the social welfare of any partition is at most $2 w(E^+)$. 
	Combining this with \Cref{eq:lbWGDY}, we obtain
	$$
		c_\wgdy \ge \frac{\frac{w\left(E^+\right)}{n}}{2w(E^+)} = \frac{1}{2n} \in \Omega\left(\frac{1}{n}\right)\text.\eqno\qedhere
	$$
\end{proof}

Next, we show that we can still use a variant of this algorithm even if we do not know~$n$. 
The idea is to repeatedly run $\wgdy$ on an exponentially growing number of agents. 
This idea can be used to transform any algorithm that is only defined for a known and even number of agents.
Therefore, we introduce the algorithm in a more general framework.

\begin{definition}[Iterated doubling algorithms]
	Let $\alg$ be any online coalition formation algorithm for known and even $n$.
	The \emph{iterated doubling variant} of $\alg$ ($\ialg$) proceeds as follows:
	It maintains a parameter $i$ that is set to $i = 0$ initially and increased by $1$ whenever the next $2^{i+1}$ agents have arrived.
	We refer to the time during which the counter is set to a certain value $i$ as the $i$th \emph{phase}.
	In the $i$th phase, $\ialg$ applies $\alg$ to the agents arriving in the $i$th phase assuming that $2^{i+1}$ agents arrive.
	
	Specifically, we refer to $\mathit{I}$-$\wgdy$ as the \emph{iterated waiting algorithm} ($\iew$).
\end{definition}

	Hence, in the $i$th phase, $\iew$ places $2^{i}$ agents in singleton coalitions and then assigns $2^{i}$ agents to coalitions with the previous $2^{i}$ agents greedily.
	Note that $\ialg$ may not finish the last phase.
	In this case, we run $\alg$ on a smaller number of agents, assuming that a larger number of agents would arrive.
	The important insight is that the performance of $\ialg$ can only be worse than that of $\alg$ by a constant factor.

\begin{lemma}\label{lem:doubling}
	Assume that $\alg$ is a $c$-competitive algorithm for online coalition formation under random arrival with known and even $n$.
	Then, $\ialg$ is $\frac c{32}$-competitive for online coalition formation under random arrival (with unknown $n$).
\end{lemma}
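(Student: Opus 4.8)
The plan is to analyze a single run of $\ialg$ by comparing the welfare it collects in one well-chosen phase against the welfare of the optimal offline partition. First I would fix an arbitrary instance $(G,\sigma)$ on $n$ agents with $\sigma$ uniformly random, and let $\pi^* = \pi^*(G)$ be an optimal partition. Since $\ialg$ runs $\alg$ independently on the agents arriving in phase $i$ (treating them as a full instance of size $2^{i+1}$), the welfare collected within phase $i$ — call it the \emph{internal welfare} of phase $i$ — is, conditioned on which agents land in phase $i$ and in what relative order (still uniform), at least $c$ times the optimal welfare of $G$ restricted to the phase-$i$ agents. The key structural point is that phase boundaries are deterministic in the arrival index: phase $i$ consists of exactly the agents arriving in positions $2^i, \dots, 2^{i+1}-1$ (for $i$ such that this window is nonempty), so the set of agents in phase $i$ is a uniformly random subset of $N$ of the appropriate size, and its internal arrival order is uniform.

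The heart of the argument is a counting/averaging step. I would pick the \emph{largest} complete phase, say phase $\ell$, which contains $2^\ell$ agents where $2^\ell \le n/2 < 2^{\ell+1}$ (so $2^\ell > n/4$). The phase-$\ell$ agent set $S$ is a uniformly random subset of $N$ of size $m := 2^\ell$. Now I would lower-bound $\EV_S[\SW[\pi^*(G[S])]]$ in terms of $\SW[\pi^*]$: restrict $\pi^*$ to $S$, i.e., consider $\pi^*[S]$. Each edge $e = \{x,y\}$ inside a coalition of $\pi^*$ survives in $\pi^*[S]$ iff both endpoints lie in $S$, which happens with probability $\frac{m}{n}\cdot\frac{m-1}{n-1} \ge \bigl(\frac{m-1}{n-1}\bigr)^2 \ge \bigl(\frac{n/4 - 1}{n-1}\bigr)^2$, which is $\Omega(1)$ and in fact at least roughly $\frac{1}{16}$ for $n$ large (one handles small $n$ separately, or absorbs it into the constant $32$). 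Hence $\EV_S[w(E^+(\pi^*) \cap \binom{S}{2})] \ge \frac{1}{16} w(E^+(\pi^*))$ where $E^+(\pi^*)$ is the set of positive-weight edges inside coalitions of $\pi^*$; and since $\pi^*[S]$ is available as a partition of $G[S]$, we get $\SW[\pi^*(G[S])] \ge \SW[\pi^*[S]] \ge 2 w(E^+(\pi^*) \cap \binom{S}{2})$ minus the negative contributions — actually cleaner: $\SW[\pi^*(G[S])] \ge \SW[\pi^*[S]]$ directly, and $\EV_S[\SW[\pi^*[S]]] \ge \frac{1}{16}\SW[\pi^*]$ by linearity of expectation over the coalitions' edges (each surviving edge keeps its weight, and survival probability is uniform across edges). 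Then $\EV_\sigma[\SW[\ialg(G,\sigma)]] \ge \EV_S[ c \cdot \SW[\pi^*(G[S])]] \ge \frac{c}{16}\SW[\pi^*]$, using that the internal welfare of a single phase is nonnegative in expectation — wait, that is not automatic, so this needs care: $\wgdy$-type algorithms can in principle produce negative-welfare coalitions, but $\alg$ being $c$-competitive with $c > 0$ forces $\EV[\text{internal welfare of phase }\ell \mid S] \ge c\,\SW[\pi^*(G[S])] \ge 0$. The welfare of $\ialg$'s final output is at least the internal welfare of phase $\ell$ minus nothing, since coalitions from different phases are disjoint and $\ialg$ never merges across phases — so total welfare $\ge$ sum of internal welfares $\ge$ internal welfare of phase $\ell$ (other phases contribute nonnegative expected internal welfare by the same argument). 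This gives a constant a bit better than $32$; the slack ($16$ versus $32$) comfortably absorbs the small-$n$ cases and the possibility that the last phase is incomplete.

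The main obstacle I expect is handling the boundary phenomena cleanly: (i) the last phase may be incomplete, in which case $\ialg$ runs $\alg$ on fewer than $2^{i+1}$ agents while $\alg$ "believes" more are coming — one must check that $\alg$'s competitive guarantee still yields something useful, which is exactly why the proof pivots to the largest \emph{complete} phase $\ell$ rather than the last phase; (ii) ensuring the relative arrival order within phase $\ell$ is genuinely uniform given $S$ (it is, by exchangeability of $\sigma$), so that $\alg$'s random-arrival guarantee applies verbatim; and (iii) the small-$n$ regime where $2^\ell > n/4$ gives a weak survival probability — here I would just note that for, say, $n \le 8$ the claim is trivial since any $c$-competitive algorithm with $c \le 1$ already satisfies $\frac{c}{32} \le \frac{1}{32}$ and a phase-$0$ pair recovers a constant fraction, or simply verify the constant $32$ was chosen loosely enough. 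None of these is deep; they are bookkeeping, and the factor $32$ in the statement is evidently generous precisely to avoid optimizing them.
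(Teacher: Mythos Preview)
Your proposal is essentially identical to the paper's proof: focus on the largest complete phase (whose agent set $J$ has size at least $n/4$ and is a uniformly random subset of $N$ with uniformly random internal order), lower-bound $\EV_J[\SW[\pi^*(G[J])]]$ via the edge-survival probability $\binom{|J|}{2}/\binom{n}{2}$ applied to the restricted partition $\pi^*[J]$, invoke $c$-competitiveness of $\alg$ on that phase, and treat small $n$ by a direct argument. The paper carries out exactly these steps and obtains the constant $\frac{1}{32}$ from $\frac{1}{16}\cdot\frac{n-4}{n-1}\ge \frac{1}{32}$ for $n\ge 7$, with $n\le 6$ handled separately.
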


\begin{proof}
	Assume that $\alg$ is a $c$-competitive algorithm for online coalition formation under random arrival with known and even $n$.
	Then $n\ge 2$ and, therefore, $\alg$ completes at least one phase.
	Let $i^*$ be the largest index such that $\ialg$ completes Phase~$i^*$.
	We claim that $2^{i^*+1}\ge \frac n4$.
	Assume for contradiction that $2^{i^*+1} < \frac n4$.
	Then, the number of agents that have arrived until the completion of iteration $i^*$ is $\sum_{i = 0}^{i^*}2^{i+1} \le 2 \cdot 2^{i^*+1} < \frac n2$.
	Hence, there are still $2^{i^*+2}$ agents left to complete another iteration, a contradiction.
	
	Let $J\subseteq N$ be the random subset of agents in the last completed iteration.
	We have just shown that
	\begin{equation}\label{eq:Jsize}
		|J| \ge \frac n4\text.
	\end{equation}
	
	Moreover, it holds that 
	\begin{align*}%
		&\EV_{\sigma\sim\Sigma(N)}\left[\SW[\ialg(G,\sigma)]\right]\\
		&\ge \EV_J\left[\EV_{\sigma\sim\Sigma(J)}\left[\SW[\alg(G{[J]},\sigma)]\right]\right]\\
		&\ge \EV_J[ c\ \SW[\pi^*({G[J]})]] = c\ \EV_J[\SW[\pi^*({G[J]})]]\text.
	\end{align*}
	
	There, $J$ is sampled uniformly at random from the set of agents.
	Our key insight is that the maximum objective value achieved by a random subset of agents is good compared to the maximum objective value for the whole game.
	Therefore, let $\pi^* := \pi^*(G)$ be a partition for $G$ achieving maximum welfare.
	Recall that, for a given subset $J\subseteq N$, $\pi^*[J]$ denotes the partition $\pi^*$ restricted to $J$.
	Moreover, define $E^* := \left\{\{u,v\}\in {N\choose 2}\colon u\in \pi^*(v)\right\}$, i.e., the pairs of agents that are in a joint coalition in $\pi^*$.
	Note that for every set $\{u,v\}\in E^*$, it holds that 
	
	$$\pr\left(\{v,w\}\subseteq J\right) =\frac{{{|J|}\choose 2}}{{n\choose 2}} \ge \frac{{{\frac n4}\choose 2}}{{n\choose 2}} = \frac 1{16} \frac{n-4}{n-1}\text.$$
	
	For the first equality, we use that every pair of agents occurs in $J$ with equal probability.
	In the inequality, we applied \Cref{eq:Jsize}.
		
	Assume now first that $n\ge 7$.
	Then, $ \frac 1{16} \frac{n-4}{n-1}\ge \frac 1{32}$, and it follows that
	\begin{align*}
		\EV_J[\SW[\pi^*({G[J]})]]&\ge \EV_J[\SW[{\pi^*[J]}]] = \sum_{\{u,v\}\in E^*}\pr\left(\{v,w\}\subseteq J\right)2 w(u,v)\\
		& \ge \sum_{\{u,v\}\in E^*} \frac 1{32}2 w(u,v) = \frac 1{32}\sum_{\{u,v\}\in E^*} 2 w(u,v) = \frac 1{32}\SW[\pi^*]\text.
	\end{align*}
	
	Together with our previous computations, this yields
	$\EV_{\sigma\sim\Sigma(N)}\left[\SW[\ialg(G,\sigma)]\right] \ge \frac c{32}\SW[\pi^*]$
	
	Now, consider the case where $n\le 6$.
	Note that $\alg$, when executed for instances where the information is given that $n = 2$, has to form a coalition of size~$2$ if the two arriving agents have a positive utility.
	Hence, in this case, $\EV_{\sigma\sim\Sigma(N)}\left[\SW[\ialg(G,\sigma)]\right]$ is at least twice the utility of a positive edge selected uniformly at random.
	We conclude that then 
	$$\EV_{\sigma\sim\Sigma(N)}\left[\SW[\ialg(G,\sigma)]\right]\ge \frac 1{15} \SW[\pi^*] \ge \frac c{15} \SW[\pi^*]
	\ge \frac c{32} \SW[\pi^*]\text.$$
	
	There, we have used that $c\le 1$.
	Hence, combining both cases, we conclude that 
	$$c_{\ialg} \ge \frac c{32}\text.\eqno\qedhere$$
\end{proof}

The lemma implies that the competitive ratio of $\iew$ satisifies $\Omega\left(\frac{1}{n}\right)$.
We also obtain a matching upper bound by once again using the game in \Cref{fig:hard_instance}.
The analysis is even more involved as in \Cref{thm:wgdy} because we need to consider dependencies of the distributions of the agents between the phases of $\iew$.
We defer the analysis to \Cref{app:UppBounds}.

\begin{restatable}{theorem}{thmIEW}\label{thm:IEW}
	$\iew$ has a competitive ratio of $\Theta\left(\frac{1}{n}\right)$.
\end{restatable}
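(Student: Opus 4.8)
\textbf{Proof plan for \Cref{thm:IEW}.}
The lower bound $c_{\iew}\in\Omega\!\left(\frac1n\right)$ is already essentially in hand: $\wgdy$ is $\Theta\!\left(\frac1n\right)$-competitive by \Cref{thm:wgdy}, and \Cref{lem:doubling} turns any $c$-competitive algorithm for known and even $n$ into a $\frac{c}{32}$-competitive algorithm without knowledge of $n$, so applying it with $\alg=\wgdy$ gives $c_{\iew}\ge\frac{1}{64n}\in\Omega\!\left(\frac1n\right)$. Hence the entire work is in the matching upper bound $c_{\iew}\in\orderof{\frac1n}$, which the text says is witnessed by the family $G^{k,\epsilon}$ from \Cref{fig:hard_instance}. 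The plan is to reuse the reduction from the proof of \Cref{thm:compGDY}: for $\epsilon\to 0$ the expected welfare of $\iew$ on $G^{k,\epsilon}$ is at most $2\,\pr_\sigma(\{a,b\}\in\iew(G^{k,\epsilon},\sigma))$ (plus an $O(\epsilon)$ term that vanishes, since the only partitions $\wgdy$ — and hence $\iew$ — ever forms in the limit consist of $\{a,b\}$ or at most two edges $\{x^*,a\},\{y^*,b\}$, with everything else singletons), while $\SW[\pi^*]=2$. So it suffices to show $\pr_\sigma(\{a,b\}\in\iew(G^{k,\epsilon},\sigma))\in\orderof{\frac1k}$.

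First I would set up the phase structure: condition on $i^\star$, the index of the phase in which the later of $a,b$ arrives (if $a,b$ fall in different phases, $\{a,b\}$ can never form because $\iew$ only matches within a single phase, so I may assume both are in phase $i^\star$). Inside phase $i^\star$ there are $2^{i^\star+1}$ agents, split into a waiting half and a greedy half; for $\{a,b\}$ to form, one of $a,b$ must lie in the waiting half, the other in the greedy half, and — crucially — between the moment the greedy-half partner arrives and that moment, no agent of $X$ (if the waiting agent is $a$) or $Y$ (if it is $b$) may have already occupied the waiting partner, and more simply no agent adjacent to the waiting partner with positive weight may sit in the waiting half ahead of it in a way that gets consumed. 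The clean sufficient-and-necessary description, mirroring \Cref{thm:compGDY}, is: $\{a,b\}$ forms iff $a$ and $b$ are in the same phase, on opposite halves, and all agents landing in the waiting half alongside the waiting one of $\{a,b\}$ together with all greedy-half agents arriving before the greedy one of $\{a,b\}$ come only from the ``harmless'' side (i.e.\ if $a$ waits, every such agent is in $Y\cup\{b\}$; symmetric if $b$ waits). I would then bound the probability of this event by a product over phases of the relevant hypergeometric-type factors, exactly as in \Cref{thm:compGDY} but now with the phase sizes $2^{i^\star}$ playing the role of $k$ there.

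The main obstacle — flagged in the text — is that the distribution of which agents land in phase $i^\star$ and on which half is \emph{not} independent across phases: earlier phases "use up" random slots, so conditioning on $a,b$ both being in phase $i^\star$ skews the composition of that phase. I would handle this by first conditioning on the full unordered partition of $N$ into phase-blocks $(J_0,J_1,\dots)$ (each realized with the appropriate multinomial probability, and each $J_i$ of size $2^{i+1}$ except possibly the last), then noting that, conditioned on this partition, the order \emph{within} each block is uniform and independent across blocks; so once we also condition on $\{a,b\}\subseteq J_{i^\star}$ and on how many of $X,Y$ land in $J_{i^\star}$, the within-phase analysis is exactly the one from \Cref{thm:compGDY} applied to a game with $|J_{i^\star}|/2$ playing the role of $k$. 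Summing over the possible $i^\star$ and over the compositions, and using $|J_{i^\star}|\ge \frac n4$ in the "good'' case $i^\star=i^\star_{\max}$ (\Cref{eq:Jsize}) together with the fact that in every case the relevant phase has at least a constant fraction of the agents of $X$ and $Y$ with good probability, each term contributes $\orderof{\frac1{|J_{i^\star}|}}=\orderof{\frac1n}$, and the number of phases is only $O(\log n)$ — but to avoid even the $\log n$ loss I would instead bound the total by noticing that the dominant contribution comes from the last phase and that the earlier-phase contributions form a geometric-type series in $2^{-i^\star}$ that sums to $\orderof{\frac1n}$. A quick sanity check: if $a$ and $b$ both land in an early, small phase $J_i$ with $|J_i|=2^{i+1}$ but $X,Y$ are spread over later phases, then $\{a,b\}$ forms with probability $\Theta(1)$ conditioned on that event — however, the probability that $a,b$ both land in a fixed small block while almost no $x\in X$ does is itself $\orderof{\frac1n}$, so this does not break the bound; I would make this trade-off precise. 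Combining the $\orderof{\frac1n}$ upper bound with the $\Omega\!\left(\frac1n\right)$ lower bound gives $c_{\iew}\in\Theta\!\left(\frac1n\right)$.
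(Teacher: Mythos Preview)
Your lower bound is fine and matches the paper exactly: \Cref{thm:wgdy} plus \Cref{lem:doubling} gives $c_{\iew}\in\Omega(1/n)$.

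The upper-bound plan has a genuine gap. Your ``clean sufficient-and-necessary description'' asserts that $\{a,b\}$ forms only when $a$ and $b$ lie in \emph{opposite} halves of the same phase. That is false here. In the proof of \Cref{thm:wgdy} the analogous statement was true only because of a pigeonhole argument: with $k+1$ agents in the waiting half, at least one $x\in X$ and one $y\in Y$ are guaranteed to be there, so whichever of $a,b$ arrives first in the greedy half is immediately captured. Inside a phase of $\iew$ with only $2^{i+1}\ll n$ agents, the waiting half may contain, say, only agents from $Y$; then if $a$ arrives in the greedy half it stays singleton (all waiting agents have weight $-1$ to $a$), and when $b$ arrives next in the greedy half it matches $a$. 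The paper explicitly accounts for this ``both in the second half'' case and disposes of it by an injective mapping of arrival orders into the opposite-halves case, so that the latter bounds everything up to a factor of two.

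A second, related problem is your reduction ``the within-phase analysis is exactly the one from \Cref{thm:compGDY} with $|J_{i^\star}|/2$ playing the role of $k$.'' Even setting aside that the within-phase algorithm is $\wgdy$, the subgame induced on $J_{i^\star}$ is \emph{not} an instance of the family $G^{k',\epsilon}$: the numbers of $X$- and $Y$-agents landing in $J_{i^\star}$ are random and typically unequal, and it is exactly this imbalance that makes the ``both in second half'' case nontrivial. The paper therefore cannot recycle the earlier theorem; instead it conditions on the number $j$ of $Y$-agents in the greedy half, computes the hypergeometric probability directly, and bounds the resulting sum by $2/2^i$ via a dedicated binomial inequality (their \Cref{lem:crazy}). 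Summing $\frac{2^{2i+2}}{n(n-1)}\cdot\frac{2}{2^i}$ over phases then gives the $O(1/n)$ bound. Your sketch would need both the injective-mapping step and this direct hypergeometric estimate to go through.
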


In conclusion, we have shown that we can achieve a competitive ratio of $\Theta\left(\frac{1}{n}\right)$ for coalition formation under random arrival without further restrictions like using randomization or knowing the number of agents.

\subsection{Obstacles for Optimal Algorithms}
In the second part of this section, we will discuss obstacles for algorithms with an optimal competitive ratio.
First, we show that no randomized algorithm beats a competitive ratio of $\frac 13$.
It is an interesting open question whether $\iew$ achieves an asymptotically optimal competitive ratio.
We conjecture that it is at least true that no algorithm achieves a constant competitive ratio.

Our proof relies on establishing a connection to the matching domain.
In this domain, there exists no $ c$-competitive algorithm with $c > \frac 13$ if the number of agents is unknown to the algorithm \citep{BRS25a}.
Interestingly, this result holds on a restricted domain of games, where positive weights are sparse:
A symmetric ASHG is said to belong to the \emph{tree domain} if every connected component of the edges with positive weight in the
associated undirected graph forms a tree,
and every edge with negative weight is larger in absolute value than
the sum of all positive edge weights.

\begin{theorem}[\citep{BRS25a}]\label{thm:negativeresult_MWMrand}
	In the random arrival model,  no online matching algorithm\footnote{The theorem by \citet{BRS25a} even holds for randomized algorithms, which would entail that we cannot even beat a competitive ratio of $\frac 13$ for online ASHGs when using randomization.
	For consistency with our remaining paper, we only state the result for deterministic algorithms.} has a competitive ratio of more than
	$\frac{1}{3}$ on the tree domain.
\end{theorem}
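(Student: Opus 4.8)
The plan is to prove this impossibility through a distribution over hard instances. Since the theorem is stated for deterministic algorithms, it suffices to exhibit a distribution $\mathcal D$ over games in the tree domain, all sharing essentially the same optimal matching weight $V$, such that every deterministic online matching algorithm $\alg$ satisfies $\EV_{G\sim\mathcal D}\EV_\sigma[w(\alg(G,\sigma))]\le \tfrac13 V$; an averaging step then produces a single game on which $\alg$ is at best $\tfrac13$-competitive, and the same construction combined with Yao's principle handles randomized algorithms (as noted in the footnote). A first useful observation is that on the tree domain the negative edges are \emph{inert}: each of them is heavier in absolute value than the total positive weight, so no maximum-weight matching uses one and, up to relabelling the algorithm's choices, we may assume $\alg$ never forms a coalition along a negative edge. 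The problem therefore reduces to lower-bounding online maximum-weight matching on a forest with uniformly random vertex arrival and unknown $n$.

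Next I would take $\mathcal D$ to be built from a single path with geometrically growing edge weights, truncated at a random length. Fix a large $r>1$; for a length parameter $m$ consider the path $v_0 - v_1 - \dots - v_m$ with $w(v_{i-1},v_i)=r^{\,i-m}$ (so the heaviest edge $\{v_{m-1},v_m\}$ has weight $1$) and all non-consecutive pairs joined by an edge of weight $-10$, which keeps every such game in the tree domain since the positive weights sum to less than $2$. The optimal matching then has weight close to $V=1/(1-r^{-2})$, essentially independent of $m$. Let $\mathcal D$ draw $m$ from a truncated geometric law chosen so that the posterior probability of ``this path ends at the level currently visible'' stays bounded away from $0$. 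The intuition is a secretary-type dilemma with an adjacency twist: whenever $\alg$ commits $v_{i-1}$ or $v_i$ to the edge of weight $r^{\,i-m}$, it permanently forfeits the use of that vertex for the adjacent, potentially much heavier edge $\{v_i,v_{i+1}\}$, yet it cannot tell whether $v_{i+1}$ exists (whether $m>i$), while refusing to commit risks ending with an empty matching on the short realisations.

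For the analysis I would condition on the relative arrival order of the last few path-vertices $v_{m-2},v_{m-1},v_m$ and on the realised truncation level, and track the information available to $\alg$ at each decision point. Because every edge other than $\{v_{m-1},v_m\}$ has weight at most $r^{-1}=O(V/r)$, on a realisation of length $m$ the algorithm can only be better than $\tfrac1r$-competitive if $\{v_{m-1},v_m\}$ lies in its matching, which in particular forces that at no earlier time did it match $v_{m-1}$ to $v_{m-2}$. Partitioning arrival orders into those for which a fixed deterministic strategy \emph{can} still secure the top edge and those for which its earlier commitments provably destroy it, one shows that the ``good'' event has probability at most $\tfrac13+o(1)$; the geometric tail of the lighter edges contributes only an additional $O(1/r)$. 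Letting $r\to\infty$ and tuning the truncated geometric law then pushes the ratio down to exactly $\tfrac13$.

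The step I expect to be the main obstacle is making the ``commit-or-wait'' trade-off rigorous against \emph{every} deterministic algorithm at once: such an algorithm is adaptive, and its decision at the moment $v_{m-1}$ first meets a neighbour may depend intricately on the entire arrival history, including the light part of the path and the many negative-edge vertices. One needs either a clean description of the algorithm's information state---precisely which truncation levels remain indistinguishable given the order seen so far---or a charging argument that attributes the welfare lost on each realisation to an unavoidable earlier decision, and only this turns the heuristic ``probability $\le\tfrac13$'' into a proof. A secondary technical point is choosing the negative weights and the distribution over $m$ so that all instances simultaneously lie in the tree domain, share the normalised optimum $V$, and still leave enough residual uncertainty to force the $\tfrac13$ factor rather than something larger.
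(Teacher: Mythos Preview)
This theorem is not proved in the paper: it is stated with citation to \citet{BRS25a} and used as a black box to derive the subsequent proposition. There is therefore no ``paper's own proof'' to compare your attempt against.

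As for the attempt itself, the high-level strategy---a distribution over path instances with geometrically growing weights and a random truncation, exploiting that the algorithm cannot tell whether the current heaviest edge is terminal---is a natural one for this kind of impossibility. However, what you have written is a plan rather than a proof. The decisive step, namely that the probability the algorithm secures the top edge is at most $\tfrac13+o(1)$, is asserted but not argued; you yourself flag this as ``the main obstacle.'' Everything hinges on that computation, and nothing in the proposal indicates how the $\tfrac13$ (as opposed to, say, $\tfrac12$ or $\tfrac1e$) actually emerges from the interaction between the truncated-geometric length distribution and the uniformly random arrival order. Until that calculation is carried out---with a concrete choice of the truncation law and an explicit accounting of the algorithm's information state when it first sees both endpoints of a candidate edge---the argument does not establish the claimed bound.

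A smaller but real issue: your averaging step requires all instances in the support of $\mathcal D$ to have (approximately) the same optimum $V$, yet short realisations of the path have optimum strictly below $1/(1-r^{-2})$. You would need to check that the contribution of those short realisations does not rescue the algorithm's ratio, or else normalise the instances differently.
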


However, we can show that this immediately entails an impossibility result for online coalition formation algorithms.
We remark that \citet{BRS25a} prove an analogous result for online fractional hedonic games.

\begin{proposition}\label{prop:treeinherit}
	Consider the random arrival model and let $c\le 1$. 
	Assume that no $c$-competitive online matching algorithm exists for the tree domain.
	Then there exists no $c$-competitive online coalition formation algorithm.
\end{proposition}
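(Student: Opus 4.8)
The plan is to prove the contrapositive by showing that any $c$-competitive online coalition formation algorithm can be converted into a $c$-competitive online matching algorithm on the tree domain. The key observation is that on the tree domain, the structure of optimal partitions is essentially matching-like: because every negative edge has absolute weight exceeding the sum of all positive edge weights, any coalition of size at least~$3$ must contain a negative edge and hence has negative social welfare, so a welfare-optimal partition consists only of singletons and pairs, i.e., it is a matching. Therefore, for games $G$ in the tree domain, $\SW[\pi^*(G)]$ equals $2 w(\genmat^*)$ where $\genmat^*$ is a maximum weight matching.

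First I would take an arbitrary online coalition formation algorithm $\alg$ that is $c$-competitive and feed it instances $(G,\sigma)$ with $G$ in the tree domain. The algorithm produces a partition $\pi_n$ which need not be a matching. The conversion step is to post-process $\pi_n$ into a matching $\genmat$ by, within each coalition $C \in \pi_n$, keeping a maximum weight matching on $C$ and making all remaining agents singletons; equivalently one can drop every coalition of size $\ge 3$ to singletons and keep size-$2$ coalitions as they are. The point is that for any coalition $C$ in a tree-domain game, $w(\genmat[C]) \ge \frac{1}{2}\SW[C]$: if $|C| \le 2$ this is immediate, and if $|C| \ge 3$ then $\SW[C] < 0 \le w(\genmat[C])$ since the coalition contains a negative edge so heavy it swamps all positive weight. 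Summing over coalitions gives $w(\genmat) \ge \frac{1}{2}\SW[\alg(G,\sigma)]$, hence $\SW[\genmat] = 2w(\genmat) \ge \SW[\alg(G,\sigma)]$.

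Next I would check that this post-processing can be carried out online: at each step we only need to produce the current matching from the current partition, and "dissolve coalitions of size $\ge 3$ into singletons, keep a max-weight matching inside each surviving coalition" is a well-defined function of the current partition, so the resulting matching algorithm is a legitimate online matching algorithm. One must be slightly careful that the induced sequence of matchings is itself a valid online matching sequence (each matching extends the previous one appropriately) — but since we are only asked to contradict the existence of a $c$-competitive matching algorithm, and competitiveness is a statement about the final output, it suffices that for every $(G,\sigma)$ the final matching $\genmat$ satisfies the weight bound; if one wants a bona fide online matching algorithm in the strict sense, one can instead simply run $\alg$ and at the end report the post-processed matching, noting the model allows reading $G_i$ at each step.

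Putting it together: for $G$ in the tree domain,
\[
\frac{\SW[\genmat]}{\SW[\pi^*(G)]} \ge \frac{\SW[\alg(G,\sigma)]}{\SW[\pi^*(G)]} \ge c,
\]
where the last inequality uses $c$-competitiveness of $\alg$ (both in the worst-case and the random-arrival version, taking expectations over $\sigma$). Since $\pi^*(G)$ is itself a matching on the tree domain, $\SW[\pi^*(G)]$ equals twice the maximum matching weight, so $\genmat$ witnesses a $c$-competitive online matching algorithm on the tree domain, contradicting the hypothesis. The main obstacle I anticipate is making the "coalitions of size $\ge 3$ have negative welfare" argument fully rigorous from the tree-domain definition — in particular verifying that a size-$\ge 3$ coalition must contain at least one negative edge (true since positive edges form a forest, so any set of $\ge 3$ vertices spans at least one non-positive pair, and a zero-weight pair does not arise under strictly negative non-tree edges) — and handling the bookkeeping so that the constructed matching legitimately fits the online matching framework rather than just being a valid offline matching with the right weight.
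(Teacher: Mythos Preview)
Your overall strategy---prove the contrapositive, use that on the tree domain every coalition of size at least~$3$ carries a negative edge whose absolute value exceeds all positive weight, and convert a $c$-competitive coalition formation algorithm into a $c$-competitive matching algorithm on that domain---is the same as the paper's. But your construction of the matching algorithm has a real gap, and it is exactly the point you flag and then wave away.

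An online matching algorithm must produce a matching $\mu_i$ at \emph{every} step with $\mu_i\in\avai^S(\mu_{i-1},\sigma(i))$, i.e., $\mu_i-\sigma(i)=\mu_{i-1}$. Your ``post-process at each step'' rule fails this: if $\alg$ has $\{a,b\}$ matched at step $t-1$ and at step $t$ adds the arriving agent $c$ to form $\{a,b,c\}$, then your post-processing yields either $\{\{a\},\{b\},\{c\}\}$ or possibly $\{\{a,c\},\{b\}\}$, and in either case $\mu_t-c\neq\mu_{t-1}$. Your first escape (``competitiveness is about the final output, so it suffices that the final matching has the right weight'') is incorrect: to contradict the hypothesis you must exhibit an \emph{online} matching algorithm, not merely a good matching. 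Your second escape (``run $\alg$ and at the end report the post-processed matching'') is not an online matching algorithm either, since the intermediate outputs are not matchings.

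The paper repairs this by having $\alg'$ \emph{simulate} $\alg$: as long as $\alg$'s current partition is a matching, $\alg'$ copies it; the first time $\alg$ would create a coalition of size $\ge 3$, $\alg'$ instead places the arriving agent (and every subsequent agent) in a singleton. This is a legitimate online matching sequence because each step either mirrors $\alg$'s matching move or adds a singleton. The welfare comparison then uses that in the standard model coalitions only grow, so once $\alg$ has a size-$\ge 3$ coalition it keeps one in its final partition, and the hugely negative edge inside it drags $\alg$'s final welfare below what $\alg'$ froze at. Your post-processing inequality $\SW[\mu]\ge\SW[\alg(G,\sigma)]$ is fine; what is missing is a valid online object to attach it to.
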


\begin{proof}
	Let $c\le 1$. 
	Assume that we are given a $c$-competitive online coalition formation algorithm $\alg$.
	We construct a $c$-competitive online matching algorithm $\alg'$ on the tree domain that never forms a coalition of size~$3$ or more.
	To this end, let $\alg'$ simulate $\alg$, i.e., whenever a new agent and her valuations are revealed to $\alg'$, it feeds the same input to $\alg$.
	Then, $\alg'$ observes the output of $\alg$.
	If the output is a matching, $\alg'$ proceeds with this partition.
	However, once a coalition of size at least~$3$ is formed, $\alg'$ forms a singleton coalition instead and puts all remaining agents into singleton coalitions. 
	Clearly, $\alg'$ returns matchings and, therefore, is a matching algorithm.
	
	Moreover, on the tree domain, $\alg'$ achieves at least as high expected welfare as $\alg$.
	Indeed, on this domain, every coalition of size at least~$3$ contains an agent pairs such that their negative weight is less than all positive weights of the whole partition.
	Hence, by avoiding partitions with such coalitions, the expected welfare increases.
	Note that $\alg'$ operates like $\alg$ while not forming coalitions of size at least~$3$, and, therefore, achieves the same expected welfare for such arrival orders.
	Thus, $\alg'$ is $c$-competitive on the tree domain against all possible partitions and, therefore, in particular, against all matchings.
\end{proof}

Note that \Cref{prop:treeinherit} also holds for other models of online coalition formation, such as the standard model with adversarial arrival or the free dissolution model. 
We omit a deeper discussion because we only apply it under random arrival.
Combining \Cref{thm:negativeresult_MWMrand} and \Cref{prop:treeinherit}, we obtain the following result.

\begin{proposition}
	In the random arrival model, no online coalition formation algorithm has a competitive ratio of more than~$\frac 13$.
\end{proposition}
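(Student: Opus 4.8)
The plan is to combine the two results that immediately precede this statement. By \Cref{thm:negativeresult_MWMrand}, in the random arrival model no online matching algorithm has a competitive ratio exceeding $\frac 13$ on the tree domain. In particular, there is no $c$-competitive online matching algorithm for the tree domain whenever $c > \frac 13$. By \Cref{prop:treeinherit}, the nonexistence of a $c$-competitive online matching algorithm on the tree domain (for $c \le 1$) entails the nonexistence of a $c$-competitive online coalition formation algorithm. Hence, for every $c$ with $\frac 13 < c \le 1$, no online coalition formation algorithm is $c$-competitive.

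The remaining work is just to phrase this as a bound on the competitive ratio itself. Suppose, for contradiction, that some online coalition formation algorithm $\alg$ had competitive ratio $c_\alg > \frac 13$. Since the competitive ratio is always at most~$1$, we have $\frac 13 < c_\alg \le 1$. Pick any $c$ with $\frac 13 < c < c_\alg$; then $\alg$ is $c$-competitive, contradicting the conclusion of the previous paragraph applied to this $c$. Therefore $c_\alg \le \frac 13$, which is exactly the claim.

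I do not expect any real obstacle here: the proposition is a direct corollary, and the only thing to be careful about is the edge between ``$\alg$ has competitive ratio $> \frac 13$'' and ``$\alg$ is $c$-competitive for some $c > \frac 13$'', which is handled by the definition of the competitive ratio as a supremum together with the fact that $c$-competitiveness is monotone in $c$ (a $c$-competitive algorithm is $c'$-competitive for every $c' \le c$). One could also simply cite \Cref{prop:treeinherit} with the contrapositive reading and invoke \Cref{thm:negativeresult_MWMrand} for the threshold $\frac 13$, stating that any competitive ratio strictly above $\frac 13$ is ruled out and hence the competitive ratio is at most $\frac 13$.
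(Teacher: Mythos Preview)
Your proposal is correct and matches the paper's approach: the paper simply states the proposition as the immediate combination of \Cref{thm:negativeresult_MWMrand} and \Cref{prop:treeinherit}, without writing out any further details. Your additional care regarding the supremum definition of the competitive ratio is fine but not needed in the paper's presentation.
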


Finally, we will discuss general obstacles to finding classes of instances on which all algorithms perform poorly.
Interestingly, the performance of $\gdy$, $\wgdy$, and $\iew$ is bounded by the same class of instances, namely the instances illustrated in \Cref{fig:hard_instance}.
This raises the question whether the competitive ratio of any algorithm is $\Theta\left(\frac 1n\right)$ on these instances.
Our following result shows that this is not the case.
Indeed, for small enough $\epsilon$, all instances in this family have the property that there exists a unique highly valuable edge, whose weight achieves a constant fraction of the maximum social welfare. 
However, on instances of this type, we can use an optimal stopping algorithm to achieve a good competitive ratio.
To this end, we show how to apply the odds algorithm \citep{Brus00a}.

We only sketch the most important insights here and defer the details to \Cref{ap:odds}.
The goal is to design an algorithm that matches the maximum weight edge with high probability.
The idea of stopping algorithms is to only perform a decision, i.e., in our case to form a non-singleton coalition, once a certain stopping criterion is reached.
Usually, this requires knowledge of the number of agents~$n$. 

However, we can apply \Cref{lem:doubling} to obtain an algorithm for unknown~$n$ whose competitive ratio is only worse by a constant factor.
This has the effect that we are running a stopping algorithm in phases of an increasing number of agents.
Note that this does not have any implications on optimal stopping algorithms for unknown~$n$, which do not exist without further assumptions \citep[Chapter~4]{Brus00a}.
By constrast, in our iterated doubling variant, we may stop and match a good edge several times, which is not allowed in a stopping algorithm.

Let us briefly discuss the stopping conditions of our algorithm.
A stopping algorithm aims at achieving a success event among independent events.
As the $k$th event, we consider the event that the maximum weight edge connected to the $k$th agent is strictly larger than all edges among the first $k-1$ agents.
By the odds theorem  \cite[Theorem~1]{Brus00a}, we find a condition to stop at the last successful event, which in our case would reveal the maximum weight edge.
Since this condition is achieved with constant probability, this allows us to match the maximum weight edge with constant probability.

\begin{restatable}{theorem}{PropMaxe}\label{prop:maxe}
	Let $I$ be a set of ASHGs and $\lambda \in (0, 1]$ a constant such that for each ASHG in $I$ given by $G = (N,w)$, there is a unique maximum weight edge $e_{\max}$ with weight $w(e_{\max}) \ge \lambda \cdot \SW[\pi^*(G)]$, where $\pi^*(G)$ maximizes social welfare. 
	Then there exists an online coalition formation algorithm $\alg$ with $c_\alg \in \Theta\left(1\right)$ on $I$ in the random arrival model.
\end{restatable}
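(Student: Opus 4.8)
The plan is to build an optimal stopping algorithm that matches the unique heaviest edge $e_{\max}$ with constant probability, and then to lift it to unknown $n$ via the iterated doubling construction of \Cref{lem:doubling}.

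First I would handle a known (and, without loss, even) number of agents. For an arrival order $\sigma$ and $k\ge 2$, call step $k$ a \emph{record} if the heaviest edge among $\sigma(1),\dots,\sigma(k)$ is incident to the newest agent $\sigma(k)$ (breaking weight ties by a fixed rule). The structural point is that the \emph{last} record, if it falls at a position $\ge 3$, reveals $e_{\max}$: at the last record $k^*$ the heaviest edge among the first $k^*$ agents is incident to $\sigma(k^*)$, and since no later record occurs it equals the heaviest edge of the whole graph, i.e.\ $e_{\max}$; hence when $\sigma(k^*)$ arrives, the best edge it offers to the past is exactly $e_{\max}$. So it suffices to stop at the last record. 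I would then verify that the record events $E_3,\dots,E_n$ are independent with $\pr[E_k]=\tfrac2k$: revealing the permutation from position $n$ downwards, $E_k$ becomes measurable once positions $k,\dots,n$ are fixed, and $\pr[E_k\mid \text{positions }k+1,\dots,n]=\tfrac2k$ since, given the set of the first $k$ arrivals, the $k$-th arrival is uniform among them; a standard filtration argument upgrades this to full independence. With $p_k=\tfrac2k$ the odds are $r_k=\tfrac2{k-2}$ and $\sum_{k=3}^n r_k=2H_{n-2}\ge 1$, so the odds theorem \citep[Theorem~1]{Brus00a} yields a threshold $t^*$ such that ``stop at the first record at position $\ge t^*$'' coincides with the last record with probability at least $\tfrac1e$. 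The algorithm $\alg_0$ for known $n$ runs this rule; upon stopping it forms the size-$2$ coalition of $\sigma(k^*)$ with its best earlier partner provided that edge is positive, and otherwise (and if it never stops) it leaves everyone in singletons. Then $\SW[\alg_0(G,\sigma)]\ge 0$ always, and in the event that $\alg_0$ stops at the last record we have $\SW[\alg_0(G,\sigma)]=2\max\{0,w(e_{\max})\}$. Therefore $\EV_\sigma[\SW[\alg_0(G,\sigma)]]\ge \tfrac1e\cdot 2\max\{0,w(e_{\max}(G))\}$ for \emph{every} ASHG $G$ on at least two agents (with $e_{\max}$ read as any tie-broken heaviest edge), not merely those in $I$.

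Next I would obtain the final algorithm $\alg$ by running $\alg_0$ in the iterated-doubling fashion of $\ialg$, restarting it on each block of $2^{i+1}$ freshly arriving agents (each block size is even, so $\alg_0$ is always called validly). Since coalitions formed in distinct blocks are vertex-disjoint and each block contributes non-negative welfare, the total welfare is at least that produced in the last completed block. Reusing the averaging argument from the proof of \Cref{lem:doubling}: the agent set $J$ of the last completed block has $|J|\ge n/4$ and internally uniform order, $e_{\max}(G)$ has both endpoints in $J$ with probability $\binom{|J|}{2}/\binom n2\ge\tfrac1{16}\cdot\tfrac{n-4}{n-1}\ge\tfrac1{32}$ for $n\ge 7$, and whenever $e_{\max}(G)\subseteq J$ it is still the heaviest edge of $G[J]$. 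Plugging in the guarantee for $\alg_0$ on $G[J]$ and then $w(e_{\max}(G))\ge\lambda\,\SW[\pi^*(G)]$ gives, for $n\ge 7$ and $\SW[\pi^*(G)]>0$,
\[
\EV_\sigma[\SW[\alg(G,\sigma)]]\ \ge\ \tfrac1{32}\cdot\tfrac1e\cdot 2\,w(e_{\max}(G))\ \ge\ \tfrac{\lambda}{16e}\,\SW[\pi^*(G)].
\]
The remaining cases are routine: if $\SW[\pi^*(G)]=0$ then $w(e_{\max}(G))\le 0$, so $G$ has no positive edge, $\alg$ forms only singletons, and the ratio is $1$; and for the finitely many $n\le 6$ one checks directly that the first block already matches $e_{\max}(G)$ with probability at least $1/\binom n2\ge 1/15$, again a constant fraction of $\SW[\pi^*(G)]$. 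Hence $c_\alg\ge\tfrac{\lambda}{16e}$, and since competitive ratios are at most $1$ this gives $c_\alg\in\Theta(1)$ on $I$.

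The main obstacle is making the optimal-stopping step airtight in the coalition-formation setting: one must (i) phrase the ``record'' events so that the last record pins down $e_{\max}$ itself rather than merely some heavy-but-suboptimal edge, and (ii) prove their mutual independence, which is precisely what licenses the odds theorem. Because the per-step ``value'' a newly arrived agent sees (its best edge to the past) is weight-dependent and not exchangeable --- later agents have more potential partners --- the clean claim ``step $k$ is a record with probability $\tfrac2k$, independently'' needs the backward-revelation/filtration argument above rather than a one-line reduction to the secretary problem. A secondary subtlety is that \Cref{lem:doubling} is phrased for algorithms competitive on \emph{all} games, whereas $\alg_0$ is only competitive relative to the benchmark $2\max\{0,w(e_{\max})\}$; this is why I would re-run (rather than cite) the short survival-probability computation in its proof, now tracking the single edge $e_{\max}$ instead of an optimal partition.
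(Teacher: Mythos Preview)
Your approach is essentially the paper's: define record events with $\pr[E_k]=2/k$, establish mutual independence (the paper does this by a counting argument in \Cref{lem:oddsindep}, equivalent to your backward-filtration sketch), apply Bruss's odds theorem, and lift to unknown $n$ by iterated doubling. You are right to re-run the survival argument rather than cite \Cref{lem:doubling} verbatim, since that lemma requires competitiveness on every subinstance $G[J]$ and not just those in $I$; the paper glosses over this point and simply invokes the lemma, whereas you track $e_{\max}$ into the last completed block directly. One small slip: as you define $\alg_0$, it outputs only singletons when $n=2$ (the event list $E_3,\dots,E_n$ is empty), so your $n\le 6$ fallback ``the first block already matches $e_{\max}$'' fails as stated; this is trivially repaired by letting $\alg_0$ match the pair when $n=2$ and the edge weight is positive.
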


Notably, the obtained competitive ratio is a rather small constant, parameterized by $\lambda$.
For the case that the agent number~$n$ is known, the competitive ratio of the algorithm designed in the proof of \Cref{prop:maxe} is at least $\frac {\lambda}e$, where $e$ is Euler's number.
Hence, after the application of \Cref{lem:doubling}, we obtain a competitive ratio of at least $\frac {\lambda}{32e}$ for unknown~$n$.

\section{Free Dissolution Model}

This section considers online algorithms for coalition formation under free dissolution.
In this model, the agents arrive in a deterministic order, but the algorithm has the option to dissolve formed coalitions into singletons to rectify past decisions.
First, we consider optimal matching algorithms.
Recall that these are special coalition formation algorithms, where all coalition sizes are bounded by~$2$ throughout.

\subsection{General Online Matching under Free Dissolution}

We start by analyzing the capabilities of online matching algorithms in our model of weighted and non-bipartite instances.
Again, we consider the greedy algorithm a benchmark for more sophisticated algorithms.
For bipartite online matching instances, where one side of the agents is present offline, \citet{FKM+09a} show that $\gdy$, i.e., transitioning from $\mu$ to best \emph{matchings} within $\mathcal A^D(\genmat,i)$, achieves a competitive ratio of $\frac 12$.
However, $\gdy$ does not achieve a constant competitive ratio if all vertices arrive online.

\begin{restatable}{proposition}{gdymatching}\label{thm:gdy-disposal}
	When allowing free dissolution, $\gdy$ has a competitive ratio of $\Theta\left(\frac 1n\right)$ in the matching domain.
\end{restatable}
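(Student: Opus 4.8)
The plan is to prove the statement in two parts: an upper bound showing $c_{\gdy} \in \orderof{\frac 1n}$ via a family of hard instances, and a lower bound showing $c_{\gdy} \in \Omega\left(\frac 1n\right)$ that holds for \emph{any} algorithm returning a matching (and in particular for $\gdy$).

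\textbf{Lower bound.} This is the easy direction. By \Cref{lem:avgmat}, the maximum weight matching $\genmat^*$ of the underlying graph satisfies $w(\genmat^*) \ge \frac 1n w(E^+)$, where $E^+$ is the set of positive edges, and the optimal partition satisfies $\SW[\pi^*(G)] \le 2w(E^+)$. So it suffices to argue that $\gdy$ under free dissolution always returns a matching of weight at least $w(\genmat^*)$ — in fact, I would argue it always returns a \emph{maximum} weight matching. Indeed, under free dissolution the set $\mathcal A^D(\genmat, i)$ lets $\gdy$ either add the new agent $i$ to an existing coalition or dissolve a coalition $\{j, k\}$ and rematch $i$ with $j$ (putting $k$ into a singleton). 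The key observation is that among the agents present so far, $\gdy$ maintains a maximum weight matching: when agent $i$ arrives, the maximum weight matching on the new vertex set either leaves $i$ unmatched (in which case the old matching is still optimal and $\gdy$ correctly forms a singleton), or matches $i$ to some $j$; in the latter case, removing edge $\{i,j\}$ from the new optimal matching leaves a matching on the old vertices, which differs from the previous maximum weight matching by at most the edge at $j$, so $\gdy$ can reach it in one dissolution step and the greedy choice picks the best such option. I would verify this invariant by induction, using an exchange argument (alternating-path style) to show that one dissolution suffices to track the optimum. This gives $\SW[\gdy(G,\sigma)] = 2w(\genmat^*) \ge \frac 2n w(E^+) \ge \frac 1n \SW[\pi^*(G)]$, hence $c_{\gdy} \ge \frac 1n$.

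\textbf{Upper bound.} Here I would exhibit a family of instances where $\gdy$ is forced into a matching of weight $\Theta(1)$ while the optimal partition has welfare $\Theta(n)$. The natural construction: a star-like gadget where a central agent $v$ has small positive weight to each of $k$ peripheral agents $x_1, \dots, x_k$, the peripheral agents have weight $0$ (or tiny positive weight) to each other, and everything else is strongly negative. If the $x_i$'s arrive before $v$, they form singletons; then $v$ arrives and, being restricted to matchings, can pair with only one $x_i$, gaining utility $\Theta(1)$. Under free dissolution $\gdy$ has nothing to improve since there is only one non-singleton coalition. Meanwhile the optimal partition puts all of $\{v, x_1, \dots, x_k\}$ into one coalition of welfare $\Theta(k) = \Theta(n)$. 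One must tune the weights so that (i) the optimal partition genuinely wants the big coalition — i.e.\ the sum of the positive star edges dominates, which forces the peripheral-peripheral weights to be nonnegative, hence $0$ — and (ii) $\gdy$ indeed pairs greedily and never benefits from dissolution. Care is needed because $\gdy$ as a \emph{matching} algorithm must have all intermediate partitions be matchings, so I must check the order of arrivals does not let $\gdy$ accumulate multiple profitable edges; choosing the positive weights so that only edges incident to $v$ are positive handles this cleanly.

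\textbf{Main obstacle.} The delicate part is the lower-bound invariant: arguing rigorously that $\gdy$ under free dissolution maintains a maximum weight matching, given that a single arrival permits only \emph{one} dissolution of \emph{one} coalition. The maximum weight matching on $n$ vertices versus $n-1$ vertices can differ along an augmenting path of length greater than one, so it is not a priori obvious that one dissolution step suffices. The resolution is that we only need $\gdy$'s matching to have weight at least $w(\genmat^*)$, not to equal a prescribed optimal matching — and I would show that the best single-dissolution move from a maximum weight matching on the old vertices already yields a maximum weight matching on the new vertices, by a standard alternating-path exchange argument restricted to the at-most-one new vertex. If making this fully rigorous turns out to be subtle, a safe fallback is to weaken the claim: show directly (again by a short exchange argument) that $\gdy$'s returned matching has weight at least that of \emph{some} fixed maximum weight matching $\genmat^*$, which is all \Cref{lem:avgmat} needs.
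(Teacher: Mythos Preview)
Your proposal rests on a misreading of what ``matching domain'' means here. In this proposition the benchmark is the maximum weight \emph{matching} $\genmat^*$, not the welfare-optimal partition $\pi^*(G)$; the paper places this result in the section on online matching and the proof explicitly compares to $w(\genmat^*)$. Both halves of your plan inherit this error.

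\textbf{Upper bound.} In your star instance the optimal matching can use only one edge incident to $v$, so $w(\genmat^*)=\Theta(1)$, exactly what $\gdy$ achieves. The ratio is $\Theta(1)$, not $O(1/n)$; your $\Theta(n)$ comes from the optimal \emph{partition}, which is the wrong benchmark here. The paper instead uses a path $a_0,\dots,a_{k+1}$ with edge weights $1,1+\epsilon,\dots,1+k\epsilon$ arriving left to right: each new vertex offers a slightly heavier edge, so $\gdy$ keeps dissolving and ends with a single edge of weight $1+k\epsilon$, while the alternating matching has weight $\sim k/2$.

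\textbf{Lower bound.} If $\gdy$ really maintained a maximum weight matching, its competitive ratio in the matching domain would be $1$, contradicting the upper bound. And the invariant is in fact false: take six vertices $a,b,c,d,e,f$ arriving in this order with $w(a,b)=w(c,d)=5$, $w(a,c)=w(b,e)=w(d,f)=4$, all other weights $0$. $\gdy$ locks in $\{a,b\}$ and $\{c,d\}$ (weight $10$) and never dissolves, since every single dissolution loses weight; but the maximum weight matching is $\{\{a,c\},\{b,e\},\{d,f\}\}$ of weight $12$. Your fallback (``weight at least that of some fixed maximum weight matching'') fails for the same reason. The paper's lower bound is much simpler: when the second endpoint of the globally heaviest edge (weight $W_{\max}$) arrives, forming that edge is an available option, so from then on the matching weight is at least $W_{\max}$; since any matching has at most $n/2$ edges and hence weight at most $\frac{n}{2}W_{\max}$, the ratio is $\Omega(1/n)$.
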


\begin{proof}
	First, note that $\gdy$ will match a pair of agents with the highest possible weight $W_{\max}$.
	Moreover, every matching has weight at most $\frac n2W_{\max}$, and, therefore, $c_\gdy \in \Omega\left(\frac 1n\right)$.
	
	\begin{figure}[tbp]
		\centering
		\resizebox{.9\textwidth}{!}{
			\begin{tikzpicture}
				\foreach \i/\k in {0/0,1/1,2/2,3/3,4/{k},5/{k+1}}
				{
					\node[draw, circle](a\i) at (3*\i,0){\color{white}{$a33$}};
					\node at (3*\i,0){$a_{\k}$};
				}
				\draw (a0) edge node[pos = 0.5, fill = white]{$1$} (a1);
				\draw (a1) edge node[pos = 0.5, fill = white]{$1+\epsilon$} (a2);
				\draw (a2) edge node[pos = 0.5, fill = white]{$1+2\epsilon$} (a3);
				\draw (a4) edge node[pos = 0.5, fill = white]{$1+k\epsilon$} (a5);
				\node at (barycentric cs:a3=1,a4=1) {\dots};
			\end{tikzpicture}
		} 
		\caption{Family of instances for the upper bound of the competitive ratio of $\gdy$ for matching instances under free dissolution.\label{fig:gdy-disposal}}
	\end{figure}
	
	For the upper bound, consider the family of instances depicted in \Cref{fig:gdy-disposal}.
	Let $\epsilon > 0$ and $k\in \mathbb N$ even.
	Consider $G^{k,\epsilon} = (N^{k,\epsilon}, w^{k,\epsilon})$ where $N^{k,\epsilon} = \{a_i\colon 0 \le i \le k+1\}$ and, for $i\in [k+1]$, we set $w^{k,\epsilon}(a_{i-1},a_i) = 1 + (i-1)\epsilon$.
	All other weights are set to $0$.
	Note that $\genmat^*(G^{k,\epsilon}) := \left\{\{a_{2i},a_{2i+1}\}\colon 0\le i \le \frac{k}{2} \right\}$ is the maximum weight matching for $G^{k,\epsilon}$.
	
	Now, let the arrival order be $(a_0,a_1,\dots, a_k,a_{k+1})$.
	Then, $\gdy$ outputs the matching $\{\{a_k,a_{k+1}\}\}$.
	Hence,
	$$
		c_{\gdy}\le \inf_{\epsilon > 0}\frac{\SW[\gdy(G^{k,\epsilon},\sigma^k)]}{\SW[\genmat^*(G^{k,\epsilon},\sigma^k)]}
		= \inf_{\epsilon > 0}\frac{1 + k\epsilon}{\frac k2 +\frac 14k(k+2)\epsilon} = \frac 2k \in \Theta\left(\frac 1n\right)\text.\eqno\qedhere
	$$
\end{proof}

The non-zero edges in the construction of the previous proposition even induce \emph{bipartite} graphs.
Hence, the competitive ratio of $\frac 12$ of the greedy algorithm on bipartite instances if one side of the agents is present offline \citep{FKM+09a} relies on the existence of offline agents and not on bipartiteness.
By constrast, \Cref{thm:gdy-disposal} indicates that even achieving a constant competitive ratio is a non-trivial task for our setting of online matching.

Our next goal is to show how to achieve a constant competitive ratio. 
Even more, we will show how to obtain the best possible competitive ratio.
To this end, we slightly modify the algorithm by \citet{McGr05a} for the edge arrival model.
The key idea is to adjust the greedy algorithm by only allowing edge dissolution if this achieves sufficient gain. 

\begin{definition}[Dissolution threshold algorithm with parameter $t$]
	Let $t\ge 1$ be a parameter. Given a matching $\genmat$ and a newly arriving vertex $i$, the \emph{dissolution threshold algorithm with parameter $t$} ($t$-$\ggdy$) is the greedy algorithm for the available matchings 
	\begin{equation*}
		\avai(\genmat,i) = 
		\big\{(\genmat - j)\cup \{\{i,j\}\} \colon \{j\}\in \genmat\big\} 
		\cup 
		\big\{(\genmat \setminus \{C\})\cup \{\{i,j\},\{\ell\}\} \colon C = \{j,\ell\}\in \genmat, w(i,j)\ge t\, w(j,\ell)\big\}\text.
	\end{equation*}
\end{definition}

Hence, $t$-$\ggdy$ has two possibilities.
First, it can form a new edge with a singleton vertex.
This is captured in the first set and corresponds to chosing a matching in $\avai^S(\genmat,i)$ where $i$ is matched.
Second, it can dissolve a matched pair if the weight of the new edge compared to the weight of the dissolved edge is larger by a factor of at least~$t$.
Under these possibilities, it makes a greedy decision, i.e., it chooses the best option if the weight of the matching can be increased, and leaves the vertex unmatched otherwise.

Specifically, we are interested in $t$-$\ggdy$ for $t = 1 + \frac {\sqrt{2}}2$.
As we will soon see, this achieves an optimal competitive ratio, not only among $t$-$\ggdy$ algorithms but even among all possible deterministic algorithms.
The proof of the following theorem only requires straightforward adaptations of the techniques by \citet{FKM+05a} and \citet{McGr05a} who obtain the analogous result for the edge arrival model.\footnote{In the vertex arrival model, whenever a vertex arrives, a whole set of edges, i.e., the edges incident to this vertex arrive.
	Hence, we essentially have to prove that this type of simultaneous arrival of edges does not allow for better algorithms.
In the conference version of our paper, we were not yet aware of the close connection of this work to online matchings and had independently analyzed $2$-$\ggdy$ with the same approach as \citet{FKM+05a}.}
For the sake of completeness, we include the proof in the appendix.

\begin{restatable}{proposition}{THMggdy}\label{thm:ggdy}
	When allowing for free dissolution, $\left(1 + \frac {\sqrt{2}}2\right)$-$\ggdy$ has a 
	competitive ratio of $\frac 1{3 + 2\sqrt{2}} \approx 0.172$ in the matching domain.
\end{restatable}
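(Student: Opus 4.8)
The plan is to establish the competitive ratio of $(1+\frac{\sqrt 2}{2})$-$\ggdy$ in two parts: first the lower bound $\frac{1}{3+2\sqrt 2}$ on its performance, and then (implicitly, via the framing of the statement) note that no better ratio is attainable. For the lower bound, I would adapt the charging argument of \citet{FKM+05a} and \citet{McGr05a} from the edge arrival model. The central object is the final matching $\genmat$ produced by the algorithm and a fixed maximum weight matching $\genmat^*$; the goal is to show $w(\genmat^*)\le (3+2\sqrt 2)\,w(\genmat)$, which after multiplying both sides by $2$ gives the social welfare bound and hence $c\ge\frac 1{3+2\sqrt 2}$ via \Cref{lem:mat2cofo}'s underlying reasoning (or directly for the matching domain).

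The key bookkeeping step is to track, for each edge $e\in\genmat$ that is at some point in the algorithm's current matching, the total weight of edges that were ``killed'' to make room for $e$ (including $e$ itself if it is later dissolved), and to bound this geometric-like sum using the threshold $t=1+\frac{\sqrt 2}{2}$. Concretely, when $e$ replaces an edge $e'$, we have $w(e)\ge t\,w(e')$, so the chain of successively dissolved edges leading to any surviving edge has total weight at most $\frac{1}{t-1}w(e)$ (geometric series with ratio $\le 1/t$). Then I would charge each edge of $\genmat^*$ to one or two edges of $\genmat$: an edge $\{u,v\}\in\genmat^*$ is charged to whatever edges of (the final or an intermediate) $\genmat$ occupy $u$ and $v$ at the relevant time. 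The fact that in the vertex arrival model an entire star of edges arrives at once rather than a single edge must be handled: when vertex $i$ arrives, the algorithm picks the best available augmentation, so any $\genmat^*$-edge incident to $i$ that is \emph{not} chosen was weakly dominated by the chosen action, giving the needed inequality. Summing the charges and invoking the geometric bound yields a factor of the form $1+\frac{t}{t-1}+\frac{1}{t-1}$ or similar; optimizing over $t$ (which is why $t=1+\frac{\sqrt 2}{2}$ is the magic value) gives exactly $3+2\sqrt 2$.

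The main obstacle I anticipate is getting the charging scheme exactly right in the vertex arrival setting, in particular ensuring that no edge of $\genmat$ is over-charged (each receives charge from at most its two endpoints' worth of $\genmat^*$-edges, but the dissolved-chain weights must be apportioned carefully so the geometric sums do not double count), and verifying the boundary case where an arriving vertex could in principle be matched to either endpoint of a $\genmat^*$-edge. I would structure the proof around a potential/charging lemma: define $\Phi$ to be the total weight ever destroyed plus the weight of the current matching, show $\Phi$ is monotone and that each step's increase in $\Phi$ is at most $\frac{t}{t-1}$ times the increase in current matching weight, and separately show $w(\genmat^*)\le$ (a constant) $\cdot \Phi_{\text{final}}$ by the domination argument at arrival times. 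Since the paper states this only needs ``straightforward adaptations'' of the cited edge-arrival proofs and defers to the appendix, I would keep the body terse: state the charging lemma, indicate the geometric-series estimate, note the one new ingredient (simultaneous arrival of a star of edges causes no loss because the algorithm acts greedily over the whole star), and point to the appendix for the arithmetic that pins down $t=1+\frac{\sqrt 2}{2}$ and the resulting constant $3+2\sqrt 2\approx 5.83$.
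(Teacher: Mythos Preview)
Your lower-bound plan is essentially the paper's approach: track the set $F$ of all edges ever held by the algorithm, bound the weight of each replacement chain below a surviving edge $e\in\genmat$ by $\frac{1}{t-1}w(e)=\sqrt 2\,w(e)$ via a geometric series, and charge each $m\in\genmat^*$ to at most two edges related to its endpoints. One refinement you will need to make precise: the paper does \emph{not} charge only to edges of the final matching $\genmat$. It defines a set-valued map $\matfn\colon\genmat^*\to 2^F$ that, for $m=\{a,b\}$ with $b$ arriving later, follows the replacement chains from the edge(s) holding $a$ and $b$ at $b$'s arrival and returns the last edge in each chain that still contains $a$ (respectively $b$). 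The crucial counting fact is that a surviving edge of $\genmat$ lies in the image of $\matfn$ at most twice, whereas a \emph{dissolved} edge in $F\setminus\genmat$ lies in the image at most once (one of its endpoints continues further along the chain, so it cannot be the maximal $a$- or $b$-containing edge for two different optimal edges). Summing gives $w(\genmat^*)\le t\bigl(2w(\genmat)+w(F\setminus\genmat)\bigr)\le t(2+\sqrt 2)\,w(\genmat)=(3+2\sqrt 2)\,w(\genmat)$. Your placeholder expression $1+\frac t{t-1}+\frac 1{t-1}$ does not evaluate to $3+2\sqrt 2$; the correct accounting is $t\cdot\bigl(2+\frac 1{t-1}\bigr)$, and your potential-function variant would likewise have to encode this $2$-versus-$1$ multiplicity asymmetry rather than a uniform bound.

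You dismiss the upper bound as ``implicit,'' but the statement asserts the competitive ratio \emph{equals} $\frac{1}{3+2\sqrt 2}$, and the paper proves tightness directly with an explicit instance family: a path $a_0,\dots,a_{k+1}$ with edge weights $t^0,t^1,\dots,t^k$ together with pendant vertices $b_i$ satisfying $w(a_i,b_i)=t^{i+1}-\epsilon$, under arrival order $a_0,a_1,b_0,a_2,b_1,\dots,a_{k+1},b_k,b_{k+1}$. The algorithm repeatedly abandons $\{a_{i-1},a_i\}$ for $\{a_i,a_{i+1}\}$ and never takes a $b$-edge (each falls just short of the threshold), ending with weight $t^k$ against an optimum of roughly $\sum_{i\le k+1}t^i + t^{k+1}$; letting $k\to\infty$ and $\epsilon\to 0$ gives the ratio $\frac{1}{3+2\sqrt 2}$. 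You could alternatively invoke the general impossibility result of \citet{Vara11a} stated immediately afterward, but the paper's proof of this proposition is self-contained and you should say which route you are taking.
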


Surprisingly, $\left(1 + \frac {\sqrt{2}}2\right)$-$\ggdy$ achieves an optimal competitive ratio. 
Notably, it is less than $\frac{1}{2}$, which is a typical optimal competitive ratio in more restricted online matching instances \citep{KVV90a,FKM+09a,WaWo15a}.
This is a direct consequence of a more general result by \citet{Vara11a}, who studied the online buyback problem under $k$ matroid constraints.
In this setting, an online algorithm receives a series of bids for items it can sell, subject to $k$ matroid constraints.
The decision not to sell an item is irrevocable.
However, sold items can be bought back for a fraction $f$ of the bid price.
It is possible to reduce our setting to the online buyback problem in worst-case instances.

First, we have the case $f = 0$ because we can dissolve without incurring any cost.
Next, consider tree graphs and note that every tree graph is bipartite.
It is well known that matchings in bipartite graphs can be described with $2$ matroid constraints \citep[see, e.g.,][]{KoVy06a}.
Therefore, we can sell edges in a tree graph with $2$ matroid constraints such that the sold edges form a matching.
We are now already very close to our setting, except that edges, and not vertices arrive over time.
But, if we additionally require that the agents arrive in an order such that at each step, the current graph is a tree, then edge and vertex arrival are equivalent: except for the first edge, the arrival of an edge coincides with the arrival of exactly one new vertex; otherwise, the new edge would be disconnected or form a cycle.
Now, all instances in his proof concerning the worst-case behavior in the setting for $k = 2$ by \citet{Vara11a} are matching instances on tree graphs.
Hence, Theorem 1 by \citet{Vara11a} for the case $k = 2$ and $f = 0$ implies that there is no $c$-competitive algorithm, with $c > \frac{1}{3 + 2\sqrt{2}}$.

\begin{proposition}
	In the deterministic arrival model, no online matching algorithm achieves a competitive ratio of more than $\frac{1}{3 + 2\sqrt{2}}$ under free dissolution.
	In particular, $\left(1 + \frac {\sqrt{2}}2\right)$-$\ggdy$ achieves an optimal competitive ratio.
\end{proposition}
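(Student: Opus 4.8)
The plan is to combine the two preceding results in the excerpt: Proposition~\ref{thm:ggdy}, which establishes that $\left(1 + \frac{\sqrt 2}{2}\right)$-$\ggdy$ is $\frac{1}{3+2\sqrt 2}$-competitive, and the reduction to Varadaraja's online buyback problem with $k=2$ matroid constraints and buyback factor $f=0$. The upper-bound half of the proposition is essentially the paragraph immediately preceding the statement, so the proof really just needs to assemble that argument cleanly; the lower bound (optimality of the named algorithm) is then immediate.

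First I would prove the impossibility direction. Suppose $\alg$ is an online matching algorithm under free dissolution with competitive ratio $c > \frac{1}{3+2\sqrt 2}$. I would restrict attention to the worst-case instances in Varadaraja's lower-bound construction for $k=2$ matroid constraints and $f=0$; the key observation, already noted in the excerpt, is that all of these instances are matching instances on tree graphs, and tree graphs are bipartite, so matchings in them are exactly the common independent sets of two partition matroids. Free dissolution corresponds precisely to buyback at factor $f=0$: dissolving a coalition $\{j,\ell\}$ into singletons discards its weight without any additional penalty, which is exactly "buying back" the sold edge at zero cost. Moreover, because the adversary may present the vertices in an order under which every prefix graph is a tree, the arrival of a vertex (other than the first) coincides with the arrival of exactly one new edge, so vertex arrival and edge arrival are equivalent on these instances. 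Hence $\alg$, run on these instances, yields an online buyback algorithm for $k=2$, $f=0$ achieving ratio $c > \frac{1}{3+2\sqrt 2}$, contradicting Theorem~1 of \citet{Vara11a} for that parameter regime.

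For the "in particular" clause I would simply observe that Proposition~\ref{thm:ggdy} exhibits $\left(1 + \frac{\sqrt 2}{2}\right)$-$\ggdy$ as a concrete online matching algorithm under free dissolution attaining competitive ratio exactly $\frac{1}{3+2\sqrt 2}$; combined with the impossibility result, no algorithm does strictly better, so this algorithm is optimal. I would phrase the conclusion as: the competitive ratio of $\frac{1}{3+2\sqrt 2}$ is both achievable and best possible.

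The main obstacle is not a calculation but making the reduction to the buyback model airtight — specifically, verifying that the adversary in Varadaraja's construction can indeed be realized with a tree-preserving vertex arrival order (so that edge arrival may be replaced by vertex arrival without loss), and that the two-partition-matroid encoding of bipartite matchings is faithful in both directions (every feasible buyback solution maps back to a valid online matching under free dissolution and vice versa). Since the excerpt already asserts that the relevant lower-bound instances are matching instances on trees, this reduces to a careful but routine bookkeeping check; I do not expect genuine difficulty, only the need to state the correspondence precisely enough that invoking \citet{Vara11a}'s theorem for $k=2$, $f=0$ is unambiguous.
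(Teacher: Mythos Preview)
Your proposal is correct and follows essentially the same approach as the paper: the proposition is stated without a separate formal proof because it is a direct consequence of the preceding discussion, which reduces the free-dissolution matching setting to \citet{Vara11a}'s online buyback problem with $k=2$ matroid constraints and $f=0$ via the tree-instance/vertex-arrival correspondence you describe, and then combines this impossibility with \Cref{thm:ggdy} for optimality. Your identification of the bookkeeping steps (bipartiteness of trees, equivalence of edge and vertex arrival under tree-preserving orders, and that Varadaraja's worst-case instances are matching instances on trees) matches the paper's argument exactly.
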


\subsection{Online Coalition Formation}

Departing from online matching, we now analyze the capabilities and boundaries of online coalition formation algorithms under free dissolution.
As a benchmark, we again start with an analysis of the greedy algorithm.
In \Cref{thm:gdy-disposal}, we have seen that $\gdy$ has a competitive ratio of $\Theta\left(\frac 1n\right)$ in the matching domain.
In the coalition formation domain, $\gdy$ performs even worse.

\begin{restatable}{proposition}{gdydisscofo}\label{prop:gdydisscofo}
	When allowing for free dissolution, $\gdy$ has a competitive ratio of $\Theta\left(\frac 1{n^2}\right)$ in the coalition formation domain.
\end{restatable}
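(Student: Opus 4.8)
The plan is to prove both bounds separately, mirroring the structure of the proof of \Cref{thm:compGDY}. For the lower bound $c_\gdy \in \Omega\left(\frac{1}{n^2}\right)$, I would argue exactly as in the random-arrival analysis: for an arbitrary ASHG $G = (N,w)$, let $E^+(G)$ be the set of positive edges. Since $\gdy$ with free dissolution always has the option of dissolving any coalition to form the single edge of maximum positive weight (or, indeed, just never forming any coalition until a positive pair is available and then keeping that edge permanently — $\gdy$ only dissolves to \emph{increase} welfare), the welfare of $\gdy$ on $(G,\sigma)$ is at least twice the maximum positive edge weight $\Umax^+ := \max_{e\in E^+(G)} w(e)$. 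Since $\SW[\pi^*(G)] \le 2\sum_{e\in E^+(G)} w(e) \le 2|E^+(G)|\,\Umax^+$, we get $c_\gdy \ge \frac{1}{|E^+(G)|} \in \Omega\left(\frac{1}{n^2}\right)$ after taking the infimum over $G$. (Note this bound is even in the \emph{adversarial}-arrival regime, so a fortiori it holds here.)

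For the upper bound $c_\gdy \in O\left(\frac{1}{n^2}\right)$, I would build a family of instances in which $\gdy$'s eagerness to dissolve whenever it can gain a tiny amount repeatedly destroys a large accumulated coalition. The idea is to have a ``growing star'' or ``growing clique'' on agents $v_1,\dots,v_k$ with many small mutually positive edges (contributing welfare that scales like $k$ or $k^2$), arriving early so $\gdy$ builds it up greedily. Then a stream of ``trap'' agents $w_1, w_2, \dots$ arrives, where $w_j$ has a positive edge of weight just barely exceeding the \emph{total} current welfare of the big coalition to a single vertex $v_j$ (or to a fresh dedicated singleton), so that $\gdy$ is forced to dissolve the entire big coalition into singletons to form the size-$2$ pair $\{v_j, w_j\}$, losing everything accumulated. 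Crucially, each $w_j$ has weight $-1$ (or a large negative weight, as in \Cref{fig:hard_instance}) to all the other big-coalition vertices and to all other $w_\ell$'s, so it cannot be absorbed into the big coalition. By calibrating the trap weights to grow by successive tiny increments $\epsilon$ (as in the proof of \Cref{thm:gdy-disposal}), the process repeats: after dissolving to $\{v_j, w_j\}$, the big coalition has to be rebuilt (or cannot be rebuilt because its vertices are now stuck as singletons with negative edges to everyone), and the final matching $\gdy$ holds has weight $O(1)$, while the optimal partition keeps the big coalition intact and achieves welfare $\Omega(k)$ or $\Omega(k^2)$ with $n = \Theta(k)$. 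The cleanest version probably uses a clique of weight-$\delta$ edges on $k$ vertices (optimal welfare $\approx \delta k^2 = \Theta(n^2)$ after scaling $\delta$ so the traps stay ordered) against $\gdy$ ending with a single edge of weight $O(1)$, yielding ratio $O(1/n^2)$.

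The main obstacle will be the bookkeeping in the upper-bound construction: I need to verify that at \emph{every} step $\gdy$'s greedy-with-dissolution rule genuinely prefers to dissolve the big coalition rather than, say, leaving $w_j$ as a singleton or attaching it somewhere harmless, and that once a big-coalition vertex $v_j$ has been split off into $\{v_j,w_j\}$ and later $w_j$ is (effectively) removed from usefulness, the remaining vertices cannot re-coalesce to recover welfare — this is where the all-pairs negative weights among the ``used'' vertices must be chosen carefully so that reforming any large coalition would \emph{decrease} welfare and hence never be done by $\gdy$. I expect a limit argument for $\epsilon \to 0$ (as in the two earlier proofs) will let me ignore all the lower-order $\epsilon$ terms and read off the ratio cleanly. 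I would also double-check that the arrival order can be chosen to interleave the big-coalition agents first and the traps afterwards, which is unconstrained since arrivals are adversarial in this model.
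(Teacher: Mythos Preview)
Your lower bound is correct and matches the paper's argument essentially verbatim: $\gdy$ with free dissolution always has the option of pairing with the maximum positive edge, so its welfare is at least $2\max_{e\in E^+(G)} w(e)$, and the usual $\SW[\pi^*(G)]\le 2w(E^+(G))\le 2|E^+(G)|\max_{e\in E^+(G)} w(e)$ gives $\Omega(1/n^2)$.

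Your upper-bound plan, however, will not reach $O(1/n^2)$. The core issue is arithmetic, not bookkeeping: in the free-dissolution model, $\gdy$ dissolves a coalition $C$ to form a pair $\{v_j,w_j\}$ only if $2w(v_j,w_j) > \SW[C]$, and $\gdy$'s total welfare is nondecreasing. Hence after any such dissolution the algorithm's welfare is at least $\SW[C]$, and in particular the final welfare is at least the welfare of the ``big coalition'' you built in the first phase. So if that coalition contributes $\Theta(k^2)$ (your clique), then $\gdy$ finishes with $\Theta(k^2)$ as well, and the ratio against any optimum using that same clique is $\Omega(1)$; if instead the coalition contributes $O(1)$ (so the trap edges are $O(1)$), then the clique itself only gives the optimum $O(1)$. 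Either way you cannot simultaneously have ``optimal welfare $\Theta(n^2)$'' and ``$\gdy$ ends with a single edge of weight $O(1)$'' via dissolution traps. A path-style cascade of dissolutions (as in \Cref{thm:gdy-disposal}) does yield a gap, but only $\Theta(1/n)$, because the optimal partition on such instances is a matching, not a quadratic-welfare coalition.

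The paper's construction takes the opposite tack: it makes $\gdy$ \emph{refuse} to dissolve. A ``poison'' agent $a_1$ glues together a star $\{a_1,\dots,a_k\}$ of welfare just above $2$ (via one weight-$1$ edge and $k-2$ weight-$\epsilon$ edges), while $a_1$ has weight $-k$ to each later agent $b_j$ and every $a_i$ with $i\ge 2$ has weight $1$ to every $b_j$. When a $b_j$ arrives, joining the star decreases welfare (because of $a_1$), and dissolving the star to form $\{a_i,b_j\}$ gives welfare exactly $2 < 2 + 2(k-2)\epsilon$, so $\gdy$ leaves every $b_j$ as a singleton and ends with welfare $<4$. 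But the optimum discards $a_1$ and forms $\{a_2,\dots,a_k,b_1,\dots,b_k\}$, a bipartite-complete coalition with $\Theta(k^2)$ unit-weight pairs and welfare $2k(k-1)$. The mechanism you need is not that $\gdy$ is lured into dissolving something valuable, but that a tiny surplus makes $\gdy$ \emph{keep} a near-worthless coalition whose presence (through the single bad agent $a_1$) blocks access to quadratic welfare.
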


\begin{proof}
	First, we prove the lower bound for the competitive ratio of $\gdy$. 
	Consider an arbitrary ASHG given by $G = (N,w)$ together with an arrival order $\sigma$.
	Define $E^+(G) := \left\{e\in {N \choose 2}\colon w(e) > 0\right\}$, which is the set of agent pairs with positive weights.
	Then, the optimal partition $\pi^*(G)$ satisfies $\SW[\pi^*(G)] \le 2w(E^+(G))$. 
	
	In addition, because of the free dissolution, $\gdy$ maintains a partition whose social welfare is at least as high as the social welfare of the coalition containing just two agents connected by a maximum weight edge.
	Indeed, when the second agent of such a pair arrives, $\gdy$ has the opportunity to achieve at least this welfare by dissolving the coalition of the first agent of the pair.
	Hence, in this step, a partition is reached that achieves at least this welfare.
	Consequently, 
	\begin{equation*}
		\SW[\gdy(G,\sigma)] \ge \max_{e\in E^+(G)}2 w(e) \ge \frac 2{|E^+(G)|}\sum_{e\in E^+(G)}w(e)\ge \frac 2{n^2}w(E^+(G))\ge \frac 1{n^2}\SW[\pi^*(G)]\text.
	\end{equation*}
	Thus, the competitive ratio of $\gdy$ satisfies $\Omega\left(\frac{1}{n^2}\right)$. 
	
	To show that $\gdy$ performs no better than that, we now construct a family of instances on which it performs poorly.
	Let $k\in \mathbb N$ with $k\ge 2$.
	We construct a game $G^k = (N^k,w^k)$ and illustrate it in \Cref{fig:GDYdiss}.
	
	\begin{figure}
		\centering
		\begin{tikzpicture}
			\pgfmathsetmacro\figbreadth{2.5}
			\foreach[count = \k] \i in {4.5,3,1.5}
			{
			\foreach \j/\a in {0/a,\figbreadth/b}
			{
			\node[draw, circle] (\a\k) at (\j,\i) {$\a_{\k}$};
			}
			}

			\node[draw, circle] (a4) at (0,-1.5) {$a_k$};
			\node[draw, circle] (b4) at (\figbreadth,-1.5) {$b_k$};

			\node[rotate = 90] at (0,0) {$\dots$};
			\node[rotate = 90] at (\figbreadth,0) {$\dots$};
			
			\draw (a1) edge node[fill = white, pos = .1] (u4) {\color{white}$n$} (b4);
			\draw (a1) edge node[fill = white, pos = .15] (u3) {\color{white}$n$} (b3);
			\draw (a1) edge node[fill = white, pos = .2] (u2) {\color{white}$n$} (b1);
			\draw (a1) edge node[fill = white, pos = .2] (u1) {\color{white}$n$} (b2);
			
			\foreach \i in {2,3,4}
			{
			\foreach \j in {1,2,3,4}
			{\draw (a\i) -- (b\j);
			\node at (u\j) {\tiny$-k$};
			}
			}

			\draw (a1) -- (a2);
			\draw[bend right = 40] (a1) edge node[fill = white, pos = .5] {\footnotesize$\epsilon$} (a3);
			\draw[bend right = 40] (a1) edge node[fill = white, pos = .5] {\footnotesize$\epsilon$} (a4);
		\end{tikzpicture}
		\caption{Worst-case performance of $\gdy$ with free dissolution in the coalition formation domain. Missing edges represent a utility of~$0$. Unlabelled edges represent a utility of~$1$.}\label{fig:GDYdiss}
	\end{figure}

	Define $N^k = \{a_i,b_i\colon 1\le i \le k\}$, i.e., $G^k$ has $2k$ agents.
	We set $\epsilon = \frac 1k$.
	Define weights by
	\begin{itemize}
		\item $w(a_1,a_2) = 1$,
		\item $w(a_1,a_i) = \epsilon$ for $3\le i\le k$,
		\item $w(a_1,b_j) = -k$ for $1\le j \le k$,
		\item $w(a_i,b_j) = 1$ for $2\le i\le k$ and $1\le j\le k$, and
		\item all other weights are set to $0$.
	\end{itemize}
	
	Assume that agents arrive in the order $\sigma^k = (a_1, \dots, a_k, b_1, \dots, b_k)$.
	Then, $\gdy$ forms a big coalition containing $C := \{a_i\colon 1\le i \le k\}$ during the arrival of the first $k$ agents.
	Then, for every $1\le j\le k$, at her arrival, $b_j$ cannot join $C$ because this would yield a negative improvement in social welfare.
	Also, $\gdy$ does not dissolve $C$ and form a coalition of an agent in $C$ with $b_j$ because this only yields a social welfare of $2$, while $C$ already achieved a social welfare of $2 + 2(k-2)\epsilon$.
	Hence, $b_j$ forms a new singleton coalition.
	It follows that $\SW[\gdy(G^k,\sigma^k)] = 2 + 2(k-2)\epsilon < 4$.
	
	Consider, however, the partition maximizing social welfare is $\pi^* = \{\{a_2,\dots, a_k,b_1,\dots, b_k\},\{a_1\}\}$.
	Note that $\SW[\pi^*] = 2(k-1)k = \Theta(n^2)$.
	It follows that $c_\gdy \in \orderof{\frac 1{n^2}}$.
\end{proof}

Once again, a competitive ratio of $\Theta\left(\frac 1{n^2}\right)$ is bad in the coalition formation domain: 
essentially, it means that, in the worst case, the greedy algorithm can only achieve the welfare accumulated by one good pair of agents. 
This again raises the question of whether we can find algorithms with a better competitive ratio.
A simple solution is to apply
$\left(1 + \frac {\sqrt{2}}2\right)$-$\ggdy$ as a coalition formation algorithm.
By \Cref{lem:mat2cofo}, we obtain an $\Omega\left(\frac 1n\right)$-competitive algorithm under free dissolution.

\begin{corollary}
	When allowing for free dissolution, $\left(1 + \frac {\sqrt{2}}2\right)$-$\ggdy$ is $\frac{1}{(3+2\sqrt{2})n}$-competitive for coalition formation .
\end{corollary}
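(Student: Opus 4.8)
The plan is to derive this as an immediate consequence of the two ingredients already assembled in the excerpt: \Cref{thm:ggdy}, which says $\left(1 + \frac{\sqrt{2}}{2}\right)$-$\ggdy$ is $\frac{1}{3+2\sqrt{2}}$-competitive in the matching domain under free dissolution, and \Cref{lem:mat2cofo}, which converts a $c$-competitive online matching algorithm into a $\frac{c}{n}$-competitive online coalition formation algorithm (and, as remarked just before its statement, this holds in the free dissolution model as well).

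First I would observe that $\left(1 + \frac{\sqrt{2}}{2}\right)$-$\ggdy$ is indeed an online matching algorithm operating in the free dissolution model: by construction it only ever maintains matchings, and each of its available moves — either matching the new agent $i$ to a singleton, or dissolving a matched pair $\{j,\ell\}$ and forming $\{i,j\}$ while leaving $\ell$ as a singleton — is contained in $\avai^D(\pi,i)$ when $\pi$ is a matching. Hence it is a legal online coalition formation algorithm under free dissolution that happens to output matchings, which is precisely the situation \Cref{lem:mat2cofo} addresses.

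Then I would simply instantiate \Cref{lem:mat2cofo} with $\alg = \left(1 + \frac{\sqrt{2}}{2}\right)$-$\ggdy$ and $c = \frac{1}{3+2\sqrt{2}}$, which is legitimate by \Cref{thm:ggdy}, to conclude that $\left(1 + \frac{\sqrt{2}}{2}\right)$-$\ggdy$ is $\frac{c}{n} = \frac{1}{(3+2\sqrt{2})n}$-competitive for coalition formation under free dissolution. There is essentially no hard step here — the only point requiring a sentence of care is checking that the matching algorithm's moves are all permitted in the coalition formation free dissolution model and that the conversion lemma applies in this model, both of which are already spelled out in the text.
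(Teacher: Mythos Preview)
Your proposal is correct and matches the paper's approach exactly: the corollary is stated immediately after the sentence ``By \Cref{lem:mat2cofo}, we obtain an $\Omega\left(\frac 1n\right)$-competitive algorithm under free dissolution,'' with no further proof given, so the intended argument is precisely to combine \Cref{thm:ggdy} with \Cref{lem:mat2cofo}. Your extra sentence verifying that the moves of $\left(1+\frac{\sqrt{2}}{2}\right)$-$\ggdy$ lie in $\avai^D(\pi,i)$ is a nice bit of care that the paper leaves implicit.
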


Note that every decision regarding how to update the matching in $\left(1 + \frac {\sqrt{2}}2\right)$-$\ggdy$ can be made in polynomial time.
However, because of the inapproximability result by \citet{FKV22a},
we do not expect \emph{efficient} online algorithms to achieve an asymptotically better competitive ratio.
In fact, as our following theorem shows, online coalition formation algorithms cannot achieve more even with unlimited computational power.

The proof idea is to iteratively construct an 
adversarial instance whose edge weights are defined based on previous decisions of the algorithm.
The magnitude of the weights increases by a factor of $n$ for every new agent.
This forces the algorithm to form coalitions with new agents.
However, by assigning positive and negative weights to existing coalitions, the algorithm is essentially forced to maintain coalitions of size at most~$2$ (or it will miss a significant gain of social welfare).
Consequently, the algorithm can only achieve a social welfare that is at most four times the maximum weight.
By contrast, the optimal partition can achieve a welfare of linear order with respect to the maximum weight.

\begin{restatable}{theorem}{DissolutionStrongUB}\label{thm:DissolutionStrongUB}
	In the deterministic arrival model, the competitive ratio of any online coalition formation algorithm under free dissolution is at most~$\frac {12}n$.
\end{restatable}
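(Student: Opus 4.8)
The plan is to build, for an arbitrary deterministic online coalition formation algorithm $\alg$ acting under free dissolution, an instance $(G,\sigma)$ on which $\SW[\alg(G,\sigma)]\le 4\,W_{\max}$ while $\SW[\pi^*(G)]\ge \frac n3\,W_{\max}$, where $W_{\max}$ is the largest edge weight of $G$; the claimed ratio $\frac{12}{n}$ then follows at once. The instance is revealed online, in a fixed arrival order $x_1,\dots,x_n$, and the weights of the edges incident to $x_i$ are committed only after $\alg$'s partition $\pi_{i-1}$ has been observed. All weights are scaled by powers of a large base $B$, so that the edges introduced at step $i$ have absolute value $\orderof{B^i}$ and $W_{\max}=\Theta(B^n)$; the geometric blow-up is what makes the \emph{entire} positive weight accumulated among the first $n-1$ agents negligible against a single level-$n$ edge.

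For almost the whole run the adversary keeps $\alg$ fragmented. When $x_i$ arrives it is handed one heavy attractive edge, of weight $B^i$, to a present agent $q_i$, and a heavy repulsive edge, of weight $-B^i-1$, to every other present agent. Hence absorbing $x_i$ into any coalition that already contains a second agent is strictly unprofitable, and dissolving such a coalition pays off only at the current (negligible) scale: the only welfare-improving moves are to pair $x_i$ with a singleton or to break $q_i$'s coalition down to the single new pair $\{q_i,x_i\}$. The adaptive ingredient is the choice of $q_i$, made so as to split $\pi_{i-1}$ as finely as possible; and whenever $\alg$ nonetheless lets a coalition grow, that coalition is tagged. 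In the final step $x_n$ is given a \emph{fan}: heavy attractive edges, of level $n$, to a set $S$ that meets each coalition of $\pi_{n-1}$ in at most one vertex, and repulsive edges to everyone else. Because every coalition $\alg$ could attach $x_n$ to meets $S$ at most once, the fan contributes at most one edge of weight $B^n$ and otherwise only non-positive weight; combined with the geometric cap on the positive weight gathered earlier, this yields $\SW[\alg(G,\sigma)]\le 4\,W_{\max}$, no matter how $\alg$ behaved. On the optimal side, clustering $x_n$ together with $S$ (whose internal edges are negligible next to $B^n$) is a valid partition of welfare $\Omega(|S|\,W_{\max})$, which is $\Omega(n\,W_{\max})$ as soon as $|S|=\Omega(n)$.

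The one genuinely delicate point is securing $|S|=\Omega(n)$ — hence a large optimum — against algorithms that ignore the incentives and clump agents together, so that a transversal of $\pi_{n-1}$ is short. Here the adversary must adapt its strategy further: since a coalition grown against the level-weights carries no net positive welfare, it can afford to aim the fan's heavy attractive edges only at agents lying outside the tagged coalitions, while \emph{poisoning} every tagged coalition with heavy repulsive edges; the optimum then still attains $\Omega(n\,W_{\max})$ by discarding exactly the poisoned agents, whereas $\alg$'s coalitions are either poisoned or already non-positive, so the fan nets it only $\orderof{W_{\max}}$. Turning this into a theorem amounts to fixing how many and which coalitions to poison — the largest ones, so that the untagged coalitions are forced small and $\Omega(n)$ in number — and then checking, across the resulting case split, that both $\SW[\pi^*(G)]\ge\frac n3 W_{\max}$ and $\SW[\alg(G,\sigma)]\le 4\,W_{\max}$ hold in every branch. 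I expect this dichotomy — \emph{either $\alg$ stays fragmented and $S$ is long, or it clumps and the clumps are both worthless and poisonable} — to be the main obstacle; once it is set up correctly, everything else is routine geometric-series bookkeeping.
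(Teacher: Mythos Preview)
Your overall architecture — adaptive adversary, geometrically growing weights, bounding $\SW[\alg]\le O(W_{\max})$ while building an optimum of order $\Omega(n\,W_{\max})$ — matches the paper's. But the dichotomy you flag as ``the main obstacle'' is genuinely unresolved in your sketch, and your proposed fix does not close it. Concretely: if $\alg$ ignores incentives and clumps \emph{all} of $x_1,\dots,x_{n-1}$ into one coalition, then under your rule ``poison every tagged coalition'' the fan at step $n$ has \emph{no} positive edges at all, so $\SW[\pi^*]=O(B^{n-1})$ and the ratio is $\Theta(1)$, not $O(1/n)$. The phrase ``so that the untagged coalitions are forced small and $\Omega(n)$ in number'' does not help: nothing the adversary does at step $n$ can change how many coalitions $\pi_{n-1}$ already contains. (A half-and-half choice at step $n$ — give $x_n$ weight $+B^n$ to $\lceil |C|/2\rceil$ agents and $-B^n$ to the rest in each coalition $C$ — \emph{would} work, but that is not what you proposed.)

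The paper sidesteps your dichotomy entirely by two changes to the per-step construction. First, at step $i$ it gives $a_i$ positive weight $n^i$ to \emph{many} agents (all but one per size-$2$ coalition), not just to a single $q_i$; this guarantees that joining any existing size-$2$ coalition is exactly break-even for $a_i$. Second — and this is the key trick — the instant the algorithm first produces a coalition of size $\ge 3$, the adversary sets all remaining weights to $0$. Because the join was break-even, $\SW[\alg]$ is frozen at its pre-join value $\le 4n^{j^*-1}$, while the edge of weight $n^{j^*}$ already present yields $\SW[\pi^*]\ge 2n^{j^*}$, giving ratio $\le 2/n$ on the spot. Only if $\alg$ maintains coalitions of size $\le 2$ throughout does the construction run to step $n$, and then $\pi_{n-1}$ automatically has $\ge n/2$ coalitions, so the large fan comes for free. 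In short, the paper handles your ``clumping'' case by early termination rather than by a final poisoning step, and it can terminate early precisely because it set up each step so that clumping is break-even rather than merely slightly negative.
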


\begin{proof}
	In this proof, we repeatedly bound the partial sum of the geometric sequence as 
	\begin{equation}\label{eq:geometric}
		\sum_{k=0}^{n}x^k = \frac{x^{n + 1} - 1}{x - 1} \le x^{n + 1} \text{ for } x \ge 2\text.
	\end{equation}
	
	Let $\alg$ be any online coalition formation algorithm under free dissolution.
	Let $n\in \mathbb N$ with $n\ge 12$.
	We will iteratively construct an adversarial instance $(G^n,\sigma^n)$ where $G^n = (N^n, w^n)$ is a graph with $n$ agents, for which it holds that 
	$$\SW[\alg(G^n,\sigma^n)] \le \frac {12}n \SW[\pi^*(G^n)]\text.$$
	
	For the remaining proof, we will omit the constructed instances' superscript for a more straightforward exposition.
	Let $N = \{a_i\colon i\in [n]\}$ be the agent set, and assume that agents arrive in the order $\sigma = (a_1,\dots, a_n)$.
	For $i\in [n]$, let $\pi_i$ denote the partition created by $\alg$ on input $(G,\sigma)$ after the arrival of $a_i$.
	Clearly, $\pi_1 = \{\{a_1\}\}$.
	
	Let $i\ge 2$.
	For our adversarial instance, we distinguish two cases to define the weights of $a_i$ based on previously formed partitions.
	If there exists $j\in [i-1]$ with $\max\{|C|\colon C\in \pi_j\} \ge 3$, i.e., there exists a coalition of size at least~$3$ in a previous partition, then $w(a_i,a_j) = 0$ for all $j\in [i-1]$.
	Otherwise, it holds that $|C| \le 2$ for all $C\in \pi_{i-1}$.
	Consider any set $S_i \subseteq N$ with $|S_i\cap C| = 1$ for all $C\in \pi_{i-1}$ with $|C| = 2$, and $S_i\cap C = \emptyset$, otherwise.
	In other words, $S_i$ precisely contains one agent from every coalition of size~$2$. 
	We define 
	$$w(a_i,a_j) = \begin{cases}
		- n^i & \text{if}\ j \in S_i\text,\\
		n^i & \text{otherwise.}
	\end{cases}$$
	
		Thus, as long as the algorithm maintains partitions with coalitions of size at most~$2$, agents' utilities are increasing by a factor of $n$ for each agent. 
		Moreover, the $i$th arriving agent has a utility of $-n^i$ for one agent from each coalition of size~$2$ existing at her arrival and a utility of $n^i$ for all other agents. 
		In \Cref{fig:worst-case-diss}, we display a possible graph evolving from our construction.
		For a clearer exposition, we display the hypothetical situation where $n = 5$, even though our proof assumes $n\ge 12$.
		The displayed graph would be constructed if $S_2 = \emptyset$, $S_3 = \{2\}$, $S_4 = \{1\}$, and $S_5 = \{1,2\}$.
		This might be the adversarial instance if $\alg$ forms $\pi_2 = \{\{a_1,a_2\}\}$, then $\pi_3 = \{\{a_1,a_3\}, \{a_2\}\}$ after dissolving $\{a_1,a_2\}$, and afterwards $\pi_4 = \{\{a_1,a_3\},\{a_2,a_4\}\}$.
	
	\begin{figure}
		\centering
		\begin{tikzpicture}
			\node[draw, circle] (a3) at (0,0) {$a_3$};
			\node[draw, circle] (a1) at (150:2.1) {$a_1$};
			\node[draw, circle] (a2) at (30:2.1) {$a_2$};
			\node[draw, circle] (a4) at (0,-2.1) {$a_4$};
			\node[draw, circle] (a5) at (0,-4.2) {$a_5$};
		
			\draw (a1) edge node[fill = white, pos = .5] {\footnotesize $5^2$} (a2);
		
			\draw (a1) edge node[fill = white, pos = .5] {\footnotesize $5^3$} (a3);
			\draw (a2) edge node[fill = white, pos = .5] {\footnotesize $-5^3$} (a3);

			\draw (a1) edge node[fill = white, pos = .5] {\footnotesize $-5^4$} (a4);
			\draw (a2) edge node[fill = white, pos = .5] {\footnotesize $5^4$} (a4);
			\draw (a3) edge node[fill = white, pos = .5] {\footnotesize $5^4$} (a4);

			\draw[bend right] (a1) edge node[fill = white, pos = .5] {\footnotesize $-5^5$} (a5);
			\draw[bend left] (a2) edge node[fill = white, pos = .5] {\footnotesize $-5^5$} (a5);
			\draw[bend right= 35] (a3) edge node[fill = white, pos = .5] {\footnotesize $5^5$} (a5);
			\draw (a4) edge node[fill = white, pos = .5] {\footnotesize $5^5$} (a5);
		\end{tikzpicture}
		\caption{Illustration of the adversarial instances in the proof of \Cref{thm:DissolutionStrongUB}.
		\label{fig:worst-case-diss}}
	\end{figure}
	
	We compute a bound for the performance of $\alg$ for both cases in the definition of the adversarial instance separately.
	We start with the case where $\alg$ forms a coalition of size at least $3$.
	
	\begin{claim}\label{cl:largecoals}
		Assume that there exists $j\in [n]$ with $\max\{|C|\colon C\in \pi_j\} \ge 3$.
		Then, $$\SW[\alg(G,\sigma)] \le \frac 2n \SW[\pi^*(G)]\text.$$
	\end{claim}
	
	\begin{claimproof}
		Assume that there exists $j\in [n]$ with $\max\{|C|\colon C\in \pi_j\} \ge 3$ and let $j^*$ be the minimum integer for which this is the case.
		
		Then, for $\ell > j^*$, the utilities between $a_\ell$ and any other agent is~$0$.
		Additionally, $a_{j^*}$ joins a coalition of size~$2$, which does not increase the social welfare by construction.
		Therefore, $\SW[\alg(G,\sigma)]$ is bounded by the largest social welfare that can be achieved by $\pi_{j^*-1}$.
		
		Moreover, by assumption on $j^*$, partition $\pi_{j^*-1}$ consists of coalitions of size at most~$2$.
		There, the utility of the agents in each coalition is bounded by the utility of the higher indexed agent, where each of the utilities $n^\ell$ for $2\le \ell \le j^*-1$ is assumed by at most one pair.
		Consequently, 
		
		$$\SW[\alg(G,\sigma)] \le \sum_{\ell = 2}^{j^*-1} 2n^\ell \le 2n^{j^* - 1} + 2\sum_{\ell = 0}^{j^* - 2}n^\ell \overset{\text{Eq.~(\ref{eq:geometric})}}{\le} 4n^{j^*-1}\text.$$
		
		However, we can construct a partition with comparatively high welfare, which gives a lower bound on the value of the maximum welfare partition.
		Let $i^* \in [j^* - 1]$ with $w(a_{j^*},a_{i^*}) = n^{j^*}$.
		Such an agent exists as $a_{j^*}$ has a negative utility for at most half of the agents present at her arrival.
		
		Consider the partition $\pi' = \{\{a_{j^*},a_{i^*}\}\} \cup \{\{a_k\}\colon k\in [n]\setminus \{j^*,i^*\}\}$.
		It holds that $\SW[\pi'] = 2n^{j^*}$.
		We obtain
		$$\SW[\alg(G,\sigma)] \le 4 n^{j^*-1} = \frac 2n 2n^{j^*} = \frac 2n \SW[\pi'] \le \frac 2n \SW[\pi^*(G)]\text.\eqno\qedhere$$
	\end{claimproof}
	
	It remains to consider the case where the algorithm maintains coalitions of size at most~$2$.

	\begin{claim}\label{cl:smallcoals}
		Assume that, for all $j\in [n]$, it holds that $\max\{|C|\colon C\in \pi_j\} \le 2$.
		Then, $$\SW[\alg(G,\sigma)] \le \frac {12}n \SW[\pi^*(G)]\text.$$
	\end{claim}
	
	\begin{claimproof}
		Assume that, for all $j\in [n]$, it holds that $\max\{|C|\colon C\in \pi_j\} \le 2$.
		Hence, in particular, $\pi_n$ consists of coalitions of size at most~$2$.
		By the design of the utilities, we know that
		
		\begin{equation*}
			\SW[\alg(G,\sigma)] \le \sum_{j =  \lceil \frac{n}{2} \rceil}^{n} 2n^j \le 2n^n + 2\sum_{j = 0}^{n - 1}n^j \overset{\text{Eq.~(\ref{eq:geometric})}}{\le} 4n^n\text.
		\end{equation*}
	
		Now, we once again construct a partition that performs significantly better in terms of welfare.
		Recall that $S_n$ is the set of agents for which $a_n$ has a negative utility.
		Consider the partition $\pi' = \{\{a_j: j \in [n] \setminus S_n\}\} \cup \{\{a_j\}: j \in S_n\}$, i.e., the only coalition of size larger than~$1$ is formed by $a_n$ and all agents for which $a_n$ has a positive utility.
		Let $T = N\setminus S_n$ and $t = |T| = |N\setminus S_n|$ be the set and number of agents that form a coalition together with $a_n$.
		Then, the negative utility accumulated by agents in $T$ is bounded by 
		\begin{equation*}
			\sum_{j = 1}^{t-1} 2(t-j) n^{n-j} \le 2(t - 1)\left(n^{n - 1} + \sum_{j=2}^{t-1}n^{n-j} \right) \le 2(t - 1)\left(n^{n - 1} + \sum_{j=0}^{n - 2}n^{j} \right) \overset{\text{Eq.~(\ref{eq:geometric})}}{\le} 4(t-1)n^{n-1}\text.
		\end{equation*}
	
		Hence, $\SW[\pi'] \ge 2(t-1) n^n - 4(t-1) n^{n-1}$. 
		Next, we combine these insights.
		In the following computations, the second inequality holds because of $t \le n - 1$, and the third holds whenever $\frac n3 \le t-1$. 
		The latter is true for $n\ge 3$ since $t\ge \frac{n-1}2$.
		Finally, the penultimate inequality is valid for $n\ge 12$.
		We obtain
		\begin{align*}
			& \SW[\alg(G,\sigma)] + 4(t-1)n^{n-1}
			 \le 4n^n + 4(t-1)n^{n-1} \le 8 n^n 
			= \frac {12}n 2 \frac {n}3 n^n \le \frac {12}n 2(t-1)n^n\\
			&\le \frac{12}n \SW[\pi'] + \frac {12}n4(t-1) n^{n-1}
			\le \frac{12}n \SW[\pi'] + 4(t-1) n^{n-1} 
			\le \frac{12}n \SW[\pi^*(G)] + 4(t-1) n^{n-1}\text.
		\end{align*}
		Subtracting $4(t-1) n^{n-1}$ from both sides of the chain of inequalities yields the desired bound.
	\end{claimproof}
	
	Combining Claims~\ref{cl:largecoals} and~\ref{cl:smallcoals}, we conclude that $\SW[\alg(G,\sigma)] \le \frac {12}n \SW[\pi^*(G)]$, as desired.
\end{proof}

\section{Conclusion}

\begin{table}
	\caption{Overview of online matching and coalition formation algorithms' competitive ratios.
	The rows display the performance of the greedy algorithm, the performance of the best-known algorithm, and the best-known upper bound for the competitive ratio of any algorithm.
	The general coalition formation model results are due to \citet{FMM+21a} and depend on the minimum and maximum utilities $\Umin$ and $\Umax$.
	In all our models, we bound the performance of the greedy algorithm independent of the utility range.
	In contrast to the general model, we can always beat the greedy algorithm with more sophisticated algorithms.
	\label{tab:results}}
			\begin{equation*}
			\renewcommand{\arraystretch}{1.7}
			\begin{array}{lcccc}
			& 
			& 
			&   \multicolumn{2}{c}{\textnormal{Free dissolution}}
			\\\cline{4-5}
			&\textnormal{Basic model} &\textnormal{Random arrival}& \multicolumn{1}{c}{\textnormal{Matching}} & \multicolumn{1}{c}{\textnormal{Coalition formation}}\\
			\midrule
			\textnormal{Greedy algorithm} &   \Theta\left(\frac 1n\frac{\Umin}{\Umax}\right) & \Theta\left(\frac 1{n^2}\right) &   \Theta\left(\frac 1n\right) & \Theta\left(\frac 1{n^2}\right)\\
			\textnormal{Best known algorithm} &   \Theta\left(\frac 1n\frac{\Umin}{\Umax}\right) & \Theta\left(\frac 1n\right) &   \frac 1{3 + 2\sqrt{2}} & \frac 1{(3 + 2\sqrt{2})n}\\
			\textnormal{General upper bound} &   \orderof{\frac 1n\frac{\Umin}{\Umax}} & \frac 13 & \frac 1{3 + 2\sqrt{2}} & \frac {12}n\\
			\bottomrule%
			\end{array}
			\end{equation*}
\end{table}

We have considered two models of online coalition formation that both facilitate the existence of good algorithms compared to a deterministic arrival model.
In the first model, the adversary's power is reduced to merely selecting an instance, while the arrival order is random.
In the second model, algorithms must perform well against an adversary that can fix both the instance and the arrival order, but we have increased the algorithm's capabilities by allowing for free dissolution.
Our main results are summarized in \Cref{tab:results}.
Both models allow for algorithms that achieve a competitive ratio of $\Theta\left(\frac 1n\right)$.
Interestingly, this precisely eliminates weight dependencies of the best possible competitive ratio in the basic model studied by \citet{FMM+21a}.

Notably, a competitive ratio of $\Theta\left(\frac 1n\right)$ matches an inapproximability result by \citet{FKV22a}
for social welfare in the offline setting: unless $\P\,=\,\NP$, there exists no polynomial-time approximation algorithm for maximizing social welfare in the offline setting with an approximation ratio of $\frac 1 {n^{1-\epsilon}}$ where $\epsilon > 0$ is arbitrary.
Since every online algorithm can also be applied in the offline setting, this impossibility extends to online algorithms as well, i.e., there cannot be polynomial-time online algorithms with a competitive ratio of $\frac 1 {n^{1-\epsilon}}$.
Even though online algorithms are not necessarily bound to limited computational power, many online algorithms, and in particular, all algorithms considered in this paper, perform all decisions in polynomial time.
Hence, the existence of algorithms with a competitive ratio better than $\Theta \left( \frac 1n \right)$ seems unlikely.
In other words, we propose two online settings where our algorithmic capabilities are asymptotically as good as offline possibilities.
In the free dissolution model, we even prove a stronger statement: even with unlimited computational power, a competitive ratio of more than $\frac {12}n$ cannot be achieved.

In addition to improving the competitive ratio compared to the basic model, the nature of good algorithms also differs.
Recall that in the basic model, the greedy algorithm, which always assigns agents to coalitions leading to the highest immediate increase in social welfare, achieves an optimal competitive ratio.
By contrast, in both of our models, the competitive ratio of the greedy algorithm is worse than that of the best obtained algorithms by a linear factor. 

Matching plays a vital role in obtaining algorithms that improve upon the performance of the greedy algorithm.
In the random arrival model, we present an algorithm whose output dominates the weight of a simple matching algorithm.
Hence, matchings occur implicitly in the algorithm's analysis.
Under free dissolution, our coalition formation algorithm itself is a matching algorithm for a general online matching problem.
We use approaches from an online matching problem with edge instead of vertex arrival to analyze the matching domain.
The key algorithmic idea is to enhance the greedy algorithm by adding a threshold for the improvement in social welfare whenever dissolving a coalition (or edge).
In particular, this leads to an optimal competitive ratio for creating matchings.

Next, we want to discuss optimal algorithms.
In the random arrival model, deriving the optimal competitive ratio is left open
and our best bound is $\frac 13$.
It is an intriguing open problem to determine whether online algorithms with unlimited computational power can achieve an asymptotically better competitive ratio than $\Theta\left(\frac 1n\right)$, as obtained by our iterated waiting algorithm.
Interestingly, the worst-case instances for all of our considered algorithms under random arrival, i.e., the greedy algorithm, the waiting greedy algorithm for a known agent number $n$, as well as the iterated waiting algorithm for unknown $n$ are identical and have the property that instances contain a single important weight.
This raises the question of whether these instances lead to general algorithmic limitations.
We have answered this negatively by showing that an optimal stopping algorithm achieves a constant competitive ratio in a restricted domain containing these instances.

In the model allowing for free dissolution, we have shown that the competitive ratio we achieve is asymptotically optimal.
Moreover, we obtain interesting insights in the corresponding matching domain: there, the worst-case examples are trees, and, therefore, in particular bipartite.
Hence, it is already the combination of edge weights and having all vertices arriving online that causes algorithmic limitations. 

Notably, our paper was entirely focused on deterministic algorithms (which we see as a benefit of our algorithmic results).
Still, exploring whether randomization can lead to improved algorithms might be interesting. 
In particular, this is the case in the matching domain under free dissolution, where it is possible to beat the competitive ratio of $\frac 1{3 + 2\sqrt{2}}$ \citep{ELSW13a}.
We remark that the optimal competitive ratio achievable by randomized algorithms is still open in this setting.

Even now, the online model of coalition formation remains scarcely researched and seems to deserve further attention.
Aside from bridging the gap in the achievable competitive ratio under random arrival, it would be interesting to search for conditions that allow to beat a competitive ratio of $\Theta\left(\frac 1n\right)$, which still feels quite bad.
Because of the aforementioned inapproximability results, this is much related to gaining a better understanding of approximation guarantees in the offline model.
\citet{BCS25a} recently made progress in this question by showing that an approximation guarantee of $\orderof{\log n}$ can be obtained by a randomized polynomial-time algorithm for ASHGs in which the sum of all weights is non-negative.
Studying this and other weight restrictions could lead to new insights in the online model, as well.

Another avenue for future research would be to gain more knowledge about other classes of coalition formation games.
Some recent progress was presented by \citet{BRS25a}, who study random arrival and free dissolution in fractional hedonic games \citep{ABB+17a}.
However, hedonic games offer a wide range of further preference representations.
It would be interesting to study online algorithms in models where these are given by ordinal rankings rather than utilities, which can, for instance, be obtained by ranking coalitions according to their best or worst members \citep{CeRo01a}. 
An additional challenge in such models is to develop a reasonable quantifiable objective.

\section*{Acknowledgments}

This work was supported by the AI Programme of The Alan Turing Institute and the Deutsche Forschungsgemeinschaft under grants BR 2312/11-2 and BR 2312/12-1.
We thank Saar Cohen, Viktoriia Lapshyna, Alexander Schlenga, and Thorben Tr\"obst for valuable discussions.
We are grateful for the helpful comments by the anonymous reviewers of ESA 2023 and ACM TALG.

\publ{check whether page numbers are available for forthcoming articles. check if new journal versions have appeared}

\appendix
\section*{Appendix}

In the appendix, we provide more technical details of our results.

\section{Worst-Case Performance of Algorithms under Random Arrival}\label{app:UppBounds}

In this appendix, we provide the upper bounds for the performance of our algorithms under random arrival.
To focus on the crucial ideas, we defer the rather technical proofs of some binomial identities and inequalities to \Cref{app:binom}.

This appendix uses the following notation for a fixed arrival order $\sigma$.
Given an agent $x\in N$ and arrival time $t\in [n]$, we denote by $A_{\le t}^x$, $A_{=t}^x$, and $A_{\ge t}^x$ the events that $\sigma^{-1}(x)\le t$, $\sigma^{-1}(x) = t$ and  $\sigma^{-1}(x)\ge t$, respectively.

We now prove the upper bound for the performance of $\wgdy$.

\WGDY*

\begin{proof}[Proof of upper bound]
	We complete the proof by showing that $\wgdy$ has a competitive ratio of $\orderof{\frac{1}{n}}$. 
	To this end, we consider again the family of ASHGs given by $G^{k,\epsilon} = (N^{k,\epsilon},w^{k,\epsilon})$ as defined in the proof of \Cref{thm:compGDY} and illustrated in \Cref{fig:hard_instance}.
	Again, we analyze instances for fixed $k$ and omit $k$ as superscripts.
	
	Moreover, as in the proof of \Cref{thm:compGDY}, in the limit case for small $\epsilon$, the optimal partition has social welfare of $2$, and the social welfare of the partition computed by $\wgdy$ is equal to twice the probability of matching $a$ with $b$ in $G^{\epsilon^*}$ for $\epsilon^* = \frac 12$.

	We start by splitting the probability that $\{a, b\}$ forms a coalition by differentiating whether $a$ and $b$ arrive in the same phase of the algorithm and which one arrives first. 
	If they arrive in different phases, then the probability that the coalition $\{a,b\}$ forms is the same because of symmetry. 
	If both of them arrive in the first or second phase, then $\{a,b\}$ does not form, and thus, the probability is $0$. 
	Indeed, in the first phase, no coalitions are formed. 
	In the second phase, $a, b$ are directly matched to some agent $x \in X$ or $y \in Y$, respectively, because $k + 1$ agents arrive in the first phase and thus, because of the pigeonhole principle, at least one agent of each of $X$ and $Y$ arrives in the first phase.
	Thus, we can assume without loss of generality that $a$ arrives in the first phase and $b$ in the second. 
	The probability that $a$ arrives in the first phase is $\frac{1}{2}$ and then the probability that $b$ arrives in the second phase is $\frac{k + 1}{2k + 1}$.
	We obtain
	\begin{align*}
		c_\wgdy &\le \pr_\sigma\left(\{a,b\} \in \wgdy(G^{\epsilon^*}, \sigma)\right) \\
		&= \pr_\sigma\left(\{a,b\} \in \wgdy(G^{\epsilon^*}, \sigma) \ \Big|\  A^a_{\le k + 1}, A^b_{\ge k + 2}\right)\pr_\sigma\left(A^a_{\le k + 1}, A^b_{\ge k + 2}\right) \\
		&\phantom{=}\ + \pr_\sigma\left(\{a, b\} \in \wgdy(G^{\epsilon^*}, \sigma) \ \Big|\  A^a_{\ge k + 2}, A^b_{\le k + 1}\right)\pr_\sigma\left(A^a_{\ge k + 2}, A^b_{\le k + 1}\right) \\
		&= 2\pr_\sigma\left(\{a,b\} \in \wgdy(G^{\epsilon^*}, \sigma) \ \Big|\  A^a_{\le k + 1}, A^b_{\ge k + 2}\right)\pr_\sigma\left(A^a_{\le k + 1}, A^b_{\ge k + 2}\right) \\
		&= 2\pr_\sigma\left(\{a,b\} \in \wgdy(G^{\epsilon^*}, \sigma) \ \Big|\  A^a_{\le k + 1}, A^b_{\ge k + 2}\right)\pr_\sigma\left(A^b_{\ge k + 2} \ \Big|\  A^a_{\le k + 1}\right)\pr_\sigma\left(A^a_{\le k + 1}\right)\\
		&= 2\pr_\sigma\left(\{a,b\} \in \wgdy(G^{\epsilon^*}, \sigma) \ \Big|\  A^a_{\le k + 1}, A^b_{\ge k + 2}\right)\frac{k + 1}{2k + 1}\frac 12\\
		&= \frac{k + 1}{2k + 1}\pr_\sigma\left(\{a,b\} \in \wgdy(G^{\epsilon^*}, \sigma) \ \Big|\  A^a_{\le k + 1}, A^b_{\ge k + 2}\right) \text.
	\end{align*}
	
	Next, we sum over the number of agents $y \in Y$ that arrive in the second phase. 
	The probability that exactly $i$ agents from $Y$ arrive in the second phase can be computed using the hypergeometric distribution, i.e.,
	\begin{equation}\label{eq:hyperdist}
		\pr_\sigma\left(\big|\{y\in Y \colon A^y_{\ge k + 2}\}\big| = i \ \Big|\ A^a_{\ge k + 2}, A^b_{\le k + 1}\right) = \frac{\binom{k}{i}\binom{k}{k - i}}{\binom{2k}{k}}\text.
	\end{equation}
	
	We obtain
	\begin{align*}
		c_\wgdy
		&\le \frac{k + 1}{2k + 1} \sum_{i = 0}^{k} \pr_\sigma\left(\{a,b\} \in \wgdy(G^{\epsilon^*}, \sigma)  \ \Big|\ \big|\{y\in Y \colon A^y_{\ge k + 2}\}\big| = i, A^a_{\ge k + 2}, A^b_{\le k + 1} \right)\\
		&\phantom{=}\cdot \pr_\sigma\left(\big|\{y\in Y \colon A^y_{\ge k + 2}\}\big| = i \ \Big|\ A^a_{\ge k + 2}, A^b_{\le k + 1} \right) \\
		&\overset{\text{Eq.~(\ref{eq:hyperdist})}}{=} \frac{k + 1}{2k + 1} \sum_{i = 0}^{k} \frac{\binom{k}{i}\binom{k}{k - i}}{\binom{2k}{k}} \pr_\sigma\left(\{a,b\} \in \wgdy(G^{\epsilon^*}, \sigma) \ \Big|\  \big|\{y\in Y \colon A^y_{\ge k + 2}\}\big| = i, A^a_{\ge k + 2}, A^b_{\le k + 1}\right) \\
	\end{align*}
	
	Hence, we have now conditioned on $b$ as well as $i$ agents from $y \in Y$ arriving in the second phase, while $a$ arrives in the first phase.
	The probability that $\{a,b\}$ is formed is then equal to the probability that all agents that arrive before $b$ are from $Y$. 
	To compute this, we sum over the number of agents that have arrived before agent $b$ in the second phase. 
	If more than $i$ agents have arrived, then an agent in $X$ has arrived, and $\{a,b\}$ cannot form anymore.
	Hence, we have to consider the cases when $j$ assumes an integer between~$0$ and~$i$.
	We then calculate the probability that all $j$ agents arriving before $b$ are from $Y$. 
	Furthermore, the probability that $b$ arrives in a fixed position in the second half is $\frac{1}{k + 1}$. 
	We use \Cref{lem:inner} and \Cref{lem:outer}, whose rather techical proofs are deferred to \Cref{app:binom}.
	Once we simplify this expression, we see that $\wgdy\in \orderof{\frac{1}{n}}$.
	\begin{small}
	\begin{align*}
		c_\wgdy
		&= \frac{k + 1}{2k + 1} \sum_{i = 0}^{k} \frac{\binom{k}{i}\binom{k}{k - i}}{\binom{2k}{k}} \sum_{j = 0}^{i}  \pr_\sigma\left(\{a,b\} \in \wgdy(G^{\epsilon^*}, \sigma) \ \Big|\  A_{=k + 2 + j}^b, \big|\{y\in Y \colon A^y_{\ge k + 2}\}\big| = i, A^a_{\ge k + 2}, A^b_{\le k + 1}\right)\\
		&\phantom{=} \cdot \pr_\sigma\left(A_{=k + 2 + j}^b \ \Big|\  \big|\{y\in Y \colon A^y_{\ge k + 2}\}\big| = i, A^a_{\ge k + 2}, A^b_{\le k + 1}\right) \\
		&= \frac{k + 1}{2k + 1} \sum_{i = 0}^{k} \frac{\binom{k}{i}\binom{k}{k - i}}{\binom{2k}{k}} \sum_{j = 0}^{i}  \pr_\sigma\left(\{d \colon d \arr b, A^d_{\ge k + 2} \} \subseteq Y \ \Big|\   A_{=k + 2 + j}^b, \big|\{y\in Y \colon A^y_{\ge k + 2}\}\big| = i, A^a_{\ge k + 2}\right) \cdot \frac 1{k+1}\\ 
		&= \frac{k + 1}{2k + 1} \sum_{i = 0}^{k} \frac{\binom{k}{i}\binom{k}{k - i}}{\binom{2k}{k}} \sum_{j = 0}^{i}  \frac{\binom{i}{j}}{\binom{k}{j}}\frac{1}{k + 1} \overset{\text{\Cref{lem:inner}}}{=} \frac{k + 1}{2k + 1} \sum_{i = 0}^{k} \frac{\binom{k}{i}\binom{k}{k - i}}{\binom{2k}{k}} \frac{1}{k + 1}\frac{k + 1}{k + 1 - i}\\
		&= \frac{k + 1}{2k + 1} \sum_{i = 0}^{k} \frac{\binom{k}{i}\binom{k}{k - i}}{\binom{2k}{k}}\frac{1}{k + 1 - i} \overset{\text{\Cref{lem:outer}}}{=}  \frac{k + 1}{2k + 1}\frac{2k + 1}{(k + 1)^2}\\
		& = \frac{1}{k + 1} = \frac 2 n \in \orderof{\frac{1}{n}}
	\end{align*}
	\end{small}
	
	Hence, we have derived the desired bound.
\end{proof}

Next, we consider the performance of $\iew$.

\thmIEW*
	
\begin{proof}
	The lower bound of the competitive ratio follows directly from \Cref{thm:wgdy} and \Cref{lem:doubling} because $\iew = \mathit{I}$-$\wgdy$.
	
	For the upper bound of the competitive ratio of $\iew$, we once again consider the family of ASHGs given by $G^{k,\epsilon} = (N^{k,\epsilon},w^{k,\epsilon})$ in \Cref{fig:hard_instance}, as defined in the proof of \Cref{thm:compGDY}.
	Again, we analyze instances for fixed $k$ and omit $k$ as superscripts.
	Similar to the proofs of \Cref{thm:compGDY,thm:wgdy}, in the limit case for small $\epsilon$,  the optimal partition has a social welfare of $2$ and the social welfare of the partition computed by $\iew$ is equal to twice the probability of forming the coalition containing $a$ and $b$ for input $G^{\epsilon^*}$ with $\epsilon^* = \frac 12$.
	We only consider values of $n$ such that 
	\begin{equation}\label{eq:niceagentnumber}
		n = \sum_{i = 0}^{i^*} 2^{i+1}\text{ for some } i^* \in \mathbb N\text.
	\end{equation}
	
	This assures that $\iew$ completes all iterations.
	
	By definition, $\iew$ performs $\wgdy$ on $2^{i+1}$ agents in iteration $i$.
	The coalition $\{a, b\}$ can only form if $a$ and $b$ arrive in the same iteration~$i$.
	To compute the probability that this coalition forms in a given iteration~$i$, we proceed analogously to the proof in \Cref{thm:wgdy}.
	There is, however, one major difference that complicates the necessary computations.
	We cannot use the pigeonhole principle to determine whether agents from both $X$ and $Y$ exist in a phase, and we thus need to account for the exact number of agents from the sets $X$ and $Y$.
	As a consequence, it is also possible that the coalition $\{a, b\}$ forms even if both agents arrive in the second phase of the same iteration.
	Indeed, even if $a$ and $b$ both do not arrive in the first phase, it can still happen that only agents from $Y$ and $a$ arrive before $b$ arrives.
	
	Let us consider a fixed iteration $i$.
	There are $2^i$ agents that arrive in the first phase of iteration $i$ and $2^i$ agents that arrive in the second phase for a total of $2^{i+1}$ agents per iteration.
	The arrival time of the first agent from iteration $i$ is $1 + \sum_{j = 0}^{i - 1} 2^{j + 1} = 2^{i + 1} - 1$ and of the last agent is $\sum_{j = 0}^{i} 2^{j + 1} = 2^{i + 2} - 2$.
	To shorten the notation, let $A^x_{i, 1}$ and $A^x_{i, 2}$  denote the events that agent $x$ arrives in the first and second phase of iteration $i$, respectively.
	In other words, they represent the events $A^x_{\ge 2^{i + 1} - 1} \cap A^x_{\le 2^{i + 1} + 2^i - 2}$ and $A^x_{\ge 2^{i + 1} + 2^i - 1} \cap A^x_{\le 2^{i + 2} - 2}$, respectively.
	Summation over all $i^*$ phases yields
	\begin{align*}
		c_\iew & \le \pr_\sigma\left(\{a, b\} \in \iew\left(G^{\epsilon^*}, \sigma\right)\right) \\
		= & \sum_{i = 0}^{i^*} \left[\pr_\sigma\left(\{a, b\} \in \iew\left(G^{\epsilon^*}, \sigma\right) \Big|\  A^a_{i, 1}, A^b_{i, 2}\right) \pr_\sigma\left(A^a_{i, 1}, A^b_{i, 2}\right)\right.\\
		&\phantom{=} + \pr_\sigma\left(\{a, b\} \in \iew\left(G^{\epsilon^*}, \sigma\right) \Big|\  A^b_{i, 1}, A^a_{i, 2}\right) \pr_\sigma\left(A^b_{i, 1}, A^a_{i, 2}\right) \\
		&\phantom{=} \left.+ \pr_\sigma\left(\{a, b\} \in \iew\left(G^{\epsilon^*}, \sigma\right) \Big|\  A^a_{i, 2}, A^b_{i, 2}\right) \pr_\sigma\left(A^a_{i, 2}, A^b_{i, 2}\right)\right] \\
	\end{align*}
	
	Furthermore, we estimate the case where both the agents $a$ and $b$ arrive in the second phase, with the case where $a$ arrives in the first and $b$ in the second phase.
	For this, we can define an injective mapping of arrival orders where $a$ and $b$ arrive in the second phase by swapping the earlier one arriving as $j$th agent in the second phase with the $j$th agent in the first phase.
	This maps to arrival orders where $a$ and $b$ arrive in different phases of the same iteration.
	Moreover, whenever the coalition $\{a,b\}$ while $a$ and $b$ arrive in the second phase, then all agents before and between $a$ and $b$ are from one of the two sets $X$ and $Y$.
	Hence, $\{a,b\}$ still forms in the mapped instance.
	This yields
	\begin{equation*}%
		 \begin{aligned}
		 &\pr_\sigma\left(\{a, b\} \in \iew\left(G^{\epsilon^*}, \sigma\right) \Big|\  A^a_{i, 2}, A^b_{i, 2}\right) \pr_\sigma\left(A^a_{i, 2}, A^b_{i, 2}\right) \\
		 &\phantom{=}\le \pr_\sigma\left(\{a, b\} \in \iew\left(G^{\epsilon^*}, \sigma\right) \Big|\  A^a_{i, 1}, A^b_{i, 2}\right) \pr_\sigma\left(A^a_{i, 1}, A^b_{i, 2}\right)\\
		 &\phantom{=} + \pr_\sigma\left(\{a, b\} \in \iew\left(G^{\epsilon^*}, \sigma\right) \Big|\  A^b_{i, 1}, A^a_{i, 2}\right) \pr_\sigma\left(A^b_{i, 1}, A^a_{i, 2}\right) 
		 \end{aligned}
	\end{equation*}
	
	We use this insight together with symmetry to combine the cases where $a$ arrives before and after~$b$ to one case where $a$ arrives before $b$.
	We obtain
	\begin{align}
		c_\iew 
		& \le \sum_{i = 0}^{i^*} 2\left[\pr_\sigma\left(\{a, b\} \in \iew\left(G^{\epsilon^*}, \sigma\right) \Big|\  A^a_{i, 1}, A^b_{i, 2}\right) \pr_\sigma\left(A^a_{i, 1}, A^b_{i, 2}\right)\right.\notag\\
		& + \left.\pr_\sigma\left(\{a, b\} \in \iew\left(G^{\epsilon^*}, \sigma\right) \Big|\  A^b_{i, 1}, A^a_{i, 2}\right) \pr_\sigma\left(A^b_{i, 1}, A^a_{i, 2}\right) \right]\notag\\ 
		& \le \sum_{i = 0}^{i^*} 4\pr_\sigma\left(\{a, b\} \in \iew\left(G^{\epsilon^*}, \sigma\right) \Big|\  A^a_{i, 1}, A^b_{i, 2}\right) \pr_\sigma\left(A^a_{i, 1}, A^b_{i, 2}\right) \notag\\
		& =  \sum_{i = 0}^{i^*} 4\pr_\sigma\left(\{a, b\} \in \iew\left(G^{\epsilon^*}, \sigma\right) \Big|\  A^a_{i, 1}, A^b_{i, 2}\right) \frac{2^i}{n}\frac{2^i}{n-1} \notag\\
		& = \sum_{i = 0}^{i^*} \frac{2^{2i + 2}}{n^2 - n}\pr_\sigma\left(\{a, b\} \in \iew\left(G^{\epsilon^*}, \sigma\right) \Big|\  A^a_{i, 1}, A^b_{i, 2}\right) \text.\label{eq:initial}
	\end{align}
	
	Next, we compute $\pr_\sigma\left(\{a, b\} \in \iew\left(G^{\epsilon^*}, \sigma\right) \big|\  A^a_{i, 1}, A^b_{i, 2}\right)$ for iteration $i$. 
	First, we sum over the number of agents $j$ from $Y$ that arrive in the second phase of iteration $i$, and then we sum over all relevant arrival times of $b$.
	These are in the range $[2^{i+1} + 2^i - 1, 2^{i+1} + 2^i - 1  + j]$ as in all other cases the probability is $0$.
	Next, we compute the probability that the coalition $\{a, b\}$ forms using the hypergeometric distribution as we did in \Cref{thm:wgdy}.
	We then simplify the expression and finally bound it from above by using the inequality derived in \Cref{lem:crazy} in \Cref{app:binom}, which holds for $i, k \in \mathbb{N}_0$ and $k + 2 > 2^i$.
	Note that these conditions are fulfilled since $i$ and $k$ are non-negative integers and $2k + 2 = n = \sum_{i = 0}^{i^*} 2^{i + 1} = 2\sum_{i = 0}^{i^*} 2^{i}$ which implies $\sum_{i = 0}^{i^*} 2^{i} = k + 1 < k + 2$.
	\begin{align*}
		& \pr_\sigma\left(\{a, b\} \in \iew\left(G^{\epsilon^*}, \sigma\right) \Big|\  A^a_{i, 1}, A^b_{i, 2}\right) \\
		= & \sum_{j = 0}^{2^i - 1}\pr_\sigma\left(\{a, b\} \in \iew\left(G^{\epsilon^*}, \sigma\right) \Big|\ \big|\{y \in Y: A^y_{i, 2}\}\big| = j, A^a_{i, 1}, A^b_{i, 2} \right)\cdot \pr_\sigma \left(\big|\{y \in Y: A^y_{i, 2}\}\big| = j \Big|\,A^a_{i, 1}, A^b_{i, 2}\right) \\
		= & \sum_{j = 0}^{2^i - 1} \frac{\binom{k}{j}\binom{k}{2^i - 1 - j}}{\binom{2k}{2^i-1}} \pr_\sigma\left(\{a, b\} \in \iew\left(G^{\epsilon^*}, \sigma\right) \Big|\  \big|\{y \in Y: A^y_{i, 2}\}\big| = j, A^a_{i, 1}, A^b_{i, 2} \right) \\
		= & \sum_{j = 0}^{2^i - 1} \frac{\binom{k}{j}\binom{k}{2^i - 1 - j}}{\binom{2k}{2^i-1}} \sum_{k = 0}^{j} \pr_\sigma\left(\{a, b\} \in \iew\left(G^{\epsilon^*}, \sigma\right) \Big|\  A^b_{= 2^{i + 1} + 2^i - 1 + k}, \big|\{y \in Y: A^y_{i, 2}\}\big| = j, A^a_{i, 1}, A^b_{i, 2}\right)\\
		&\phantom{=}\cdot \pr_\sigma\left(A^b_{= 2^{i + 1} + 2^i - 1 + k} \Big|\  \big|\{y \in Y: A^y_{i, 2}\}\big| = j, A^a_{i, 1}, A^b_{i, 2}\right) \\
		= & \sum_{j = 0}^{2^i - 1} \frac{\binom{k}{j}\binom{k}{2^i - 1 - j}}{\binom{2k}{2^i-1}} \sum_{k = 0}^{j} \frac{1}{2^i} \pr_\sigma\left(\{a, b\} \in \iew\left(G^{\epsilon^*}, \sigma\right) \Big|\  A^b_{= 2^{i + 1} + 2^i - 1 + k}, \big|\{y \in Y: A^y_{i, 2}\}\big| = j, A^a_{i, 1} \right) \\
		= & \sum_{j = 0}^{2^i - 1} \frac{\binom{k}{j}\binom{k}{2^i - 1 - j}}{\binom{2k}{2^i-1}} \sum_{k = 0}^{j} \frac{1}{2^i} \pr_\sigma\left(\{d \in N \setminus \{a, b\}: d \succ^\sigma b \land A^d_{i, 2}\} \subseteq Y \Big|\  A^b_{= 2^{i + 1} + 2^i - 1 + k}, \big|\{y \in Y: A^y_{i, 2}\}\big| = j, A^a_{i, 1} \right) \\
		= & \sum_{j = 0}^{2^i - 1} \frac{\binom{k}{j}\binom{k}{2^i - 1 - j}}{\binom{2k}{2^i-1}} \sum_{k = 0}^{j} \frac{1}{2^i} \frac{\binom{j}{k}}{\binom{2^i-1}{k}} 
		= \sum_{j = 0}^{2^i - 1} \frac{\binom{k}{j}\binom{k}{2^i - 1 - j}}{\binom{2k}{2^i-1}} \frac{1}{2^i - j}  \overset{\text{\Cref{lem:crazy}}}{\le} \frac{2}{2^i} \\
	\end{align*}
	
	Inserting this to \Cref{eq:initial} and applying \Cref{eq:niceagentnumber} yields
	\begin{align*}
		c_\iew \le & \sum_{i = 0}^{i^*} \frac{2^{2i + 2}}{n^2 - n}\pr_\sigma\left(\{a, b\} \in \iew\left(G^{\epsilon^*}, \sigma\right) \Big|\  A^a_{i, 1}, A^b_{i, 2}\right)  \\
		\le & \sum_{i = 0}^{i^*} \frac{2^{2i + 2}}{n^2 - n} \frac{2}{2^i} 
		=   \frac{4}{n^2 - n} \sum_{i = 0}^{i^*} 2^{i + 1} 
		=  \frac{4n}{n^2 - n} = \frac{4}{n - 1} \in \orderof{\frac{1}{n}} \text.
	\end{align*}
	
	We have shown that the competitive ratio is $\orderof{\frac{1}{n}}$. We conclude that $\iew$ has a competitive ratio of $c_\iew \in \Theta\left(\frac{1}{n}\right)$ and this bound is tight.
\end{proof}

\section{Optimal Stopping Algorithm}\label{ap:odds}

This section aims to prove \Cref{prop:maxe}.
Throughout most of the section, we assume that the number of arriving agents is known to the algorithm. 
Ultimately, we can use \Cref{lem:doubling} to generalize the method to unknown $n$ while maintaining a constant competitive ratio.\footnote{This does, however, not mean that we can use our method as an optimal stopping algorithm for unknown $n$ because the doubling variant of our algorithm stops once in every phase.
Indeed, it is unclear how to extend optimal stopping algorithms to unknown~$n$ \cite[Chapter~4]{Brus00a}.}
We introduce a new algorithm that we will analyze under a random arrival order.

The key idea is to interpret the computation of a partition in an online ASHG as an optimal stopping problem for independent events.
Let $J_1, \dots, J_n$ be mutually independent indicator events.
A time step $k$ is called a \emph{success time} if $J_k = 1$.
In an \emph{optimal stopping problem}, the indicator events $J_1, \dots, J_n$ are observed sequentially, and in each step, the algorithm may stop the process.
The goal is to stop at the last success time, i.e., the goal is to maximize the probability of stopping at the last time step where the corresponding event happens.

An optimal algorithm for this problem is the odds strategy that stops at the first success time after a certain stopping time.
The optimal stopping time $s$, as well as the resulting success probability of the algorithm, can be computed in sublinear time \cite[Theorem~1]{Brus00a}.
Let $p_k = \pr(J_k = 1)$ be the probability that the $k$th event happens and let $r_k = \frac{p_k}{1 - p_k}$ be the so called \emph{odds} of $I_k$.
The odds algorithm then sums the odds in reverse order until their sum exceeds one at time step $s$ or sets $s = 1$ otherwise.
The algorithm then returns the optimal stopping time $s$.
More precisely, if we are in the first case, the algorithm returns the largest $s$ such that $\sum_{i = s}^{n} r_i \ge 1$.

For $2\le k \le n$, let $J_k$ be the event 
\begin{quote}
	``the maximum weight edge connected to the $k$th agent is strictly larger than all edges among the first $k-1$ agents.''
\end{quote}

We show next that the events $J_2, \dots, J_n$ are mutually independent. 
This allows us to execute the odds algorithm.

We start with some notation.
Let $E_k := \left\{\{a, b\} \in {N\choose 2} \colon A^a_{\le k} \land A^b_{\le k}\right\}$ and $e_k := \arg \max_{e \in E_k} w(e)$ where we assume $e_k = \left(a_k, b_k\right)$.
The set $E_k$ contains all edges between the first $k$ agents, edge $e_k$ is the maximal weight edge among those, and $a_k$ and $b_k$ are the agents that are connected by $e_k$.
It holds that event $J_k$ occurs if and only if $A^{a_k}_{= k}$ or $A^{b_k}_{= k}$.

\begin{lemma}\label{lem:oddsindep}
	Let an ASHG be given for which all edge weights are pairwise different.
	Consider indices $2 \le k_1 < k_2 < \dots < k_j \le n$ for some $j\le n$. Then, it holds that
	\begin{equation}\label{eq:independence_lhs}
		\pr\left(\bigcap_{i = 1}^j J_{k_i}\right) = \prod_{i = 1}^{j} \frac{2}{k_i} \text.
	\end{equation}
	In particular, the events $J_2, \dots, J_n$ are mutually independent.
\end{lemma}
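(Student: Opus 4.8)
The plan is to prove, by induction on $j$, the slightly more general statement that for every subset $M\subseteq N$ of size $m$ (carrying the edge weights of the ASHG restricted to $M$, which remain pairwise distinct) and all indices $2\le k_1<k_2<\dots<k_j\le m$, a uniformly random ordering of $M$ satisfies $\pr\!\left(\bigcap_{i=1}^{j} J_{k_i}\right)=\prod_{i=1}^{j}\frac{2}{k_i}$. Stating the claim for an arbitrary ground set is what makes the induction go through, since the right-hand side does not depend on $M$. The base case $j=0$ is the trivial identity $1=1$ (empty intersection, empty product). Note also that for $j=1$ one immediately gets $\pr(J_{k_1})=\EV[\,\pr(J_{k_1}\mid S_{k_1})\,]=\tfrac{2}{k_1}$, since given the \emph{set} $S_{k_1}$ of the first $k_1$ agents the last arrival $\sigma(k_1)$ is uniform on $S_{k_1}$ and, by the equivalence recorded before the lemma, $J_{k_1}$ occurs exactly when $\sigma(k_1)\in\{a_{k_1},b_{k_1}\}$, a pair that is a deterministic function of $S_{k_1}$ alone.

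For the inductive step I would peel off the largest index $k_j$ and condition on the set $S_{k_j}$. Given $S_{k_j}=S$, the induced order on $S$ is uniform, so $e_{k_j}$ is the fixed maximum-weight edge of $S$ with endpoints $a,b$, and $\sigma(k_j)$ is uniform on $S$; hence $J_{k_j}$ holds with probability $\tfrac{2}{k_j}$, and conditionally on $J_{k_j}$ the agent $\sigma(k_j)$ is uniform on $\{a,b\}$. Conditioning further on $\sigma(k_j)=v\in\{a,b\}$, the first $k_j-1$ agents form a uniformly random ordering of $S\setminus\{v\}$, a set of size $k_j-1\ge k_{j-1}$. Since each event $J_{k_i}$ with $i<j$ depends only on $(\sigma(1),\dots,\sigma(k_i))$ with $k_i\le k_j-1$ — it asks whether $\sigma(k_i)$ is an endpoint of the maximum-weight edge among the first $k_i$ agents, and this is unaffected by the agent $v$ that arrives later — it coincides with the event of the same name in the subproblem on ground set $S\setminus\{v\}$. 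The induction hypothesis applied there gives $\pr\!\big(\bigcap_{i<j}J_{k_i}\mid S_{k_j}=S,\ \sigma(k_j)=v\big)=\prod_{i<j}\frac{2}{k_i}$, a value independent of $S$ and $v$.

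Combining the pieces, $\pr\!\big(\bigcap_{i\le j}J_{k_i}\mid S_{k_j}=S\big)=\sum_{v\in\{a,b\}}\frac{1}{k_j}\prod_{i<j}\frac{2}{k_i}=\prod_{i\le j}\frac{2}{k_i}$, which is free of $S$; taking the expectation over $S_{k_j}$ yields \eqref{eq:independence_lhs}. Mutual independence of $J_2,\dots,J_n$ is then immediate: specializing to $j=1$ gives $\pr(J_k)=\tfrac{2}{k}$, so for every index subset the joint probability equals the product of the marginals, which is the definition of mutual independence. The step I expect to require the most care is precisely this reduction to a subproblem: one must argue cleanly that after conditioning on $S_{k_j}=S$ and $\sigma(k_j)=v$ the remaining randomness is exactly a uniform ordering of $S\setminus\{v\}$, and that the events $J_{k_1},\dots,J_{k_{j-1}}$ are literally the same events in that smaller instance — so that the ground-set-agnostic induction hypothesis applies — rather than subtly depending on the discarded agent $v$.
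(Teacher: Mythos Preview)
Your proof is correct. Both your argument and the paper's hinge on the same structural insight: once the \emph{set} of the first $t$ agents is fixed, the maximum-weight edge among them---and hence its two endpoints---is determined independently of the ordering, and the agent at position $t$ is uniform on that set. The execution differs, however. The paper gives a single-shot counting argument: it builds the permutation in reverse, from position $n$ down to position $1$, observing that at each constrained position $t\in\{k_1,\dots,k_j\}$ there are exactly $2$ valid choices (the endpoints of the current maximum edge) and at each unconstrained position $t$ there are $t$ choices; dividing the resulting product by $n!$ yields $\prod_i 2/k_i$ directly. Your version instead sets up an induction on $j$, peels off only the largest index $k_j$, conditions on $S_{k_j}$ and on $\sigma(k_j)$, and then invokes the hypothesis on the residual ground set $S\setminus\{v\}$. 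Your route is a bit longer but makes the probabilistic decoupling explicit and is arguably cleaner about why the value is independent of the conditioning set $S$; the paper's route is terser and avoids the need to generalize the statement to arbitrary ground sets.
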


\begin{proof}
	Let $2 \le k_1 < k_2 < \dots < k_j \le n$ for $j \le n$ be the arrival times at which event $J_{k_i}$ must happen and let $1 \le \ell_1 < \ell_2 < \dots < \ell_{n - j} \le n$ be all other arrival times.
	We first count the number of arrival orders where the events $J_{k_1}, J_{k_2}, \dots, J_{k_j}$ happen simultaneously.
	In particular, it is irrelevant if the events $J_{\ell_1}, \dots, J_{\ell_{n - j}}$ happen or not.
	We consider the agents in the reverse arrival order.
	Then, if arrival time $t$ is contained in $\{k_1, \dots, k_j\}$, we have just two possible agents that can arrive at time $t$, i.e., the two agents that currently have the maximum weight edge between them.
	Otherwise, we have $t$ choices because we can choose among all $t$ present agents.
	Together, this shows that the total number of arrival orders in which the events $J_{k_1}, J_{k_2}, \dots, J_{k_j}$ happen simultaneously is $\prod_{i = 1}^{j}2\prod_{i=1}^{n-j}\ell_i$.
	Next, we divide by the total number of arrival orders $n!$ and get
	
	\begin{equation*}
		\pr\left(\bigcap_{i = 1}^j J_{k_i}\right) = \frac{\prod_{i = 1}^{j}2\prod_{i=1}^{n-j}\ell_i}{n!} = \frac{\prod_{i = 1}^{j}2}{\prod_{i = 1}^{j}k_i} = \prod_{i = 1}^{j} \frac{2}{k_i} \text.
	\end{equation*}
	
	This proves \Cref{eq:independence_lhs}.
	As a consequence,
	\begin{equation*}
		\pr\left(\bigcap_{i = 1}^j J_{k_i}\right) = \prod_{i = 1}^{j} \frac{2}{k_i} = \prod_{i = 1}^{j} \pr(J_{k_i})\text.
	\end{equation*}	
	The second equality follows from applying \Cref{eq:independence_lhs} for single events.
	Hence, we have shown mutual independence of $J_2, \dots, J_n$.
\end{proof}

We consider the following algorithm.
\begin{definition}[Maximum edge algorithm]
	The \emph{maximum edge algorithm} ($\maxe$) executes the odds algorithm \citep{Brus00a} offline to compute an optimal stopping time $s$, where $n$ and $p_k = \frac{2}{k}$ for all $k \in \{2, \dots, n\}$ is used as input. Next, the odds strategy is performed online, i.e., no coalition is formed until at least $s$ agents arrived, then the first edge that is larger than all previously seen edges is matched if its weight is strictly larger than $0$.
	If multiple such edges arrive in a step, then the one with highest weight is matched.
\end{definition}

The key insight of identifying $\maxe$ with an optimal stopping problem is captured in the following lemma.

\begin{lemma}\label{lem:stopping}
	If the odds algorithm for input length $n$ and probabilities $p_k = \frac{2}{k}$ for all $k \in \{2, \dots, n\}$ stops with the last $1$, then $\maxe$ outputs a partition of social welfare $2 w(e_{\max})$.
\end{lemma}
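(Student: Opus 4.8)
The plan is to identify, for a fixed arrival order $\sigma$, the \emph{last} success time among $J_2,\dots,J_n$ and to show it coincides exactly with the arrival time of the later endpoint of the globally maximal edge $e_{\max}$. Write $e_{\max}=\{a^*,b^*\}$, and without loss of generality assume $a^*\arr b^*$; set $t^*:=\sigma^{-1}(b^*)$. Throughout I use, as in the rest of this section, that all edge weights are pairwise distinct, so that $e_{\max}$ is well defined and strictly dominates every other edge. I also assume $w(e_{\max})>0$, which holds in all applications of the lemma; otherwise $\maxe$ matches no edge and the asserted value $2w(e_{\max})$ is the degenerate case $w(e_{\max})=0$.

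The first step is to verify that $J_{t^*}$ occurs. At time $t^*$ both $a^*$ and $b^*$ are present, so $e_{\max}\in E_{t^*}$ and, being the global maximum, $e_{t^*}=e_{\max}$. Since $b^*$ is an endpoint of $e_{\max}$ and arrives precisely at time $t^*$, the maximum-weight edge incident to the $t^*$-th agent is $e_{\max}$, and it strictly exceeds every edge among the first $t^*-1$ agents, none of which can be $e_{\max}$ because $b^*$ is still absent. The second step is to show $J_k$ fails for every $k>t^*$: then both $a^*$ and $b^*$ lie among the first $k-1$ agents, so the maximum-weight edge there already has weight $w(e_{\max})$, whereas every edge incident to the $k$-th agent is distinct from $e_{\max}$ and hence of strictly smaller weight. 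Together, these two observations show that $t^*$ is the last success time.

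Now assume the odds algorithm stops with the last $1$. Writing $s$ for the stopping threshold it computes (which $\maxe$ uses by definition), the event ``stops with the last $1$'' means exactly that $t^*=\min\{k\ge s\colon J_k=1\}$; in particular $t^*\ge s$, and no success time lies in $\{s,\dots,t^*-1\}$. I would then trace $\maxe$ step by step: before the $s$-th agent it forms no coalition; for each $k$ with $s\le k<t^*$, the failure of $J_k$ means that no edge incident to the $k$-th agent exceeds all previously seen edges, so $\maxe$ does not match; and at time $t^*$, among the (many) edges incident to $b^*$ arriving simultaneously, the largest is $e_{\max}$, which exceeds every previously seen edge and is positive, so by its tie-breaking rule $\maxe$ matches $e_{\max}$. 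After this single match $\maxe$ halts, so its output is the partition whose only non-singleton coalition is $\{a^*,b^*\}$, of social welfare $2w(\{a^*,b^*\})=2w(e_{\max})$.

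The argument is mostly bookkeeping, and the one place demanding care is the vertex-arrival step $t^*$, where an entire bundle of edges incident to $b^*$ appears at once: one must check that the tie-breaking rule of $\maxe$ selects $e_{\max}$ rather than some other freshly arrived edge that also beats the history — immediate from $e_{\max}$ being the global maximum and the weights being distinct. A secondary subtlety is reconciling $\maxe$'s positivity filter with the filter-free notion of a success time used by the odds algorithm; this is harmless here because the decisive success time $t^*$ carries the positive weight $w(e_{\max})$, and the (earlier) success times that $\maxe$ might otherwise skip for non-positivity reasons are ruled out entirely by the hypothesis that the algorithm stops with the last $1$.
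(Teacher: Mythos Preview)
Your proof is correct and follows the same idea as the paper: the last success time among $J_2,\dots,J_n$ is precisely the arrival of the later endpoint of $e_{\max}$, so stopping at the last $1$ means $\maxe$ matches $e_{\max}$. The paper compresses this into a single sentence (``The arrival of the largest edge corresponds to the last time $k$ where $J_k$ occurs''), whereas you spell out both directions ($J_{t^*}$ occurs and $J_k$ fails for $k>t^*$) and carefully trace the behaviour of $\maxe$ up to and at time $t^*$, including the vertex-arrival tie-breaking and the positivity filter.
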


\begin{proof}
	The arrival of the largest edge corresponds to the last time $k$ where $J_k$ occurs.
\end{proof}

We now show that $\maxe$ has a constant competitive ratio on our restricted domain inspired by our worst-case family.
We show the statement first assuming a fixed number of agents, and then apply \Cref{lem:doubling}.

\PropMaxe*

\begin{proof}
	Consider first the case of an ASHG given by a graph $G$ with pairwise different edge weights.
	By \Cref{lem:oddsindep}, we can apply the odds theorem \cite[Theorem~1]{Brus00a} for input length $n$ and probabilities $p_k = \frac{2}{k}$ for all $k \in \{2, \dots, n\}$. 
	Then, the odds algorithm has a probability of at least $\frac{1}{2e}$ to terminate successfully \cite[Theorem~2]{Brus00a}.
	There, $e$ denotes Euler's number.
	Moreover, by \Cref{lem:stopping}, $\maxe$ computes a partition of social welfare $2 w(e_{\max})$ if the odds algorithm terminates successfully.
	Hence, since $w(e_{\max}) \ge \lambda \cdot\SW[\pi^*(G)]$, we get that $\EV_\sigma[\SW[\maxe(G,\sigma)]] \ge \frac{\lambda}{e} \SW[\pi^*(G)]$.%
	
	Now, let us consider games in which edge weights are not pairwise different.
	Note that $\maxe$ still can be used for these instances.
	We will now obtain a performance guarantee for a game by transforming it to a related game and using the performance guarantee of $\maxe$ on the related game.
	
	Therefore, now consider an arbitrary game $G = (N,E,w)$.
	We transform it into a related game $G' = (N,E,w')$ as follows. 
	First, define $\epsilon := \frac 12 \min\{|w(e)- w(e')|\colon e,e'\in E, w(e) \neq w(e')\}$.
	This is half of the minimum weight difference of any pair of edges.
	Now, sort edges in an arbitrary order, i.e., $E = \{e_1,\dots, e_m\}$ where $m = \frac {n(n-1)}2$.
	For $\ell\in [m]$, we iteratively define $w'(e_\ell)$ as follows:
	we choose the smallest $x\in \{\frac i{n^2} \epsilon \colon i = 0,1,\dots, n^2\}$ such that $w(e_{\ell})-x \neq w'(e_j)$ for all $j \in [\ell -1]$ and define $w'(e_{\ell}) = w(e_{\ell}) - x$.
	Note that we choose among $n^2+1\ge m$ values, and hence, such an $x$ always exists.
	Hence, the weight function $w'$ copies the value of $w$ if it is a new value, and lowers the value of the edge otherwise by a small amount.
	This amount is chosen in a way that essentially breaks weight ties among same weight edges while leaving the relative order of edge weights intact.
	In other words, for all $e,e'\in E$, we have that $w'(e) > w'(e')$ whenever $w(e) > w(e')$.

	Since the procedure to define $w'$ only modifies the weights of edges whose weight already occurs, the weight of the maximum weight edge in $G$ and $G'$ is identical and equal to $w(e_{\max})$.
	Moreover, we observe that in $G'$ the maximum weight edge is unique and all weights are different.
	Hence, by the first part of the proof, $\maxe$ terminates successfully with probability at least $\frac 1{2e}$ on $G'$.
	
	We now claim that $\maxe$ terminates successfully on $G$ for all arrival orders for which it terminates successfully on $G'$.
	To show this, consider a fixed arrival oder for which $\maxe$ terminates successfully on $G'$. 
	Assume for contradiction that $\maxe$ forms a coalition on $G$ at the arrival of an earlier agent. 
	Hence, when this agent arrives, it is part of an edge $e$ such that $w(e) > w(e')$ for all previously encountered edges $e'$.
	Since $w'(e) \ge w(e) -\frac 12 \epsilon$, we have that $w'(e) > w(e')$ for all previously encountered edges $e'$, since the choice of $\epsilon$ ensures that this edge still has a weight higher than the weight of the next lower weight.
	Moreover, since $w'$ only lowers edge weights, we have that $w(e') \ge w'(e')$, it holds that $w'(e) > w'(e')$ for all previously encountered edges $e'$.
	Hence, $\maxe$ has to match $e$, but this is no success even as $e_{\max}$ only arrives later.
	We obtain a contradiction.
	
	Hence, $\maxe$ cannot form a coalition on $G$ until the arrival of $e_{\max}$.
	However, at the arrival of $e_{\max}$, this edge has a weight strictly larger than the weight of all previously arrived edges (as it is the unique edge of maximum weight in both $G$ and $G'$).
	Hence, $\maxe$ forms a coalition with this edge, and terminates with a success event.
	Since the chosen arrival order was arbitrary, we conclude that the probability that $\maxe$ terminates successfully on $G$ is at least as high as the probability that $\maxe$ terminates successfully on $G'$, i.e., it is at least $\frac 1{2e}$.
	Hence, since $w(e_{\max}) \ge \lambda \cdot\SW[\pi^*(G)]$, we get that $\EV_\sigma[\SW[\maxe(G,\sigma)]] \ge \frac{\lambda}{e} \SW[\pi^*(G)]$.

	Hence, it holds that $c_{\maxe} \ge \frac{\lambda}{e}\in \Theta\left(1\right)$.
	By \Cref{lem:doubling}, $\mathit{I}$-$\maxe$ has the desired competitive ratio. 
\end{proof}

\section{Online Matching under Free Dissolution}

In this section, we provide the proofs for our general model of online matching under free dissolution.
As we mentioned before, the analysis of $\left(1 + \frac {\sqrt{2}}2\right)$-$\ggdy$ only requires a straightforward adaptation of the techniques by \citet{FKM+05a} and \citet{McGr05a}.

\THMggdy*

\begin{proof}
	For a simpler exposition, we write $\ggdy$ instead of $\left(1 + \frac {\sqrt{2}}2\right)$-$\ggdy$ in this proof.
	We show first that $c_\ggdy\le \frac{1}{3 + 2\sqrt{2}}$. To this end, we define a family of instances $(G^{k,\epsilon})_{k\ge 1,\epsilon > 0}$ with a sufficiently large gap between the social welfare of the algorithmic and optimal partitions.
	The construction is depicted in \Cref{fig:ggdy}. 
	\begin{figure}[tbp]
		\centering
		\resizebox{1\textwidth}{!}{
			\begin{tikzpicture}
				\foreach \i/\j/\k in {0/1/0,1/2/1,2/3/2,3/4/3,4/{k+1}/k,5/{k+1}/{k+1}}
				{
					\node[draw, circle](a\i) at (4*\i,0){\color{white}{$a33$}};
					\node[draw, circle](b\i) at (4*\i,3){\color{white}{$a33$}};
					\node at (4*\i,0){$a_{\k}$};
					\node at (4*\i,3){$b_{\k}$};
					\draw (a\i) edge node[pos = 0.5, fill = white]{$\left(1 + \frac{\sqrt{2}}{2}\right)^{\j}-\epsilon$} (b\i);
				}
				\draw (a0) edge node[pos = 0.5, fill = white]{$1$} (a1);
				\draw (a1) edge node[pos = 0.5, fill = white]{$1 + \frac{\sqrt{2}}{2}$} (a2);
				\draw (a2) edge node[pos = 0.5, fill = white]{$\left(1 +\frac{\sqrt{2}}{2}\right)^2$} (a3);
				\draw (a4) edge node[pos = 0.5, fill = white]{$\left(1 + \frac{\sqrt{2}}{2}\right)^{k}$} (a5);
				\node at (barycentric cs:a3=1,a4=1) {\dots};
			\end{tikzpicture}
		} 
		\caption{Family of instances for tightness of competitive ratio in \Cref{thm:ggdy}.\label{fig:ggdy}}
	\end{figure}
	
	Let $G^{k,\epsilon} = (N^{k,\epsilon},w^{k,\epsilon})$ where $N^{k,\epsilon} = \{a_i,b_i\colon 0\le i\le k+1\}$.
	The edge weights are given by $w^{k,\epsilon}(a_i,b_i) = \left(1 + \frac{\sqrt{2}}{2}\right)^{i+1}-\epsilon$ for $0\le i\le k$, $w^{k,\epsilon}(a_{k+1},b_{k+1}) = \left(1 + \frac{\sqrt{2}}{2}\right)^{k+1}-\epsilon$, and $w^{k,\epsilon}(a_i,a_{i+1}) = \left(1 + \frac{\sqrt{2}}{2}\right)^i$ for $0\le i\le k$.
	All other weights are set to $0$.
	The arrival order for $G^{k,\epsilon}$ is independent of $\epsilon$ and is given by $\sigma^k = (a_0,a_1,b_0,a_2,b_1,\dots,a_{k+1},b_k,b_{k+1})$.
	
	Consider an execution of $\ggdy$ for $(G^{k,\epsilon},\sigma^k)$.
	It is easy to see that the only non-singleton coalition contained in $\ggdy(G^{k,\epsilon},\sigma^k)$ is $\{a_k,a_{k+1}\}$, hence it achieves a weight of 
	$\left(1 + \frac{\sqrt{2}}{2}\right)^k$.
	On the other hand, the maximum weight matching for $G^{k,\epsilon}$ (and sufficiently small $\epsilon$) is $\genmat^*(G^{k,\epsilon}) = \{\{a_i,b_i\}\colon 0\le i\le k+1\}$ with a weight of
	$\sum_{i = 0}^{k + 1} \left(1 + \frac{\sqrt{2}}{2}\right)^i + \left(1 + \frac{\sqrt{2}}{2}\right)^{k + 1} -(k+2)\epsilon$.
	Hence,	
	\begin{align*}
		c_\ggdy & = \inf_{G,\sigma}\frac{\SW[\ggdy(G,\sigma)]}{\SW[\genmat^*(G,\sigma)]} \le \inf_{k\ge 1,\epsilon > 0}\frac{\SW[\ggdy(G^{k,\epsilon},\sigma^k)]}{\SW[\genmat^*(G^{k,\epsilon})]}\\
		& = \inf_{k\ge 1,\epsilon > 0}\frac{\left(1 + \frac{\sqrt{2}}{2}\right)^k}{\sum_{i = 0}^{k + 1} \left(1 + \frac{\sqrt{2}}{2}\right)^i + \left(1 + \frac{\sqrt{2}}{2}\right)^{k + 1} -(k+2)\epsilon} \\
		& = \inf_{k \ge 1} \frac{1}{\sum_{i = 0}^{k - 1} \frac{\left(1 + \frac{\sqrt{2}}{2}\right)^i}{\left(1 + \frac{\sqrt{2}}{2}\right)^k} + 1 + 2\left(1 + \frac{\sqrt{2}}{2}\right)} 
		 = \inf_{k \ge 1} \frac{1}{\sum_{i = 1}^{k} \left(1 + \frac{\sqrt{2}}{2}\right)^{-i} +3 + \sqrt{2}} 
	\end{align*}
	
	To simplify further, we show that 
	\begin{equation}\label{eq:partial_sum_bound}
		\sum_{i = 1}^{k} \left(1 + \frac{\sqrt{2}}{2}\right)^{-i} = \sqrt{2}\left(1 - \left(1 + \frac{\sqrt{2}}{2}\right)^{-k}\right)\text.
	\end{equation}

	To this end, we use the standard formula for the $k$th partial sum of a geometric series: For all $x\neq 1$, it holds that $\sum_{i = 1}^k x^i = x\frac{1-x^k}{1-x}$.
	Hence, we have to show that
	$\frac {x^*}{1-x^*} = \sqrt{2}$ for $x^* =  \left(1 + \frac{\sqrt{2}}{2}\right)^{-1}$.
	Note that 
	\begin{equation*}
		x^* = \left(1 + \frac{\sqrt{2}}{2}\right)^{-1} =  \left(\frac{2+ \sqrt{2}}{2}\right)^{-1} =  \left(\frac{2}{2(2-\sqrt{2})}\right)^{-1} = 2 - \sqrt{2}\text.
	\end{equation*}
	
		Thus, 
		\begin{equation*}
			\frac{x^*}{1 - x^*} = \frac{2 - \sqrt{2}}{\sqrt{2} - 1} = \frac{\sqrt{2}\left(\sqrt{2} - 1\right)}{\sqrt{2} - 1} = \sqrt{2}\text.
		\end{equation*}
		
	This establishes \Cref{eq:partial_sum_bound}.
	Returning to our estimation of the competitive ratio, we get
	
	\begin{equation*}
		c_\ggdy \le \inf_{k \ge 1} \frac{1}{\sum_{i = 1}^{k} \left(1 + \frac{\sqrt{2}}{2}\right)^{-i} +3 + \sqrt{2}} 
		= \inf_{k \ge 1} \frac{1}{\sqrt{2}\left(1 - \left(1 + \frac{\sqrt{2}}{2}\right)^{-k}\right) +3 + \sqrt{2}} 
		= \frac{1}{3 + 2\sqrt{2}}\text.
	\end{equation*}
	
	Next, we show that $c_\ggdy \ge  \frac{1}{3 + 2\sqrt{2}}$.
	Let $\genmat$ be the matching produced by $\ggdy$ and let $\genmat^*$ be a maximum weight matching.
	Without loss of generality, we may assume that all edges in $\genmat^*$ have positive weight.
	Further, let $F \subseteq E$ be the set of edges that were formed by $\ggdy$ at some point, i.e., $F$ consists of the edges in $\genmat$ as well as all edges that have been dissolved.
	The key idea is to consider the relation induced by the replacement of edges.
	More precisely, given two edges $e,e'\in F$, we say that $e$ \emph{dominates} $e'$ with respect to replacement, written as $e \succ_R e'$, if there exists a chain of edges $e_0,\dots, e_j$ such that $e_0 = e'$, $e_j = e$ and for $i\in [j]$, the formation of $e_i$ has lead to the dissolution of $e_{i-1}$.
	Additionally, for each edge $e \in F$, we set $e \sim_R e$ and, for all $e, e' \in F$, we define $e \succsim_R e'$ if $e \succ e'$ or $e \sim e'$.
	Note that $\genmat$ consists precisely of the maximal elements in $F$ with respect to $\succ_R$.
	For a set of edges $F'\subseteq F$, we define $\max_{\succ_R}(F') = \{f\in F'\colon \nexists f'\in F \text{ with }f'\succ_R f\}$.
	In particular, every edge that is matched by the algorithm $e \in \genmat $ is the maximal element of an inclusion-maximal, transitive subset of $F$.

	We now establish a connection between $\genmat^*$ and $\genmat$ so that we can bound the weight of the former by the latter.
	We do this in three steps, each shown in a claim below.
	First, we define a function $\matfn$ that maps each edge $m  = \{a,b\}\in \genmat^*$ to one or two edges in $F$.
	Intuitively, $\matfn$ follows the trace of edge replacing the edges matched by $a$ and $b$ at the arrival of the second agent in $m$.
	It then maps to the maximal edges with respect to domination that still contain $a$ or $b$.
	We will illustrate this function in an example after its formal definition.
	The first step is to show that $m$ maps to edges that have a weight that not much worse than its own weight.
	More precisely, in \Cref{cl:weightcorespondence}, we show that the weight of $m$ is bounded by $(1 + \frac{\sqrt{2}}{2})$ times the weight of the edges in $\matfn(m)$.
	Hence, to bound the weight of $\genmat^*$, we can bound the weight of edges in $F$.
	This happens in the next two claims. 
	In \Cref{cl:fewmatches}, we show that each edge $e \in \genmat$ (i.e., edges produced by the algorithm) occurs in the image of $\matfn$ at most twice, and each edge in $F \setminus \genmat$ at most once.\footnote{The latter fact is the reason why we do not simply let $\rho$ map to the exact edge(s) in $F$ that are matched instead of $m$. If we did this, then an edge in $F \setminus \genmat$ could be mapped to twice.}
	Hence, it essentially remains to bound the weight of replaced edges.
	Therefore, in \Cref{cl:weightchains}, we show that for each $e \in \genmat$, the edges in $F$ dominated by $e$ have a total weight of at most $\sqrt{2}w(e)$.
	Combining all insights achieves a bound on the weight of $\genmat^*$ in terms of the weight of $\mu$, which yields the desired bound on the competitive ratio.
	
	To formalize this argument, we now define a function $\matfn\colon \genmat^*\to 2^F$ that captures the idea of mapping each edge $m \in \genmat^*$ to edges in $F$.
	Let $m =\{a,b\}\in \genmat^*$ and assume that $b$ arrives after $a$.
	The definition of $\matfn$ depends on whether and how $a$ and $b$ get matched at the arrival of $b$, and is split into three cases.
	If, at the arrival of $b$, $a$ is already matched with $c$ and $\left(1 + \frac{\sqrt{2}}{2}\right) w(a,c) > w(a,b)$, then $\matfn(m) = \max_{\succ_R}\{f\in F\colon f\succsim_R \{a,c\}, a\in f\}$.
	If, at the arrival of $b$, $a$ is already matched with $c$ and $\left(1 + \frac{\sqrt{2}}{2}\right) w(a,c) \le w(a,b)$, then $b$ is matched with some $d$ and we define $\matfn(m) = \max_{\succ_R}(\{f\in F\colon f\succsim_R \{a,c\}, a\in f\}\cup\{f\in F\colon f\succsim_R \{b,d\}, b\in f\})$.
	Note that, in this case, it can happen that $d = a$, i.e., $a$ is matched with $b$. 
	In this case, we have that $\{b,d\}\succ_R \{a,c\}$, and we only map to one edge, namely the last edge in the chain of replacements of $\{a,b\}$ that contains either $a$ or $b$.
	If at the arrival of $b$, $a$ is unmatched, then $b$ is matched with some $d$ (since $w(m)>0$, matching with $a$ is a valid option) and we define $\matfn(m) = \max_{\succ_R}\{f\in F\colon f\succsim_R \{b,d\}, b\in f\}$.
	It can also happen in this case that $d = a$, i.e., $a$ is matched with $b$.
	Note that $\matfn$ always maps to a non-empty set.
	Moreover, we map to a maximal element in $F$ that contains $a$ or $b$ as this is needed in \Cref{cl:fewmatches}.
	Indeed, 
	if $a$ is unmatched or matched to an agent~$c$ with $\left(1 + \frac{\sqrt{2}}{2}\right) w(a,c) \le w(a,b)$, then $b$ will certainly be matched because matching with $a$ is an eligible option.
	Note that $\matfn$ is a set-valued function, but for all $m\in \genmat^*$, it holds that $|\matfn(m)|\le 2$.
	Moreover, the only case where $\matfn(m)$ contains two edges is if we are in the second case and the edge $\{a,b\}$ is not created.

		\begin{figure}[tbp]
		\centering
			\begin{tikzpicture}
				\pgfmathsetmacro\figscale{2.5}
				\foreach \i/\j/\k in {0/0/c,\figscale/0/a,2*\figscale/0/b,3*\figscale/0/d}
				{
					\node[draw, circle] (\k) at (\i,\j){\color{white}{$a33$}};
					\node at (\k) {$\k$};
				}
				\node[draw, circle] (r) at ($(a) + (225:\figscale)$) {\color{white}{$a33$}};
					\node at (r) {$r$};
				\node[draw, circle] (s) at ($(r) + (0:\figscale)$) {\color{white}{$a33$}};
					\node at (s) {$s$};
				\node[draw, circle] (t) at ($(d) + (225:\figscale)$) {\color{white}{$a33$}};
					\node at (t) {$t$};
					
					\draw (c) edge[very thick, TUMBlue] node[pos = 0.5, fill = white]{\color{black}{$2 / 1$}} (a);
					\draw (a) edge node[pos = 0.5, fill = white]{$2$} (b);
					\draw (a) edge node[pos = 0.5, fill = white]{$4$} (r);
					\draw (r) edge[very thick] node[pos = 0.5, fill = white]{$8$} (s);
					\draw (b) edge[very thick, TUMBlue] node[pos = 0.5, fill = white]{\color{black}{$4$}} (d);
					\draw (d) edge[very thick] node[pos = 0.5, fill = white]{$8$} (t);
					
			\end{tikzpicture}
		\caption{Illustration of definition of $\matfn$. 
			Consider the arrival order $c,a,d,b,r,s,t$. 
			The bold black edges indicate the matching returned by $\ggdy$.
			The blue edges indicate the matching that will be matched tentatively at the arrival of $b$.
			If $w(a,c) = 2$, then we have $\matfn(\{a,b\}) = \{\{a,r\}\}$.
			If $w(a,c) = 1$, then we have $\matfn(\{a,b\}) = \{\{a,r\},\{b,d\}\}$. \label{fig:rho}}
	\end{figure}
	
	We illustrate the definition of $\matfn$ in \Cref{fig:rho}.
	We only depict edges of non-zero weight and assume that the arrival order is $c,a,d,b,r,s,t$.
	The weight of $\{a,c\}$ can be $2$ or $1$.
	At the arrival of $a$, $\ggdy$ forms $\{a,c\}$.
	Then, $d$ remains unmatched at her arrival.
	When $b$ arrives, independently of the weight of  $\{a,c\}$, $\ggdy$ forms $\{b,d\}$.
	Finally, when $r$, $s$, and $t$ arrive, we each time dissolve an edge to form $\{a,r\}$, $\{r,s\}$, and $\{d,t\}$, respectively.
	This results in the bold black matching in \Cref{fig:rho} as well as the set $F = \{\{a,c\},\{a,r\},\{r,s\},\{b,d\},\{d,t\}\}$.
	
	We want to compute $\matfn(m)$ for $m = \{a,b\}$, which occurs in the maximum weight matching $\{\{a,b\},\{r,s\},\{d,t\}\}$.
	In case that $w(a,c) = 2$, we are in the first case of the definition of $\matfn$, i.e., $a$ is matched with an edge above the matching threshold for matching $a$ with $b$.
	Hence, $\rho$ only considers the edges dominating $\{a,c\}$, the edge that is matching $a$ at the arrival of $b$. 
	The edges dominating $\{a,c\}$ are $\{a,r\}$ and $\{r,s\}$. Of these, only the former contains $a$, so we have  $\matfn(\{a,b\}) = \{\{a,r\}\}$.
	
	In case that $w(a,c) = 1$, we are in the first case of the definition of $\matfn$, i.e., $a$ is matched with an edge below the matching threshold for matching $a$ with $b$.
	In this case, $b$ is matched with $d$ (as in the first case), and the edges dominating $\{b,d\}$ and containing $b$ are also considered.
	Since $\{d,t\}$ is the only edge dominating  $\{b,d\}$ and does not contain $b$, we have that $\matfn(\{a,b\}) = \{\{a,r\},\{b,d\}\}$.
	
	Additionally, to illustrate the remaining proof, it is helpful to consider the example in \Cref{fig:ggdy} from the first part of the proof.
	There, $F = \{\{a_i,a_{i+1}\}\colon 0\le i \le k\}$.
	Moreover, the maximum weight matching is $\genmat^* = \{\{a_i,b_i\}\colon 0\le i\le k+1\}$.
	For $0\le i\le k$ we have $\matfn(\{a_i,b_i\}) = \{\{a_i,a_{i+1}\}\}$ according to the first case in the definition of $\matfn$.
	In addition, $\matfn(\{a_{k+1},b_{k+1}\}) = \{\{a_k,a_{k+1}\}\}$, also according to the first case in the definition of $\matfn$.
	
	\begin{claim}\label{cl:weightcorespondence}
		For all $m\in \genmat^*$, it holds that $w(m)\le \left(1 + \frac{\sqrt{2}}{2}\right)w(\matfn(m))$.
	\end{claim}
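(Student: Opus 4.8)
The plan is to carry out the case analysis built into the definition of $\matfn$, relying on two elementary facts about $\ggdy$. First, every edge ever formed by $\ggdy$ has strictly positive weight (a new edge is created only when this strictly increases the total matching weight), and a replacement --- dissolving $\{j,\ell\}$ to form $\{i,j\}$ --- requires $w(i,j)\ge\left(1+\frac{\sqrt2}{2}\right)w(j,\ell)\ge w(j,\ell)$; hence weights are non-decreasing along any chain of replacements, so $f\succsim_R e$ implies $w(f)\ge w(e)$, and in particular every edge in $\matfn(m)$ has weight at least that of the edge of $F$ it is traced back from (the edge matching $a$, respectively $b$, at the arrival of the later endpoint $b$ of $m$). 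Second, $\ggdy$ is greedy: at the arrival of $b$ it picks an available matching of maximum total weight; combined with the observation that forming an edge $\{b,d\}$ (matching a singleton, or dissolving $d$'s current edge of positive weight) increases the total weight by at most $w(b,d)$, this converts lower bounds on the gain into lower bounds on $w(b,d)$.

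Fix $m=\{a,b\}\in\genmat^*$ with $b$ arriving after $a$; recall $w(m)>0$. In the first case ($a$ matched with $c$, $\left(1+\frac{\sqrt2}{2}\right)w(a,c)>w(a,b)$), the set $\matfn(m)$ contains the $\succ_R$-maximal edge $f\succsim_R\{a,c\}$ through $a$, so $w(\matfn(m))\ge w(f)\ge w(a,c)>\frac{w(a,b)}{1+\frac{\sqrt2}{2}}$, i.e.\ $w(m)<\left(1+\frac{\sqrt2}{2}\right)w(\matfn(m))$; this is the only case in which the factor $1+\frac{\sqrt2}{2}$ is actually needed. In the third case ($a$ unmatched when $b$ arrives), matching $b$ with the singleton $a$ is an available option with gain $w(a,b)>0$, so $\ggdy$ does match $b$, say to $d$, with gain at least $w(a,b)$; since the gain is at most $w(b,d)$, we get $w(b,d)\ge w(a,b)$, and as $\matfn(m)$ contains the $\succ_R$-maximal edge $\succsim_R\{b,d\}$ through $b$, even $w(\matfn(m))\ge w(b,d)\ge w(m)$.

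The substantive case is the second one ($a$ matched with $c$, $\left(1+\frac{\sqrt2}{2}\right)w(a,c)\le w(a,b)$). Here dissolving $\{a,c\}$ and matching $b$ with $a$ is an available option, with net gain $w(a,b)-w(a,c)>0$, so the edge $\{b,d\}$ that $\ggdy$ forms satisfies $w(b,d)\ge w(a,b)-w(a,c)$. If $d=a$, then $\{a,b\}\in F$ with $\{a,b\}\succ_R\{a,c\}$, and $\matfn(m)$ is the single $\succ_R$-maximal edge above $\{a,b\}$ still containing $a$ or $b$, so $w(\matfn(m))\ge w(a,b)=w(m)$. If $d\ne a$, then $\{a,c\}$ and $\{b,d\}$ lie in distinct replacement chains, hence $\matfn(m)$ consists of two distinct edges $f_a\succsim_R\{a,c\}$ (through $a$) and $f_b\succsim_R\{b,d\}$ (through $b$), giving $w(\matfn(m))=w(f_a)+w(f_b)\ge w(a,c)+w(b,d)\ge w(a,c)+\left(w(a,b)-w(a,c)\right)=w(m)$. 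In every case $w(m)\le\left(1+\frac{\sqrt2}{2}\right)w(\matfn(m))$.

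The difficulty is bookkeeping, not depth. One must check that $\matfn(m)$ is non-empty (it always contains the source edge $\{a,c\}$ or $\{b,d\}$ by reflexivity of $\succsim_R$), that $\succ_R$ decomposes $F$ into disjoint chains (each edge is dissolved at most once, and its creation dissolves at most one edge), that the edges of a chain through a fixed vertex form a prefix of that chain so that ``the $\succ_R$-maximal edge through $a$'' is well defined and $|\matfn(m)|\le 2$, and that the chains of $\{a,c\}$ and $\{b,d\}$ coincide precisely when $d=a$. One also needs the small step that, for any way $\ggdy$ forms $\{b,d\}$, the increase in total weight is at most $w(b,d)$, which rests on positivity of the dissolved edge and is what turns the greedy-optimality bounds on the gain into bounds on $w(b,d)$.
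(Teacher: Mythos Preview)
Your argument follows the paper's case analysis almost verbatim, but there is one genuine gap in your treatment of the second case. You split into $d=a$ versus $d\ne a$ and assert that, whenever $d\ne a$, the edges $\{a,c\}$ and $\{b,d\}$ lie in distinct replacement chains. This fails when $d=c$: in that subcase $\{b,c\}$ is formed by dissolving $\{a,c\}$, so $\{b,c\}\succ_R\{a,c\}$ and the two edges sit in the \emph{same} chain. Consequently $\matfn(m)$ is a single edge (the $\succ_R$-maximal one above $\{b,c\}$ still containing $b$; the edge $\{a,c\}$ is not maximal in the union since it is dominated by $\{b,c\}$), and your two-edge estimate $w(\matfn(m))=w(f_a)+w(f_b)$ is not available. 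Your final paragraph repeats the error when it says the chains of $\{a,c\}$ and $\{b,d\}$ coincide ``precisely when $d=a$''; the correct condition is $d\in\{a,c\}$, which is exactly the split the paper uses.

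The repair is easy and in the same spirit as your $d=a$ subcase: if $\ggdy$ chose to form $\{b,c\}$, then the available option of dissolving $\{a,c\}$ and forming $\{a,b\}$ had gain $w(a,b)-w(a,c)$, so greediness gives $w(b,c)-w(a,c)\ge w(a,b)-w(a,c)$, i.e.\ $w(b,c)\ge w(a,b)$. Hence $w(\matfn(m))\ge w(b,c)\ge w(m)$. With this subcase added (so that the ``distinct chains'' argument is only invoked for $d\notin\{a,c\}$, where $\{a,c\}\cap\{b,d\}=\emptyset$), your proof is correct and matches the paper's.
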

	\begin{claimproof}
		Let $m =\{a,b\}\in \genmat^*$ and assume that $b$ arrives after $a$. 
		Assume first that, at the arrival of $b$, $a$ is already matched with $c$ and $\left(1 + \frac{\sqrt{2}}{2}\right) w(a,c) > w(a,b)$.
		Then, for every edge $f\in F$ with $f\succsim_R \{a,c\}$ and $a\in f$, there exists a sequence of edges $e_0,\dots, e_j$ such that $e_0 = \{a, c\}$, $e_j = f$ and for $i\in [j]$, $e_i$ has replaced $e_{i-1}$.
		Hence, $w(\{a, c\}) = w(e_0) \le w(e_1)\le \dots \le w(e_j) = w(f)$.
		It follows that  $w(m) < \left(1 + \frac{\sqrt{2}}{2}\right) w(a,c) \le \left(1 + \frac{\sqrt{2}}{2}\right) w(\matfn(m))$.
		
		Assume next that, at the arrival of $b$, $a$ is already matched with $c$ and $\left(1 + \frac{\sqrt{2}}{2}\right) w(a,c) \le w(a,b)$.
		Then, $b$ will be matched with an agent $d$ such that $w(b,d) \ge w(a,b) - w(a,c)$.
		Note that this inequality also holds if $d$ is already matched to an agent $d'$ as then $w(b, d) \ge w(b,d) - w(d, d') \ge w(a, b) - w(a, c)$, where the latter inequality is due to the matching rule of $\ggdy$.
		As in the first case, we obtain $w(a,c)\le w(\max_{\succ_R}\{f\in F\colon f\succsim_R \{a,c\}, a\in f\})$ and $w(b,d)\le w(\max_{\succ_R}\{f\in F\colon f\succsim_R \{b,d\}, b\in f\})$.
		
		Now, if $d \in \{a,c\}$, then $\{a,b\}\precsim_R f$ for all $f\in \matfn(m)$ and $w(m) = w(a,b)\le w(\matfn(m))$.
		Otherwise, i.e., if $d\neq a$ and $d\neq c$, then $\{a,c\}\cap \{b,d\} = \emptyset$ and, therefore, also $\max_{\succ_R}\{f\in F\colon f\succsim_R \{a,c\}, a\in f\}\neq \max_{\succ_R}\{f\in F\colon f\succsim_R \{b,d\}, b\in f\}$.
		Hence, $w(m) \le w(b,d) + w(a,c) \le w(\matfn(m)) \le \left(1 + \frac{\sqrt{2}}{2}\right) w(\matfn(m))$.
		
		Finally, assume that $a$ is unmatched. 
		Then, $b$ will be matched with an agent $d$ such that $w(b,d) \ge w(a,b)$.
		As before, $w(b,d)\le w(\max_{\succ_R}\{f\in F\colon f\succsim_R \{b,d\}, b\in f\})$, and we conclude $w(m) \le w(b,d) \le w(\matfn(m)) \le \left(1 + \frac{\sqrt{2}}{2}\right) w(\matfn(m))$.
		This completes the proof of the claim.
	\end{claimproof}
	
	Next, we bound the number of edges that map to the same edge.
	Given an edge $e\in F$, define $\matfn^{-1}(e) := \{m\in \genmat^*\colon e \in \matfn(m)\}$.
	\begin{claim}\label{cl:fewmatches}
		If $e\in \genmat$, then $|\matfn^{-1}(e)|\le 2$.
		If $e\in F\setminus \genmat$, then $|\matfn^{-1}(e)|\le 1$.
	\end{claim}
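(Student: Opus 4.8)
The plan is to prove the two parts separately, the first being essentially immediate and the second exploiting the exact shape of the dissolution rule. For the universal bound $|\matfn^{-1}(e)|\le 2$, I would first record that in every one of the three cases of its definition, $\matfn(m)$ is the set of $\succ_R$-maximal elements of a union of one or two sets of the form $\{f\in F\colon f\succsim_R e_{\mathrm{start}},\ v\in f\}$, where $v$ ranges over endpoints of $m$; in particular every edge in $\matfn(m)$ shares a vertex with $m$. Since $\genmat^*$ is a matching, its edges are pairwise vertex-disjoint, so each of the two endpoints of $e$ lies in at most one edge of $\genmat^*$, and hence at most two edges of $\genmat^*$ can have $e$ in their image. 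This gives $|\matfn^{-1}(e)|\le 2$ for every $e\in F$, with no use of whether $e\in\genmat$.

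For the refined bound, I would argue by contradiction. Suppose $e=\{x,y\}\in F\setminus\genmat$ and there are distinct $m_1,m_2\in\genmat^*$ with $e\in\matfn(m_1)\cap\matfn(m_2)$. Since $m_1$ and $m_2$ are distinct and vertex-disjoint, $e$ cannot equal either of them, so each $m_i$ meets $e$ in exactly one vertex, and these are the two distinct endpoints of $e$; say $x\in m_1$ and $y\in m_2$. The key structural fact to extract from the definition of $\matfn$ is a maximality property: if $e\in\matfn(m)$ and $v$ is the unique endpoint of $e$ lying in $m$, then no edge of $F$ that strictly $\succ_R$-dominates $e$ contains $v$. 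Indeed, $e$ belongs to one of the sets $\{f\in F\colon f\succsim_R e_{\mathrm{start}},\ v\in f\}$ over whose union the $\succ_R$-maximum is taken, so $e\succsim_R e_{\mathrm{start}}$; if some $f\in F$ satisfied $f\succ_R e$ and $v\in f$, then $f\succsim_R e_{\mathrm{start}}$ by transitivity, so $f$ would lie in that same set and strictly dominate $e$, contradicting the maximality of $e$. Applying this to $m_1$ (with $v=x$) and to $m_2$ (with $v=y$) shows that no edge of $F$ strictly dominating $e$ contains $x$, and none contains $y$.

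To derive the contradiction I would use that $e\in F\setminus\genmat$ means $e$ was formed and later dissolved by $\ggdy$. Inspecting $\avai(\genmat,i)$ in the definition of the algorithm, the only way an existing edge $\{x,y\}$ leaves the current matching is through a step in which a newly arriving vertex $i$ forms the edge $\{i,x\}$ or $\{i,y\}$, with the other of $x,y$ becoming a singleton. Hence there is an edge $e'\in F$ with $e'\succ_R e$ and $|e'\cap e|=1$; writing $e'\cap e=\{z\}$ with $z\in\{x,y\}$, the edge $e'$ strictly dominates $e$ and contains $z$. If $z=x$ this contradicts the maximality property applied to $m_1$, and if $z=y$ it contradicts the one applied to $m_2$. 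This contradiction yields $|\matfn^{-1}(e)|\le 1$ for $e\in F\setminus\genmat$, completing the claim.

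I expect the main obstacle to be the second step: correctly isolating the maximality property from the three-case definition of $\matfn$, and in particular checking that $\succsim_R$-domination of the relevant ``start edge'' is inherited by any edge $f\succ_R e$ (so that $f$ genuinely competes in the set over which the maximum is taken). Once that is in place, the third step is only a short case analysis on which endpoint of $e$ survives into the edge created at its dissolution, and the first step is a one-line disjointness argument.
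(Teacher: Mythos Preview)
Your proposal is correct and follows essentially the same approach as the paper's proof. Both argue the universal bound via the observation that every edge in $\matfn(m)$ shares a vertex with $m$ together with disjointness of $\genmat^*$, and both prove the refined bound by contradiction using that any $e\in F\setminus\genmat$ is replaced by some $e'\succ_R e$ sharing an endpoint with $e$; your write-up simply makes the implicit ``maximality property'' of $\matfn$ explicit, whereas the paper compresses this into the single phrase ``contradicting that $e\in\matfn(m)$.''
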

	\begin{claimproof}
		By definition of $\matfn$, for every $m\in \genmat^*$ and $m'\in \matfn(m)$, $m'\cap m\neq \emptyset$.
		Let $e\in F$.
		Since $\genmat^*$ is a matching, there can be at most one edge in $\genmat^*$ with a non-empty intersection with each of the two endpoints of $e$.
		Hence, $|\matfn^{-1}(e)|\le 2$ and the first part of the claim holds.
		
		Now, let $e = \{a,b\}\in F\setminus \genmat$ and assume for contradiction that there exist edges $m,m'\in \genmat^*$ with $m\neq m'$, $e\cap m = \{a\}$, $e\cap m' = \{b\}$, and $e \in \matfn(m) \cap \matfn(m')$.
		Since $e\notin \genmat$, the edge $e$ was replaced by another edge $e'$ during the algorithm.
		By definition of $\ggdy$, it holds that $e\cap e'\neq \emptyset$, say $e\cap e' = \{a\}$.
		Moreover, $e'\succ_R e$, contradicting that $e\in \matfn(m)$.
		Hence, $|\matfn^{-1}(e)|\le 1$.
	\end{claimproof}
	
	It remains to bound the weight of the replaced edges. 
	For this, we introduce the following notation.
	For $e\in \genmat$, define $F_e = \{e'\in F\colon e\succ_R e'\}$.
	\begin{claim}\label{cl:weightchains}
		For all $e\in \genmat$, it holds that $w(F_e)\le \sqrt{2} w(e)$.
	\end{claim}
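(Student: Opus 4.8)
The plan is to exploit the structure of the replacement relation $\succ_R$: on the edge set $F$ it is not merely a partial order, but the set of edges dominated by any matched edge is in fact a \emph{chain}. The key structural observation is that when $\ggdy$ forms a new edge $\{v,u\}$ upon the arrival of vertex $v$, it dissolves \emph{at most one} already existing edge, namely the unique edge of $\genmat$ containing the old endpoint $u$ (and none if $u$ was a singleton). Hence, in the replacement relation, every edge has a uniquely determined direct predecessor whenever it has one at all. From this I would deduce that, for $e\in\genmat$, the set $F_e\cup\{e\}$ is totally ordered by $\succ_R$: if $e\succ_R e'$ via a chain $e'=f_0\to\dots\to f_p=e$ and $e\succ_R e''$ via $e''=g_0\to\dots\to g_q=e$ (writing $x\to y$ when the formation of $y$ dissolved $x$), then, walking backwards from $f_p=g_q=e$, each step is forced, so $f_{p-i}=g_{q-i}$ as long as both indices are nonnegative; taking $i=\min\{p,q\}$ shows that $e'$ and $e''$ are comparable. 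So I can enumerate $F_e=\{e_0,\dots,e_{j-1}\}$ with $e_0\prec_R e_1\prec_R\dots\prec_R e_{j-1}\prec_R e_j=e$, where the formation of $e_i$ dissolved $e_{i-1}$; if $j=0$ the claim is trivial.

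The second step is the quantitative one. By the definition of the dissolution threshold algorithm, the formation of $e_i$ (which dissolves $e_{i-1}$) is only permitted when $w(e_i)\ge\left(1 + \frac{\sqrt{2}}{2}\right)w(e_{i-1})$. Since every edge of $F$ was at some point created by a weight-increasing step of $\ggdy$ and, inductively, only positive-weight edges are ever dissolved, all weights involved are positive, so iterating the inequality gives $w(e_{j-m})\le\left(1 + \frac{\sqrt{2}}{2}\right)^{-m}w(e)$ for $0\le m\le j$. Summing over the $j$ edges of $F_e$,
\[
w(F_e)=\sum_{m=1}^{j}w(e_{j-m})\le w(e)\sum_{m=1}^{j}\left(1 + \frac{\sqrt{2}}{2}\right)^{-m}\le w(e)\sum_{m=1}^{\infty}\left(1 + \frac{\sqrt{2}}{2}\right)^{-m}=\sqrt{2}\,w(e),
\]
where the final equality is the $k\to\infty$ limit of Equation~\eqref{eq:partial_sum_bound} (equivalently, $\frac{1}{(1+\sqrt{2}/2)-1}=\sqrt{2}$). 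This is precisely the claimed bound.

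I expect the only genuinely delicate point to be establishing that $F_e$ is a chain, i.e., ruling out any branching ``below'' a matched edge; everything after that is a one-line geometric-series estimate. The chain argument hinges entirely on the fact that an arriving vertex can force the dissolution of only a single incident edge, which is immediate from the definition of $t$-$\ggdy$. I would also note in passing that the same proof yields $w(F_e)\le\sqrt{2}\,w(e)$ for \emph{every} $e\in F$, although only the case $e\in\genmat$ is needed in the sequel.
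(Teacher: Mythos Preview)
Your proof is correct and follows essentially the same approach as the paper: argue that $F_e$ is a single chain (because each edge in $F$ replaces at most one predecessor), then bound the weights along the chain by a geometric series with ratio $\left(1+\frac{\sqrt{2}}{2}\right)^{-1}$ and sum. Your treatment of why $F_e$ is totally ordered is more explicit than the paper's one-line justification, but the underlying idea and the subsequent geometric-series computation are identical.
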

	\begin{claimproof}
		Let $e\in \genmat$.
		Since each edge can only replace a single other edge, $F_e$ is of the form $\{e_1,\dots, e_j\}$ for some $j\ge 0$ such that, for all $i\in [j-1]$, $e_i$ has replaced $e_{i+1}$, and $e$ has replaced $e_1$.
		Moreover, since a replacement only happens upon a sufficiently large weight improvement, we know that for all $i\in [j-1]$, $w(e_i)\ge \left(1 + \frac{\sqrt{2}}{2}\right)w(e_{i+1})$, and $w(e) \ge \left(1 + \frac{\sqrt{2}}{2}\right)w(e_1)$.
		Hence, for all $i\in [j]$, it holds that $w(e_i) \le \left(1+ \frac{\sqrt{2}}{2}\right)^{-i}w(e)$.
		Finally, it follows that 
		\begin{equation*}
			w(F_e) = \sum_{i=1}^jw(e_i) \le w(e) \sum_{i=1}^j \left(1 + \frac{\sqrt{2}}{2}\right)^{-i} \overset{\text{Eq.~(\ref{eq:partial_sum_bound}})}{=} w(e) \sqrt{2} \left(1 - \left(1 + \frac{\sqrt{2}}{2}\right)^{-j}\right) \le \sqrt{2} w(e)\text.
		\end{equation*}
		
	\end{claimproof}
	
	Combining all three claims, we compute
	\begin{align*}
		w(\genmat^*) & = \sum_{m\in \genmat^*}w(m) \overset{\textnormal{\Cref{cl:weightcorespondence}}}{\le} \sum_{m\in \genmat^*}\left(1 + \frac{\sqrt{2}}{2}\right)w(\matfn(m))
		 \overset{\textnormal{\Cref{cl:fewmatches}}}{\le} 2\left(1 + \frac{\sqrt{2}}{2}\right)\sum_{e\in \genmat}w(e) + \left(1 + \frac{\sqrt{2}}{2}\right)\sum_{e\in F\setminus \genmat}w(e)\\
		& \overset{\textnormal{\Cref{cl:weightchains}}}{\le} 2\left(1 + \frac{\sqrt{2}}{2}\right)\sum_{e\in \genmat}w(e) + \sqrt{2}\left(1 + \frac{\sqrt{2}}{2}\right)\sum_{e\in \genmat}w(e) \\
		 &= \left(2\left(1 + \frac{\sqrt{2}}{2}\right) + \sqrt{2}\left(1 + \frac{\sqrt{2}}{2}\right)\right)w(\genmat) 
		 = \left(3 + 2\sqrt{2}\right)w(\genmat)\text.
	\end{align*}
	This completes the proof.
\end{proof}

\section{Binomial coefficient identities and inequalities}\label{app:binom}

This section presents proofs for binomial identities and inequalities required for our main theorems.
Before we show the proofs, we briefly revisit classical identities for the binomial coefficient that we apply.
If the reader is unfamiliar with the topic, we recommend the book by \citet[][Chapter~5]{GKP94a}.
We also base our nomenclature of identities on this book.
Note that many of these identities also hold for negative and real values.
Still, for the ease of exposition, we restrict ourselves to nonnegative integers as this case is sufficient for our proofs.

Let $r, s, k, n \in \mathbb{N}_0$.
Then, the following binomial coefficient identities hold.
The \emph{factorial expansion} is $\binom{n}{k} = \frac{n!}{k!(n - k)!}$ if $n \ge k \ge 0$.
Next, by \emph{symmetry} $\binom{n}{k} = \binom{n}{n - k}$.
The \emph{absorption} or \emph{extraction} identity is $\binom{r}{k} = \frac{r}{k}\binom{r - 1}{k - 1}$ for $k \ne 0$.
We need the equivalent identity $\frac{1}{r + 1}\binom{r + 1}{k + 1} = \frac{1}{k + 1}\binom{r}{k}$ instead that holds for all $k \ge 0$.
The \emph{addition} or \emph{induction} identity is $\binom{r}{k} = \binom{r - 1}{k} + \binom{r - 1}{k - 1}$.
The \emph{Vandermonde convolution} is
\begin{equation}\label{eq:chu_vandermonde}
	\sum_{j = 0}^n \binom{r}{j}\binom{s}{n - j} = \binom{r + s}{n}\text.
\end{equation}

Furthermore, the \emph{parallel summation} is
\begin{equation}\label{eq:parallel_summation}
	\sum_{j = 0}^n \binom{r + j}{j} = \binom{r + n + 1}{n}\text.
\end{equation}

We now show proofs for the two equalities we used in \Cref{thm:wgdy}.

\begin{lemma}\label{lem:inner}
	For $k \ge i \ge 0$, it holds that
	\begin{equation*}
		\sum_{j = 0}^{i}  \frac{\binom{i}{j}}{\binom{k}{j}} = \frac{k + 1}{k + 1 - i}\text.
	\end{equation*}
\end{lemma}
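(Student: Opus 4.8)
The plan is to reduce the sum to a single binomial coefficient via the parallel summation identity (Eq.~(\ref{eq:parallel_summation})) and then simplify. First I would rewrite the summand using the ``subset of a subset'' identity $\binom{k}{j}\binom{k-j}{i-j} = \binom{k}{i}\binom{i}{j}$, which is immediate from the factorial expansion and valid for $k\ge i\ge j\ge 0$. This gives
\[
\frac{\binom{i}{j}}{\binom{k}{j}} \;=\; \frac{1}{\binom{k}{i}}\binom{k-j}{i-j} \;=\; \frac{1}{\binom{k}{i}}\binom{k-j}{k-i},
\]
where the last equality is symmetry of the binomial coefficient. Summing over $j$ from $0$ to $i$ and substituting $\ell = i-j$ turns the left-hand side into $\frac{1}{\binom{k}{i}}\sum_{\ell=0}^{i}\binom{(k-i)+\ell}{k-i} = \frac{1}{\binom{k}{i}}\sum_{\ell=0}^{i}\binom{(k-i)+\ell}{\ell}$, again using symmetry in the last step.

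Next I would apply parallel summation (Eq.~(\ref{eq:parallel_summation}) with $r=k-i$ and $n=i$) to obtain $\sum_{\ell=0}^{i}\binom{(k-i)+\ell}{\ell} = \binom{k+1}{i}$, so that the whole sum equals $\binom{k+1}{i}/\binom{k}{i}$. A one-line factorial computation then finishes the proof:
\[
\frac{\binom{k+1}{i}}{\binom{k}{i}} \;=\; \frac{(k+1)!\,(k-i)!}{k!\,(k+1-i)!} \;=\; \frac{k+1}{k+1-i}.
\]

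There is no serious obstacle here: the argument is just a short chain of textbook binomial identities. The only points requiring a little care are choosing the reindexing $\ell = i-j$ together with the symmetry step so that the summand matches exactly the $\binom{r+\ell}{\ell}$ shape demanded by parallel summation, and checking that the hypothesis $k\ge i\ge 0$ keeps every quantity well defined — in particular $k+1-i\ge 1$, so the final fraction never divides by zero.
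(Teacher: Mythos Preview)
Your proof is correct and follows essentially the same route as the paper: both arguments rewrite the summand as $\tfrac{1}{\binom{k}{i}}\binom{k-j}{i-j}$ (you via the subset-of-a-subset identity, the paper via a direct factorial expansion), reindex to match the form of parallel summation with $r=k-i$, $n=i$, and then simplify $\binom{k+1}{i}/\binom{k}{i}$.
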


\begin{proof}
	Let $k \ge i \ge 0$.
	We compute,
	\begin{equation*}
		\sum_{j = 0}^{i}  \frac{\binom{i}{j}}{\binom{k}{j}} = \frac{\binom{k}{i}}{\binom{k}{i}}\sum_{j = 0}^{i} \frac{i!}{j!(i-j)!}\frac{j!(k-j)!}{k!} =  \frac{1}{\binom{k}{i}}\sum_{j = 0}^{i}\frac{(k-j)!i!}{(i-j)!k!}\frac{k!}{i!(k - i)!} = \frac{1}{\binom{k}{i}}\sum_{j = 0}^{i} \frac{(k - j)!}{(i - j)!(k - i)!} = \frac{1}{\binom{k}{i}}\sum_{j = 0}^{i} \binom{k - j}{i - j}\text.
	\end{equation*}

	Next, we change the summation twice to bring it into a form where we can use \Cref{eq:parallel_summation}.
	Then,
	\begin{equation*}
		\frac{1}{\binom{k}{i}}\sum_{j = 0}^{i} \binom{k - j}{i - j} = \frac{1}{\binom{k}{i}}\sum_{j = i}^{0} \binom{k - (i - j)}{i - (i - j)} = \frac{1}{\binom{k}{i}}\sum_{j = i}^{0} \binom{k - i + j}{j}= \frac{1}{\binom{k}{i}}\sum_{j = 0}^{i} \binom{k - i + j}{j}\text. 
	\end{equation*}
	
	We now apply \Cref{eq:parallel_summation} for the case that $r = k - i$ and $n = i$, i.e., $\sum_{j = 0}^{i}\binom{k - i + j}{j} = \binom{k + 1}{i}$.
	This yields
	\begin{equation*}
		\frac{1}{\binom{k}{i}}\sum_{j = 0}^{i} \binom{k - i + j}{j}  = \frac{\binom{k + 1}{i}}{\binom{k}{i}}\text.
	\end{equation*}
	
	Simplifying this expression proves the lemma
	\begin{equation*}
		\frac{\binom{k + 1}{i}}{\binom{k}{i}} = \frac{(k + 1)!}{i!(k + 1 - i)!}\frac{i!(k - i)!}{k!} = \frac{k + 1}{k + 1 - i}\text.\qedhere
	\end{equation*} 
\end{proof}

Next, we show the proof for the second equality used in the proof of \Cref{thm:wgdy}.

\begin{lemma}\label{lem:outer}
	For $k \in \mathbb{N}$, it holds that
	\begin{equation*}
		 \sum_{i = 0}^{k} \frac{\binom{k}{i}\binom{k}{k - i}}{\binom{2k}{k}} \frac{1}{k + 1 - i} 
		 = \frac{2k + 1}{(k + 1)^2}\text.
	\end{equation*}
\end{lemma}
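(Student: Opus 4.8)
\textbf{Proof plan for \Cref{lem:outer}.} The plan is to reduce the sum to a single Vandermonde convolution by first removing the troublesome denominator $\frac{1}{k+1-i}$ via the absorption identity, and then cleaning up with standard binomial manipulations. First I would use symmetry, $\binom{k}{k-i} = \binom{k}{i}$, to rewrite the sum as $\frac{1}{\binom{2k}{k}}\sum_{i=0}^{k}\frac{\binom{k}{i}^2}{k+1-i}$, so that the only obstacle is evaluating $\sum_{i=0}^{k}\frac{\binom{k}{i}^2}{k+1-i}$.

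The key observation is the absorption identity in the form already used in \Cref{lem:inner}: $\frac{1}{k+1}\binom{k+1}{i} = \frac{1}{k+1-i}\binom{k}{i}$, valid for all $0 \le i \le k$. Substituting this into one of the two factors $\binom{k}{i}$ turns the sum into $\frac{1}{k+1}\sum_{i=0}^{k}\binom{k}{i}\binom{k+1}{i}$. Now I would apply symmetry once more, $\binom{k}{i} = \binom{k}{k-i}$, to bring the sum into the shape $\sum_{i=0}^{k}\binom{k}{k-i}\binom{k+1}{i}$, which is a Vandermonde convolution (\Cref{eq:chu_vandermonde}) with $r=k$, $s=k+1$, $n=k$; the missing term $i=k+1$ contributes $\binom{k}{-1}\binom{k+1}{k+1}=0$, so extending or not extending the range is harmless. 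Hence $\sum_{i=0}^{k}\binom{k}{i}\binom{k+1}{i} = \binom{2k+1}{k}$, and therefore $\sum_{i=0}^{k}\frac{\binom{k}{i}^2}{k+1-i} = \frac{1}{k+1}\binom{2k+1}{k}$.

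Finally I would record the factorial identity $\binom{2k+1}{k} = \frac{(2k+1)!}{k!(k+1)!} = \frac{2k+1}{k+1}\binom{2k}{k}$ and plug everything back in:
\begin{equation*}
	\sum_{i=0}^{k}\frac{\binom{k}{i}\binom{k}{k-i}}{\binom{2k}{k}}\frac{1}{k+1-i} = \frac{1}{\binom{2k}{k}}\cdot\frac{1}{k+1}\cdot\frac{2k+1}{k+1}\binom{2k}{k} = \frac{2k+1}{(k+1)^2}\text,
\end{equation*}
which is the claim. I do not expect any real obstacle here; the only points requiring a little care are invoking the absorption identity in the correct (shifted) form and lining up the indices so that the Vandermonde convolution applies verbatim, exactly as in the proof of \Cref{lem:inner}.
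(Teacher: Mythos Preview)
Your argument is correct and follows the same overall strategy as the paper (absorb the denominator into a shifted binomial, then apply Vandermonde), but it is slightly more direct. You use absorption in the form $\frac{1}{k+1-i}\binom{k}{i}=\frac{1}{k+1}\binom{k+1}{i}$, which immediately yields $\frac{1}{k+1}\sum_{i}\binom{k}{i}\binom{k+1}{i}$ and a single Vandermonde convolution giving $\binom{2k+1}{k}$. The paper instead first reindexes $i\mapsto k-i$ to turn the denominator into $\frac{1}{i+1}$, then absorbs to obtain $\binom{k+1}{i+1}$; this forces a split via the addition identity into $\sum_i\binom{k}{i}^2+\sum_i\binom{k}{i}\binom{k}{i+1}$, two separate Vandermonde applications, and a recombination $\binom{2k}{k}+\binom{2k}{k-1}=\binom{2k+1}{k}$. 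Your route simply avoids this detour.

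Two minor expositional points that do not affect correctness: the specific absorption variant you invoke is not actually used in the proof of \Cref{lem:inner} (that proof proceeds via factorial expansion and parallel summation), so you should justify it directly from the stated absorption identity together with symmetry; and with $n=k$ the Vandermonde sum already runs over $i=0,\dots,k$, so there is no ``missing term $i=k+1$'' to discuss.
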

\begin{proof}
	Let $k \in \mathbb{N}$.
	We apply symmetry and change the summation by replacing $k - i$ with $i$ to simplify the equation, i.e.,
	\begin{equation*}
		\sum_{i = 0}^{k} \frac{\binom{k}{i}\binom{k}{k - i}}{\binom{2k}{k}} \frac{1}{k + 1 - i} =  \sum_{i = 0}^{k} \frac{\binom{k}{k - i}^2}{\binom{2k}{k}} \frac{1}{k + 1 - i} = \sum_{i = 0}^{k} \frac{\binom{k}{i}^2}{\binom{2k}{k}} \frac{1}{i + 1}\text.
	\end{equation*}
	
	Next, we apply the extraction identity to make the fraction independent of the sum.
	Moreover, we apply the induction identity to receive two separate sums.
	We compute
	\begin{align*}
		\sum_{i = 0}^{k} \frac{\binom{k}{i}^2}{\binom{2k}{k}} \frac{1}{i + 1} &= \sum_{i = 0}^{k} \frac{\binom{k}{i}\binom{k+1}{i+1}}{\binom{2k}{k}}\frac{1}{k + 1} = \frac{1}{(k + 1)\binom{2k}{k}}  \sum_{i = 0}^{k} \binom{k}{i}\binom{k+1}{i+1}\\
		& = \frac{1}{(k + 1)\binom{2k}{k}}  \left(  \sum_{i = 0}^{k} \binom{k}{i}^2 + \sum_{i = 0}^{k} \binom{k}{i}\binom{k}{i+1} \right)\text.
	\end{align*}

	Now, we use the Vandermonde convolution twice.
	We use the case where $r = s = k$ in \Cref{eq:chu_vandermonde}.
	For the first application, we additionally have the case $n = k$, i.e., $\sum_{j = 0}^k \binom{k}{j}^2 = \binom{2k}{k}$, and for the second application we have the case $n = k - 1$, i.e., $\sum_{j = 0}^{k - 1} \binom{k}{j}\binom{k}{k - j - 1} = \binom{2k}{k - 1}$.
	The first identity can be applied directly.
	For the second, we first transform the sum, using symmetry and that its last term is $0$, i.e.,
	\begin{equation*}
		 \sum_{i = 0}^{k} \binom{k}{i}\binom{k}{i+1} =  \sum_{i = 0}^{k - 1} \binom{k}{i}\binom{k}{i+1} =  \sum_{i = 0}^{k - 1} \binom{k}{i}\binom{k}{k - i -1} = \binom{2k}{k - 1}\text.
	\end{equation*}

	We insert these simplifications back into our equation and once again apply the induction identity, i.e.,
	\begin{equation*}
		\frac{1}{(k + 1)\binom{2k}{k}}  \left(  \sum_{i = 0}^{k} \binom{k}{i}^2 + \sum_{i = 0}^{k} \binom{k}{i}\binom{k}{i+1} \right) = \frac{1}{(k + 1)\binom{2k}{k}}  \left( \binom{2k}{k} + \binom{2k}{k - 1} \right) = \frac{\binom{2k + 1}{k}}{(k + 1)\binom{2k}{k}}\text.
	\end{equation*}

	Finally, we simplify the equation to show the lemma
	\begin{equation*}
		\frac{\binom{2k + 1}{k}}{(k + 1)\binom{2k}{k}} = \frac{1}{k+1}\frac{(2k + 1)!}{k!(2k+1-k)!}\frac{k!(2k-k)!}{2k!} = \frac{2k + 1}{(k+1)^2}\text.\qedhere
	\end{equation*}
\end{proof}

We conclude with the proof of the inequality used in \Cref{thm:IEW}.

\begin{lemma}\label{lem:crazy}
	For $i, k \in \mathbb{N}_0$ such that $k + 2 > 2^i$, it holds that
	\begin{equation*}
	\sum_{j = 0}^{2^i - 1} \frac{\binom{k}{j}\binom{k}{2^i - 1 - j}}{\binom{2k}{2^i-1}} \frac{1}{2^i - j} \le \frac{2}{2^i}\text.
	\end{equation*}
\end{lemma}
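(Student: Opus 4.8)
The plan is to evaluate the sum in closed form by two applications of standard binomial identities and then bound the resulting quantity by an elementary estimate. Throughout I would write $m := 2^i$ for readability; the hypothesis $k+2 > 2^i$ is then precisely $m \le k+1$, which guarantees $m-1 \le k \le 2k$ and hence $\binom{2k}{m-1}\neq 0$, so that the left-hand side is well defined in the first place.

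First I would rewrite the factor $\frac{1}{m-j}\binom{k}{m-1-j}$ using the extraction identity in the form $\frac{1}{k+1}\binom{k+1}{m-j} = \frac{1}{m-j}\binom{k}{m-1-j}$, which is valid for all $0 \le j \le m-1$. This turns the numerator of the sum into $\frac{1}{k+1}\sum_{j=0}^{m-1}\binom{k}{j}\binom{k+1}{m-j}$. Next I would invoke the Vandermonde convolution (\Cref{eq:chu_vandermonde}) with $r = k$, $s = k+1$, $n = m$, which yields $\sum_{j=0}^{m}\binom{k}{j}\binom{k+1}{m-j} = \binom{2k+1}{m}$. Our sum omits only the term $j = m$, namely $\binom{k}{m}\binom{k+1}{0} = \binom{k}{m} \ge 0$, so $\sum_{j=0}^{m-1}\binom{k}{j}\binom{k+1}{m-j} = \binom{2k+1}{m} - \binom{k}{m} \le \binom{2k+1}{m}$. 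Consequently the left-hand side of the lemma is at most $\frac{\binom{2k+1}{m}}{(k+1)\binom{2k}{m-1}}$.

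It then remains to observe that $\binom{2k+1}{m}/\binom{2k}{m-1} = \frac{2k+1}{m}$, which is immediate from the factorial expansion since both binomial coefficients share the factor $(2k+1-m)!$ in the denominator. Hence the left-hand side is bounded by $\frac{2k+1}{m(k+1)}$, and since $2k+1 < 2(k+1)$ this is strictly smaller than $\frac{2}{m} = \frac{2}{2^i}$, which proves the claim. There is no genuinely hard step here: the argument is a chain of textbook manipulations. The only points requiring a little care are the bookkeeping in the Vandermonde step — the sum runs to $m-1$ rather than $m$, and one must check that the discarded term is nonnegative so that the inequality points the right way — and verifying at the outset that the hypothesis $k+2 > 2^i$ is exactly what keeps the denominator $\binom{2k}{m-1}$ from vanishing.
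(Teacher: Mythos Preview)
Your proof is correct and essentially identical to the paper's: setting $m=2^i$ versus the paper's $r=2^i-1$ is a purely notational difference, and otherwise both arguments apply the extraction identity to absorb the $\frac{1}{m-j}$ factor, bound the truncated sum by the full Vandermonde convolution $\binom{2k+1}{m}$, and simplify the resulting ratio to $\frac{2k+1}{(k+1)m}<\frac{2}{m}$. Your added remark that the hypothesis $k+2>2^i$ is exactly what keeps $\binom{2k}{m-1}$ nonzero is a nice clarification the paper leaves implicit.
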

\begin{proof}
	Let $i, k \in \mathbb{N}_0$ such that $k + 2 > 2^i$.
	Let $r := 2^i - 1$. This simplifies the equation we need to show to
	\begin{equation*}
	\sum_{j = 0}^{r} \frac{\binom{k}{j}\binom{k}{r - j}}{\binom{2k}{r}} \frac{1}{r + 1 - j} \le \frac{2}{r + 1}\text.
	\end{equation*}

	We first apply the extraction identity and pull the fraction out of the sum.
	Then, we extend the sum with an additional term in preparation for the following step.
	We have
	\begin{equation*}
		\sum_{j = 0}^{r} \frac{\binom{k}{j}\binom{k}{r - j}}{\binom{2k}{r}} \frac{1}{r + 1 - j} = \frac{1}{\binom{2k}{r}}\sum_{j = 0}^r \binom{k}{j}\binom{k + 1}{r - j + 1} \frac{1}{k + 1} \le \frac{1}{(k + 1)\binom{2k}{r}}\sum_{j = 0}^{r + 1} \binom{k}{j}\binom{k + 1}{r - j + 1}\text.
	\end{equation*}
	
	Now, we use the Vandermonde convolution to get rid of the sum.
	We use \Cref{eq:chu_vandermonde} for the case where  we identify $r$ with $k$, $s$ with $k + 1$, and $n$ with $r + 1$, i.e., $\sum_{j = 0}^{r+1} \binom{k}{j}\binom{k + 1}{r - j + 1} = \binom{2k + 1}{r+1}$. 
	\begin{equation*}
		\frac{1}{(k + 1)\binom{2k}{r}}\sum_{j = 0}^{r + 1} \binom{k}{j}\binom{k + 1}{r - j + 1} = \frac{1}{(k + 1)\binom{2k}{r}}\binom{2k + 1}{r + 1}
	\end{equation*}
	
	Finally, the lemma follows from simplifying the term, i.e.,
	\begin{equation*}
		\frac{\binom{2k + 1}{r + 1}}{(k + 1)\binom{2k}{r}} = \frac{1}{k + 1}\frac{(2k + 1)!}{(r + 1)!(2k - r)!}\frac{r!(2k - r)!}{2k!} = \frac{2k + 1}{(k + 1)(r + 1)} \le \frac{2k + 2}{(k + 1)(r + 1)} = \frac{2}{r + 1} = \frac{2}{2^i}\text.\qedhere
	\end{equation*}
	
\end{proof}

\end{document}